\theoremstyle{plain}
\newtheorem{thm}{Theorem}
\newtheorem{lem}[thm]{Lemma}
\newtheorem{con}[thm]{Conjecture}
\newtheorem{cor}[thm]{Corollary}
\newtheorem{remark}[thm]{Remark}
\newtheorem{defn}[thm]{Definition}
\newtheorem{ex}[thm]{Example}
\newtheorem{result}[thm]{Result}
\newtheorem{problem}[thm]{Problem}
\newcommand{\dd}{{\rm d}}
\newcommand{\sgn}{{\rm sgn}}
\newcommand{\sE}{{\mathcal E}}
\newcommand{\sM}{{\mathcal M}}
\newcommand{\sV}{{\mathcal V}}
\newcommand{\FF}{{\mathbb F}}
\newcommand{\QQ}{{\mathbb Q}}
\newcommand{\ZZ}{{\mathbb Z}}
\renewcommand\[{\begin{equation}}
\renewcommand\]{\end{equation}}
\newcommand\A{{\bf A}}
\renewcommand\P{{\bf P}}
\renewcommand\tilde{\widetilde}
\renewcommand\phi{\varphi}
\renewcommand\epsilon{\varepsilon}
\renewcommand\theta{\vartheta}
\renewcommand\rho{\varrho}
\newcommand\fL{{\mathfrak L}}
\newcommand\fX{{\mathfrak X}}
\newcommand\fY{{\mathfrak Y}}
\newcommand\cL{{\mathcal L}}
\newcommand\cO{{\mathcal O}}
\newcommand\Fq{{\overline{f}}}
\newcommand\Xq{{\overline{X}}}
\newcommand\into{\hookrightarrow}
\newcommand\auf{\twoheadrightarrow}
\title{Geometries in perturbative quantum field theory}
\author{Oliver Schnetz}
\address{Friedrich Knop\\
Department Mathematik\\
Cauerstra{\ss}e 11\\
91058 Erlangen, Germany}
\email{knop@mi.uni-erlangen.de}
\address{Oliver Schnetz\\
Department Mathematik\\
Cauerstra{\ss}e 11\\
91058 Erlangen, Germany}
\email{schnetz@mi.uni-erlangen.de}
\begin{document}

\begin{abstract}
In perturbative quantum field theory one encounters certain, very specific geometries over the integers.
These perturbative quantum geometries determine the number contents of the amplitude considered.
In the article `Modular forms in quantum field theory' F. Brown and the author report on a first list of perturbative quantum geometries
using the $c_2$-invariant in $\phi^4$ theory. A main tool was denominator reduction which allowed
the authors to examine graphs up to loop order (first Betti number) 10.

We introduce an improved quadratic denominator reduction which makes it possible to extend the previous results to loop order 11
(and partially orders 12 and 13). For comparison, also non-$\phi^4$ graphs are investigated. Here, we extend the results from loop order 9 to 10.
The new database of 4801 unique $c_2$-invariants (previously 157)---while being consistent with all major $c_2$-conjectures---leads to a more refined picture
of perturbative quantum geometries.

In the appendix, Friedrich Knop proves a Chevalley-Warning-Ax theorem for double covers of affine space.
\end{abstract}
\maketitle

\section{Introduction}
For any connected graph $G$ the graph polynomial is defined by associating a variable $\alpha_e$ to every edge $e$ and setting
\begin{equation}\label{1}
\Psi_G(\alpha)=\sum_{T\,\rm span.\,tree}\;\prod_{e\not \in T}\alpha_e,
\end{equation}
where the sum is over all spanning trees $T$ of $G$. These polynomials first appeared in Kirchhoff's work on currents in electrical networks \cite{KIR}.
The graph polynomial is linked by a matrix tree theorem to the determinant of the graph Laplacian (which we use in Definition \ref{defpsi}).
We write $\sE(G)$ and $\sV(G)$ for the set of edges and vertices in $G$, respectively. The loop order of $G$ is its first Betti number $h_1(G)$.

The degree of a vertex in $G$ is the number of (half-)edges incident to $v$. We say that
\begin{equation}
G\hbox{ is in }\phi^4\hbox{ (theory) if }\deg(v) \leq 4\hbox{ for all vertices } v\in\sV(G).
\end{equation}
We also use the term valence for the degree. A graph that is not in $\phi^4$ is a non-$\phi^4$ graph.

The arithmetic contents of perturbative $\phi^4$ (quantum field) theory is given by integrals over rational functions whose denominators are the squares of graph polynomials
\cite{Scensus},
\begin{equation}\label{2}
P(G)=\int_0^\infty\cdots\int_0^\infty\frac{{\rm d}\alpha_1\cdots{\rm d}\alpha_{|\sE(G)|-1}}{\Psi_G(\alpha)^2|_{\alpha_{|\sE(G)|}=1}}, 
\end{equation}
whenever the integral exists. A graph with existing period is called primitive (see Definition \ref{defprim} for a graph theoretical description).
In $\phi^4$ theory the period $P$ is a renormalization scheme independent contribution to the $\beta$-function \cite{IZ}. Interestingly, also the quantum electrodynamical
contribution to the anomalous magnetic moment of the electron shows closely related arithmetic contents \cite{Sg2}.

Since $(\ref{1})$ is defined over the integers, it defines an affine scheme of finite type over $\mathrm{Spec}\,\ZZ$ which is called the graph hypersurface $X_G\subset\A^{|\sE(G)|}$.
For any field $k$, we can therefore consider the zero locus $X_G(k)$ of $\Psi_G$ in $k^{|\sE(G)|}$. If the ground field $k\cong\FF_q$ is finite, we have the point-counting function
\begin{equation}\label{2a}
[X_G]_q:q\mapsto|X_G(\FF_q)|.
\end{equation}
It defines a map from prime powers to non-negative integers. Inspired by the appearance of multiple zeta values in the period integral,
Kontsevich informally conjectured in 1997 that the function $[X_G]$ might be polynomial in $q$ for all graphs \cite{Kconj}.

Although the conjecture is false in general \cite{BB,SFq,Dexample}, a connection between the point-counting function and the period $(\ref{2})$ remains valid.
Certain information about the period is indeed detected by a small piece of the point-counting function, called the $c_2$-invariant, see \cite{SFq,K3,BD}.
For every graph $G$ with at least three vertices $[X_G]_q$ is divisible by $q^2$ (Theorem 2.9 in \cite{SFq}). For these graphs we can define
\begin{equation}\label{c2def}
c_2^{(q)}(G)\equiv\frac{[X_G]_q}{q^2} \mod q.
\end{equation}
For a fixed graph $G$ the $c_2$ maps prime powers $q$ to residues modulo $q$.

In the case when $[X_G]_q$ is a polynomial in $q$, the $c_2$-invariant is the reduction modulo $q$ of the coefficient of $q^2$ in this polynomial.
The connection between the period and the $c_2$-invariant is further borne out by the following conjecture, which holds in all presently known examples.
\begin{con}[Remark 2.11 (2) in \cite{SFq}, Conjecture 2 in \cite{BSmod}]\label{con2}
If $P(G_1)=P(G_2)$ for primitive graphs $G_1$, $G_2$, then $c_2^{(q)}(G_1)\equiv c_2^{(q)}(G_2)\mod q$ for all prime powers $q$.
\end{con}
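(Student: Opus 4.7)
The plan is to decompose the equivalence relation ``$P(G_1)=P(G_2)$'' on primitive graphs into a generating set of combinatorial moves, and to verify $c_2$-invariance for each move separately. I would first collect the currently known period-preserving operations: completion (two graphs obtained from the same $4$-regular completion $\widetilde G$ by deleting distinct vertices share a period), planar duality of a completion, the Fourier-split identity, the twist identity at a $4$-edge cut, and the various ad hoc identities catalogued in the small-graph census (double-triangle reduction, the Panzer--Schnetz ``new symmetries'', etc.).

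For each such operation $G_1\leadsto G_2$ I would establish the stronger statement $[X_{G_1}]_q \equiv [X_{G_2}]_q \bmod q^3$, which by definition of the $c_2$-invariant immediately yields $c_2^{(q)}(G_1) \equiv c_2^{(q)}(G_2) \bmod q$. For completion this rests on the identity $\Psi_{\widetilde G\setminus v}(\alpha) = \Psi_{\widetilde G}(\alpha)|_{\alpha_{e_v}=0}$ combined with a controlled analysis of the point count ``at infinity'' of $X_{\widetilde G}$, showing that the first three coefficients of $[X_{\widetilde G\setminus v}]_q$ depend only on $\widetilde G$. For twist and Fourier one exhibits an explicit linear substitution of Schwinger variables realizing $X_{G_1}$ and $X_{G_2}$ as birational over $\ZZ$, and absorbs the exceptional contribution into $q^3\ZZ$. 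Planar duality is handled through the matroid-dual form of the matrix--tree theorem.

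Assembling these invariances proves the conjecture whenever $P(G_1)=P(G_2)$ is witnessed by a sequence of the listed moves. The main obstacle---and the essential unresolved step---is the completeness claim: every period equality between primitive graphs must be generated by the listed moves. This is itself a major open problem, at least as difficult as the present conjecture, since even classifying period relations numerically is beyond reach. A motivic route, in which $P(G)$ and $c_2^{(q)}(G)$ would appear as de Rham and \'etale realizations of a motive functorially attached to the period rather than to $G$, would bypass the classification; but no canonical ``motive of a period'' is known, and the motive of $X_G$ genuinely depends on $G$ and not just on $P(G)$. Consequently I would expect only partial progress: a uniform treatment of $c_2$-invariance under each known symmetry, combined with the numerical evidence furnished by the $4801$-graph census introduced in the present paper.
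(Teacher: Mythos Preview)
The statement is a \emph{conjecture}, and the paper offers no proof; it merely records supporting evidence (the motivic heuristic via \cite{BD} and the numerical census). So there is no ``paper's own proof'' to compare against.

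Your plan is a reasonable outline of how one might attack the conjecture, and you correctly identify the completeness of the symmetry list as a fatal obstruction. But you are too optimistic about the individual steps. In particular, $c_2$-invariance under \emph{completion}---the most basic period symmetry---is itself a well-known open problem: the paper calls it the ``completion conjecture'' and explicitly treats it as unproved (Section~\ref{sectc2}, just before Definition~\ref{primeancestor}). Your sketch for this case is not right: removing a vertex $v$ from a $4$-regular completion $\widetilde G$ deletes \emph{four} edges, not one, so there is no single variable $\alpha_{e_v}$ to set to zero, and no identity of the form $\Psi_{\widetilde G\setminus v}=\Psi_{\widetilde G}|_{\alpha_{e_v}=0}$ holds. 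Controlling the point count of $X_{\widetilde G\setminus v}$ modulo $q^3$ uniformly in $v$ is exactly the hard content of the completion conjecture, and no ``analysis of the point count at infinity'' is known to achieve it. Similar remarks apply to twist: the claimed birational equivalence over $\ZZ$ between $X_{G_1}$ and $X_{G_2}$ does not exist in general (the graph hypersurfaces are genuinely different varieties), and $c_2$-invariance under twist remains open as well. Of the moves you list, only double-triangle reduction and planar duality are actually settled (see \cite{BYc2,HSSYc2,Dc2}).

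So the honest summary is: the conjecture is open not only because the list of period identities is incomplete, but because $c_2$-invariance is unproved even for the two most classical identities (completion and twist). Your proposal would at best reprove the known partial results.
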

The conjecture is supported by \cite{BD}, where it is shown that, for a large class of graphs, the $c_2$-invariant is related to the de Rham framing on the cohomology of the graph hypersurface
given by the integrand of (\ref{2}). In this sense the $c_2$ detects geometries (motives) in quantum field theory.

Still, the $c_2$ is only a very crude approximation to the geometry of the Feynman integral (\ref{2}). For the full information one has to construct the entire motive of the
pair given by the domain of integration and the differential form. The importance of the $c_2$ relies on the evidence that in many setups it is able to detect the most complicated
part of the motive \cite{BD}. In this article we exclusively look at the $c_2$ and reduce the term (perturbative) quantum geometry to the section of the geometry which is seen by the $c_2$.

It is often practical to only consider primes instead of prime powers. We conjecture that no information is lost by this reduction.
\begin{con}\label{conqp}
Let $G_1$ and $G_2$ be two graphs with $c_2^{(p)}(G_1)\equiv c_2^{(p)}(G_2)\mod p$ for all primes $p$. Then $c_2^{(q)}(G_1)\equiv c_2^{(q)}(G_2)\mod q$ for all prime powers $q=p^n$.
\end{con}
In particular, the identification of $c_2$-invariants via modular forms in Definition \ref{defmod} uses only primes. Point-counts over finite non-prime fields ($q=p^n,\,n\geq2$) are
significantly more time consuming than over fields $\FF_p$ where simple integer arithmetic can be used. Still, these point counts are possible for small prime powers, see \cite{Stem}, and the
above conjecture could be tested. We are not aware of any progress in this direction beyond \cite{SFq}, so that the above conjecture has only thin experimental support.

At low loop orders there exist only few, rather trivial $c_2$-invariants (notably $-1$ and 0). At higher loops the picture becomes much more diverse.
The graph polynomial that defines the $c_2$ grows rapidly in size with the loop order. Therefore it is desirable to develop tools beyond point-counting $X_G$ to determine the $c_2$ at
least for small primes.

In \cite{K3} it was shown that the $c_2$ can be obtained by point-counting the product of two smaller {\em Dodgson} polynomials with no division by $q^2$, see
Definition \ref{defpsi} and Equation (\ref{c2dod}). But simplification does not stop there. If the $c_2$ is given by point-counting a product of two polynomials, each of which is
linear in an (arbitrarily chosen) variable $\alpha$, one may take resultants with respect to $\alpha$,
$$
[(A\alpha+B)(C\alpha+D)]_q\equiv -[AD-BC]_q\mod q.
$$
The above equation holds if the total degree of $(A\alpha+B)(C\alpha+D)$ does not exceed the number of variables. In Theorem \ref{denredthm} this is ensured by
the condition $2h_1(G)\leq|\sE(G)|$. This tool (denominator reduction) can be iterated to quite efficiently reduce the polynomials that need to be counted.
It enabled the authors of \cite{BSmod} to perform an exhaustive empirical search for $c_2$s that come from primitive graphs with at most 10 loops.

Some $c_2$s are linked to the Fourier coefficients of certain modular forms (see \cite{K3,BSmod}). A striking result was the
many gaps in the geometries found, see Table \ref{tab1}. In particular, there was no modular form of weight 2 and level $\leq1200$ which matched the $c_2$ of any primitive graph of $\leq10$ loops
in $\phi^4$ theory (Conjecture 26 in \cite{BSmod}, see Conjecture \ref{nowt2}). This `no curves conjecture' does not eliminate curves from perturbative quantum field theory. It refers
to the situation that a Feynman integral (of the type considered in this article) has a representation as an elliptic integral in one variable with a numerator given by hyperlogarithms.
This means that after a series of hyperlogarithmic integrations only the last integral is elliptic. In quantum field theory it is possible that elliptic integrals are iterated.
One prominent example is the K3 surface in \cite{K3} (corresponding to the weight 3 level 7 modular form) which is the symmetric square of an elliptic curve.
An example of a different type is the Feynman integral of the massive sunrise diagram (see e.g.\ \cite{BVsunrise} and the references therein).

The empirical evidence in \cite{BSmod} was based on only 157 unique $c_2$s. Higher loop orders, however, could not be examined because even after denominator reduction
the point-counting was too time-consuming.

To overcome this difficulty, we prove a more powerful version of denominator reduction for odd prime powers.
We express any point-count as a sum over Legendre symbols (Definition \ref{deflege}). For $a\in\FF_q$ we define
$$
\left(\frac{a}{q}\right)=\left\{\begin{array}{cl}1&\hbox{, if }a\neq0\hbox{ is a square in }\FF_q,\\
0&\hbox{, if }a=0,\\
-1&\hbox{, otherwise.}\end{array}\right.
$$
For any polynomial $F$ in a positive number $N$ of variables we have
$$
-[F]_q\equiv(F^2)_q:=\sum_{\alpha\in\FF_q^N}\left(\frac{F(\alpha)^2}{q}\right)\mod q.
$$
A Legendre sum $(X)_q$ translates to a denominator $\sqrt{X}$ in (\ref{2}), see Section \ref{sectdr}.
Integration over square roots suggests a more powerful {\em quadratic} denominator reduction with two cases (see Definition \ref{defqdr} and Theorem \ref{thmquadratic}),
\begin{eqnarray*}
((A\alpha^2+B\alpha+C)^2)_q&\equiv&-(B^2-4AC)_q\mod q,\text{ or}\\
((D\alpha^2+E\alpha+F)\cdot(H\alpha+J)^2)_q&\equiv&-(DJ^2-EHJ+FH^2)_q\mod q,
\end{eqnarray*}
if the total degrees of the polynomials on the left hand sides do not exceed twice the number of their variables.
The situation where one has both cases simultaneously implies a factorization $(W\alpha+X)^2(Y\alpha+Z)^2$. In this case both reductions are equivalent
to standard denominator reduction. As examples, $(W\alpha+X)^3(Y\alpha+Z)$ reduces by the second case to zero whereas $(U\alpha^2+V\alpha+W)(X\alpha^2+Y\alpha+Z)$
or $(U\alpha+V)(W\alpha+X)(Y\alpha+Z)$ do not reduce in general.

We found that quadratic denominator reduction is surprisingly efficient. An instructive example for the potential of the technique are reductions of infinite families of
graphs with hourglass chains in \cite{Hourglass}.

In Section \ref{sectini} we prove that one can always do a minimum of nine reductions, in most cases much more.
We use the power of the new method to examine all $c_2$s of loop order 11 in $\phi^4$ and 10 in non-$\phi^4$. In $\phi^4$ at loop orders 12 and 13 we can still perform
a partial analysis which contains thousands of graphs. Altogether we can discriminate a total of 4801 $c_2$s. The main result is that the sparsity found in \cite{BSmod} is confirmed.
A possible counter-example of $\leq11$ loops to Conjecture \ref{nowt2} (absence of weight 2) has to be beyond level 2000 in the modular form, see Result \ref{result}.

Explicitly, our findings are summarized in Table \ref{tab1} where we introduced a notion of dimension for $c_2$s (see Definition \ref{defdim}).
In case of a modular correspondence the dimension of the $c_2$ is the weight of the modular form minus 1. The correspondence is proved in \cite{K3} for the
weight 3 level 7 of dimension 2. More proved results are in \cite{Lproofs} for the [weight, level]  modular forms [3, 8], [3, 12], [4, 13], [6, 7]. All data in \cite{BSmod} and
in this article are conjectural and proved only up to the primes indicated as superscripts in Table \ref{tab1}.

Note that most $c_2$s are not in Table \ref{tab1} because they could not be identified as constants, Legendre symbols, or Fourier coefficients of modular forms.
Concretely, we conjecture that unidentified sequences begin at $c_2$-dimension 4 in $\phi^4$ and $c_2$-dimension 3 in non-$\phi^4$ (see Conjecture \ref{condim23}).

In Sections \ref{sectdp} to \ref{sectdr} we basically review known results for Dodgson polynomials, the $c_2$, and standard denominator reduction.
To prepare the upcoming new material, we develop a new approach to signs for Dodgson polynomials in Section \ref{sectdp}.

In Sections \ref{sectls} to \ref{sectpf} we define and prove quadratic denominator reduction.

Section \ref{sectini} provides results for initial reductions.

In Sections \ref{sectaffred} and \ref{sectres} we focus on the calculation of prefixes (initial prime sequences) of $c_2$s up to loop order 13.

We conclude the main part of the article in Section \ref{sectconj} with a (somewhat personal) account on conjectures, speculations, and open problems related to the $c_2$.
The main interest is the identification of perturbative geometries. Regretfully, we do not even conjecturally have a description of perturbative geometries in general.
Still, we speculate that $\phi^4$ geometries have (at least) two ingredients. One is an analytical property: We expect that any perturbative geometry ($\phi^4$ or not) is the intersection
of two minimal polynomials of Dodgson type, see Definition \ref{defdodpair} and Conjecture \ref{condodpair}.
For dimension zero this restricts to quadratic extensions of $\FF_q$ with discriminant $\not\equiv3\mod4$ while in the one-dimensional case only (some) elliptic curves are allowed.
The second (unknown) property seems to be of arithmetic type: At zero dimensions it relates to cylcotomic extensions and at dimension one it should rule out {\em all} curves.
The arithmetic property only seems to exist in $\phi^4$ which genuinely makes it a quantum condition.
In the author's personal view, specifying and understanding this arithmetic property is the main problem in the field, see Problem \ref{probgeo}.

In the appendix Friedrich Knop proves an extension of the Chevalley-Warning-Ax theorem \cite{Ax,Sa} to double covers of affine space. The appendix is of different style as it
uses deep results in algebraic geometry. The result of the appendix is vital to fully lift the results of the main sections from primes to prime powers.

All computations were performed using \cite{Shlog} on spare office PCs at the Department Mathematik, University Erlangen-N\"urnberg, Germany.

\section*{Acknowledgements}
This work was driven by the ongoing interest of K. Yeats into the $c_2$-invariant. Her student J. Shaw first took on the challenge of 11 loop calculations with some
very valuable results. Many thanks also to Matthias Bauer and Martin Bayer who provided all the necessary computing resources. The author is grateful to the referees who
made many very valuable suggestions. He is also indebted to F. Knop for providing the appendix. The author is supported by the DFG grant SCHN~1240.

\setlength{\tabcolsep}{2pt}
\renewcommand{\arraystretch}{1.45}

\begin{table}
\begin{center}
\begin{tabular}{l|llllllllll}
0\hspace*{7mm}
 &1\hspace*{7mm}
  &2\hspace*{7mm}
   &3\hspace*{7mm}
    &\hspace*{7mm}
     &4\hspace*{7mm}
      &5\hspace*{7mm}
       &6\hspace*{7mm}
        &7\hspace*{7mm}
         &8\hspace*{7mm}
          &9\\
$(\frac{x}{q})$&\multicolumn{10}{l}{level $x$ modular form of weight $=$ dimension $+$ 1}\\\hline
$\setlength{\fboxrule}{0.5mm}\fbox{0}_{^{\scriptstyle 6}}$
 &$\setlength{\fboxrule}{0.1mm}\fbox{11}_{^{\scriptstyle 9}}^{_{\scriptstyle 97}}$
  &$\setlength{\fboxrule}{0.5mm}\fbox{7}_{^{\scriptstyle 8}}$
   &$\setlength{\fboxrule}{0.5mm}\fbox{5}_{^{\scriptstyle 8}}^{_{\scriptstyle 97}}$
    &&$\setlength{\fboxrule}{0.5mm}\fbox{4}_{^{\scriptstyle 9}}^{_{\scriptstyle 97}}$
      &$\setlength{\fboxrule}{0.5mm}\fbox{3}_{^{\scriptstyle 8}}^{_{\scriptstyle 97}}$
       &$\setlength{\fboxrule}{0.5mm}\fbox{3}_{^{\scriptstyle 9}}^{_{\scriptstyle 97}}$
        &$\setlength{\fboxrule}{0.5mm}\fbox{2}_{^{\scriptstyle 10}}^{_{\scriptstyle 47}}$
         &$\setlength{\fboxrule}{0.5mm}\fbox{4}_{^{\scriptstyle\leq23}}^{_{\scriptstyle 41}}$
          &2\\
$\setlength{\fboxrule}{0.5mm}\fbox{1}_{^{\scriptstyle 3}}$
 &$\setlength{\fboxrule}{0.1mm}\fbox{14}_{^{\scriptstyle 9}}^{_{\scriptstyle 97}}$
  &$\setlength{\fboxrule}{0.5mm}\fbox{8}_{^{\scriptstyle 8}}$
   &$\setlength{\fboxrule}{0.5mm}\fbox{6}_{^{\scriptstyle 9}}^{_{\scriptstyle 97}}$
    &$\setlength{\fboxrule}{0.1mm}\fbox{19}_{^{\scriptstyle 10}}^{_{\scriptstyle 97}}$
     &7
      &$\setlength{\fboxrule}{0.5mm}\fbox{4}_{^{\scriptstyle 9}}^{_{\scriptstyle 97}}$
       &7
        &3
         &7
          &$\setlength{\fboxrule}{0.5mm}\fbox{3}_{^{\scriptstyle 11}}^{_{\scriptstyle 41}}$\\
$-1$
 &$\setlength{\fboxrule}{0.1mm}\fbox{15}_{^{\scriptstyle 9}}^{_{\scriptstyle 97}}$
  &$\setlength{\fboxrule}{0.1mm}\fbox{11}_{^{\scriptstyle 9}}^{_{\scriptstyle 97}}$
   &$\setlength{\fboxrule}{0.5mm}\fbox{7}_{^{\scriptstyle 10}}^{_{\scriptstyle 47}}$
    &20
     &$\setlength{\fboxrule}{0.5mm}\fbox{8}_{^{\scriptstyle 12}}^{_{\scriptstyle 97}}$
      &5
       &8
        &$\setlength{\fboxrule}{0.5mm}\fbox{5}_{^{\scriptstyle 10}}^{_{\scriptstyle 97}}$
         &8
          &3\\
$\setlength{\fboxrule}{0.1mm}\fbox{2}_{^{\scriptstyle 10}}^{_{\scriptstyle 47}}$
 &$\setlength{\fboxrule}{0.1mm}\fbox{17}_{^{\scriptstyle 10}}^{_{\scriptstyle 97}}$
  &$\setlength{\fboxrule}{0.5mm}\fbox{12}_{^{\scriptstyle 9}}$
   &$\setlength{\fboxrule}{0.5mm}\fbox{8}_{^{\scriptstyle 11}}^{_{\scriptstyle 97}}$
    &$\setlength{\fboxrule}{0.5mm}\fbox{21}_{^{\scriptstyle 11}}^{_{\scriptstyle 97}}$
     &11
      &6
       &11
        &6
         &11
          &4\\
$-2$
 &19
  &$\setlength{\fboxrule}{0.1mm}\fbox{15}_{^{\scriptstyle 10}}^{_{\scriptstyle 97}}$
   &9
    &21
     &12
      &$\setlength{\fboxrule}{0.5mm}\fbox{7}_{^{\scriptstyle 9}}$
       &15
        &7
         &12
          &5\\
3
 &20
  &15
   &$\setlength{\fboxrule}{0.5mm}\fbox{10}_{^{\scriptstyle 11}}^{_{\scriptstyle 97}}$
    &$\setlength{\fboxrule}{0.1mm}\fbox{22}_{^{\scriptstyle 10}}^{_{\scriptstyle 97}}$
     &15
      &8
       &15
        &8
         &15
          &6\\
$\setlength{\fboxrule}{0.5mm}\fbox{-3}_{^{\scriptstyle 7}}$
 &21
  &16
   &$\setlength{\fboxrule}{0.5mm}\fbox{12}_{^{\scriptstyle 12}}^{_{\scriptstyle 97}}$
    &22
     &15
      &9
       &16
        &8
         &15
          &8\\
$\setlength{\fboxrule}{0.5mm}\fbox{4}_{^{\scriptstyle 7}}$
 &24
  &19
   &$\setlength{\fboxrule}{0.5mm}\fbox{13}_{^{\scriptstyle 9}}$
    &22
     &19
      &$\setlength{\fboxrule}{0.5mm}\fbox{10}_{^{\scriptstyle 10}}^{_{\scriptstyle 47}}$
       &19
        &9
         &19
          &8\\
$\setlength{\fboxrule}{0.5mm}\fbox{-4}_{^{\scriptstyle 8}}$
 &26
  &$\setlength{\fboxrule}{0.1mm}\fbox{20}_{^{\scriptstyle 10}}^{_{\scriptstyle 97}}$
   &14
    &$\setlength{\fboxrule}{0.1mm}\fbox{23}_{^{\scriptstyle 10}}^{_{\scriptstyle 97}}$
     &20
      &10
       &20
        &10
         &20
          &9\\
$\setlength{\fboxrule}{0.1mm}\fbox{5}_{^{\scriptstyle 9}}^{_{\scriptstyle 97}}$
 &26
  &20
   &14
    &$\vdots$
     &20
      &10
       &20
        &12
         &20
          &9\\
$-5$
 &27
  &$\setlength{\fboxrule}{0.5mm}\fbox{24}_{^{\scriptstyle 11}}^{_{\scriptstyle 97}}$
   &15
    &$\setlength{\fboxrule}{0.1mm}\fbox{32}_{^{\scriptstyle 10}}^{_{\scriptstyle 97}}$
     &24
      &$\setlength{\fboxrule}{0.5mm}\fbox{11}_{^{\scriptstyle 11}}^{_{\scriptstyle 47}}$
       &23
        &12
         &24
          &9\\
$\vdots$
 &$\vdots$
  &$\vdots$
   &15
    &$\vdots$
     &$\vdots$
      &$\vdots$
       &$\vdots$
        &$\vdots$
         &$\vdots$
          &$\vdots$\\
$\setlength{\fboxrule}{0.5mm}\fbox{9}_{^{\scriptstyle 13}}$
 &$\setlength{\fboxrule}{0.1mm}\fbox{36}_{^{\scriptstyle 10}}^{_{\scriptstyle 97}}$
  &&$\setlength{\fboxrule}{0.5mm}\fbox{16}_{^{\scriptstyle 12}}^{_{\scriptstyle 31}}$
    &$\setlength{\fboxrule}{0.1mm}\fbox{56}_{^{\scriptstyle 10}}^{_{\scriptstyle 47}}$
     &&$\setlength{\fboxrule}{0.1mm}\fbox{17}_{^{\scriptstyle 10}}^{_{\scriptstyle 47}}$
       &&&&\\
$\vdots$
 &$\setlength{\fboxrule}{0.1mm}\fbox{37}_{^{\scriptstyle 10}}^{_{\scriptstyle 97}}$
  &&$\setlength{\fboxrule}{0.5mm}\fbox{17}_{^{\scriptstyle 10}}^{_{\scriptstyle 97}}$
    &$\vdots$
     &&$\vdots$
       &&&&\\
$\setlength{\fboxrule}{0.5mm}\fbox{-12}_{^{\scriptstyle 12}}$
 &$\vdots$
  &&18
    &$\setlength{\fboxrule}{0.5mm}\fbox{73}_{^{\scriptstyle 11}}^{_{\scriptstyle 97}}$
     &&&&&&\\
$\vdots$
 &&&&$\vdots$
     &&&&&&
\end{tabular}
\end{center}
\caption{\small Identified $c_2$-invariants of various dimensions (first row). At dimension zero $-c_2$ is a Legendre symbol $(x/q)\mod q$ for various values of $x$, see Definition \ref{deflege}.
For dimensions $\geq1$ the $c_2$-invariants are identified via modularity with respect to newforms with integer Fourier coefficients (see Definition \ref{defmod}).
The weight of the newform is the dimension of the $c_2$ plus 1. In the columns the levels $x$ of identified newforms are boxed.
The lower index of an identified $c_2$-invariant gives the loop order of its first appearance in $\phi^4$ theory (bold box) or in non-$\phi^4$ theory (thin box).
The upper index indicates the maximum prime up to which the identification has been confirmed (not necessarily at the loop order of its first appearance). Identifications with no superscript
have been proved for all primes, either here (at dimension zero for $q=2$ and odd prime powers) or in \cite{K3,Lproofs}.
The newform of level 4 at dimension 8 was identified in \cite{Hourglass} with an upper bound $\leq23$ for the loop order.
The table is possibly incomplete for $\geq12$ loops in $\phi^4$. It does not include any non-$\phi^4$ $c_2$s of graphs with $\geq11$ loops.}
\label{tab1}
\end{table}

\section{Dodgson polynomials}\label{sectdp}
Calculating the $c_2$-invariant relies on point-counting Dodgson polynomials over finite fields $\FF_q$.
Because at high loop orders these polynomials are big, this task seems not efficient for large primes. However, the (very special) Dodgson polynomials fulfill a
plethora of identities which one can utilize to reduce the point-count to smaller polynomials.

In previous work \cite{K3,BSYc2,Dc2,Yc2,Yc2pre}, even at the last reduction step, it was sufficient to count the zeros of polynomials.
The overall signs did not matter and relations were often only derived up to sign. More recently \cite{HSSYc2}, point-counts of sums of Dodgson polynomials were considered,
so that the signs of the individual terms mattered. Here, we use a refined reduction where even for single polynomials the overall sign is important.
So, all necessary relations are re-derived with a revised sign convention for Dodgson polynomials. We do this solely in the framework of determinant relations.

The setup is the following: Let $G$ be a (multi-)graph which may have several components, multiple edges, and self-loops (i.e.\ edges which begin and end in the same vertex).
The graph $G$ has the edge set $\sE(G)$ and the vertex set $\sV(G)$. We pick an arbitrary orientation on the edges of $G$ and write edges as two letter words of vertices.
So, $uv$ is an edge that begins in $u$ and ends in $v$. A self-loop is $vv$. Note that in $G$ may exist several edges $uv$ or $vv$.

We order the edges and the vertices of $G$ in some arbitrary way
$$
\iota:\sE(G)\cup\sV(G)\rightarrow\{1,\ldots,|\sE(G)|+|\sV(G)|\}
$$
so that $\iota_i$ is the position of the edge or vertex $i$ in the following (full, symmetric) expanded Laplacian:
$$
L(G)_{i,j} = \left\{\begin{array}{rl}
\alpha_e, &\hbox{if }i=j=\iota_e\hbox{ for } e\in\sE(G),\\
       1, &\hbox{if }\{i,j\}=\{\iota_{uv},\iota_u\}\hbox{ for } u\neq v\in\sV(G),\\
      -1, &\hbox{if }\{i,j\}=\{\iota_{vu},\iota_u\}\hbox{ for } u\neq v\in\sV(G),\\
       0, &\hbox{otherwise}.
\end{array}\right.
$$
To define $L(G)$ we introduced variables $\alpha_e$ for every edge $e$ in $G$. In most equations we suppress the dependence on the variables $\alpha$.
The symmetric matrix $L(G)$ depends on the orientation of the edges and the choice of $\iota$. In the following we use edge and vertex labels ($e$, $u$ rather than the position $\iota$)
to refer to rows and columns in $L(G)$.

If one orders edges before vertices, $L(G)$ can be expressed in terms of a diagonal matrix $A$ carrying the variables $\alpha$ and the signed incidence matrix $E_G$.
\begin{equation}\label{Lincidence}
L(G)=\left(\begin{array}{c|c}
  A & E_G^T\\\hline
E_G & 0\end{array}\right)\quad\text{with}\quad (E_G)_{v,e}=\left\{\begin{array}{rl}
       1, &\hbox{if }e=vu,\;u\in\sV(G),\\
      -1, &\hbox{if }e=uv,\;u\in\sV(G),\\
       0, &\hbox{otherwise}.
\end{array}\right.
\end{equation}

\begin{ex}\label{ex1}
For a pair of self-loops (A), a double edge (B), a chain of two edges (C) and two disconnected edges (D) we obtain the following expanded graph Laplacians (for suitable
orientations and orders of edges and vertices).
\begin{eqnarray*}
L(A)=\left(\begin{array}{ccc}
 \alpha_1&0&0\\
 0&\alpha_2&0\\
 0&0&0\end{array}\right),\hspace{10mm}&&
L(B)=\left(\begin{array}{cccc}
 \alpha_1&0&1&-1\\
 0&\alpha_2&1&-1\\
 1&1&0&0\\
 -1&-1&0&0\end{array}\right),\\
L(C)\!=\!\left(\begin{array}{ccccc}
 \alpha_1&0&1&-1&0\\
 0&\alpha_2&0&1&-1\\
 1&0&0&0&0\\
 -1&1&0&0&0\\
 0&-1&0&0&0\end{array}\right)\!,&&
\hspace{-5mm}L(D)\!=\!\left(\begin{array}{cccccc}
 \alpha_1&0&1&-1&0&0\\
 0&\alpha_2&0&0&1&-1\\
 1&0&0&0&0&0\\
 -1&0&0&0&0&0\\
 0&1&0&0&0&0\\
 0&-1&0&0&0&0\end{array}\right)\!.
\end{eqnarray*}
\end{ex}
For $A,B\subseteq\sE(G)\cup\sV(G)$ and $K\subseteq\sE(G)$ we define the reduced matrix
$$
L^{A,B}_K(G)=\left.L^{A,B}(G)\right|_{\alpha_k=0,\,k\in K}\,,
$$
where $L^{A,B}$ is $L$ with rows $A$ and columns $B$ removed.

For our sign convention we lift the upper indices to words, i.e.\ we keep track of the order of elements (letters) in $A$ and $B$. In case of repeated letters we will define
that Dodgsons vanish. For any words we define a sign relative to $\iota$.

\begin{defn}
Let $A=a_1a_2\cdots a_k$ be a word in letters $a_i$ ordered by $\iota$. Then
$$
\sgn(A)=\left\{\begin{array}{cl}
        0 &\hbox{, if $A$ has repeated letters},\\
\sgn(\pi) &\hbox{, if $\iota_{a_{\pi(1)}}<\iota_{a_{\pi(2)}}<\cdots<\iota_{a_{\pi(k)}}$}.
\end{array}\right.
$$
\end{defn}

\begin{ex}
If 1,2 are in natural order then $\sgn(\emptyset)=\sgn(1)=\sgn(2)=\sgn(12)=1$, $\sgn(21)=-1$, $\sgn(11)=\sgn(22)=0$.
\end{ex}
If we define
$$
A_{<x}=\{a\in A:\iota_a<\iota_x\},
$$
we obtain
\begin{equation}\label{Ax}
\sgn(Ax)=(-1)^{|A|-|A_{<x}|}\sgn(A),\quad\hbox{if }x\notin A.
\end{equation}
Permutations of $A$ act on both sides of this equation with the same sign, so we may assume that $A$ is ordered according to $\iota$. To order $Ax$ we need $|A|-|A_{<x}|$ transpositions.

\begin{remark}\label{setrem}
It is convenient to extend the set theory symbols $\in,\cap,\cup,\backslash$ to words. Whenever we use these symbols on a word we consider the word as the set of its letters.
For any words $A$ and $B$ we e.g.\ write $x\in A$ if $x$ is a letter in $A$ or $A\cap B$ for the set of common letters in $A$ and $B$. In the context of Dodgsons $\Psi^{I,J}_K$
(defined below) where $I$ and $J$ are words and $K$ is a set we opt for the most compact notation and frequently concatenate words before using set symbols.
We e.g.\ write $IJ\cup K$ or $K\backslash IJ$ instead of $I\cup J\cup K$ or $K\backslash(I\cup J)$, respectively. We also use the notation
$$
\iota_A=\sum_{x\in A}\iota_x
$$
for any word (or set) A.
\end{remark}

\begin{defn}\label{defpsi}
Let $G$ be a graph and $v\in\sV(G)$. Let $I,J$ be words of equal length in the letters $\sE(G)$ and let $K\subseteq\sE(G)$. Then the {\em Dodgson} is defined as
\begin{equation}\label{psidef}
\Psi^{I,J}_K(G)=(-1)^{|\sV(G)|+\iota_I+\iota_J-1}\sgn(Iv)\,\sgn(Jv)\det L^{Iv,Jv}_K(G).
\end{equation}
The determinant of the empty matrix is 1. For convenience we define $\Psi^{I,J}_K(G)=0$ if $|I|\neq|J|$.
We omit empty indices; the Dodgson $\Psi(G)$ is the graph polynomial (\ref{1}).
\end{defn}
If we need explicit reference to the variables $\alpha$ we may drop the argument $G$ or write it as subscript.
In particular, we write $\Psi(\alpha)$ or $\Psi_G(\alpha)$ for the graph polynomial. For the connection to (\ref{1}) see \cite{Bperiods}.

\begin{ex}\label{ex2}
Continuing Example \ref{ex1} we have the following graph polynomials and Dodgsons (for any choice of $\iota$ and $v$)
$$
\begin{array}{llll}
\Psi(A)=\alpha_1\alpha_2,&\quad\Psi(B)=\alpha_1+\alpha_2,&\quad\Psi(C)=1,&\quad\Psi(D)=0,\\
\Psi^{1,1}(A)=\alpha_2,&\quad\Psi^{1,1}(B)=1,&\quad\Psi^{1,1}(C)=0,&\quad\Psi^{1,1}(D)=0,\\
\Psi^{1,2}(A)=0,&\quad\Psi^{1,2}(B)=-1,&\quad\Psi^{1,2}(C)=0,&\quad\Psi^{1,2}(D)=0.
\end{array}
$$
\end{ex}

\begin{lem}
The Dodgson polynomial $\Psi^{I,J}_K(G)$ does not depend on $\iota$ or the choice of $v$.
\end{lem}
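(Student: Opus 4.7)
The plan is to prove the two invariance statements separately---independence of the ordering $\iota$ and independence of the vertex $v$---working directly from the definition (\ref{psidef}) with the $\alpha_e$ treated as indeterminates.

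For the $\iota$-independence I reduce to a single adjacent transposition, since any two orderings are connected by a chain of such moves. Suppose $\iota'$ differs from $\iota$ by swapping the elements $a,b$ at positions $k$ and $k+1$, and track how each factor of (\ref{psidef}) responds. The inversion count $\sgn(Xv)$ flips precisely when $\{a,b\}\subseteq Xv$ (for $X=I$ or $J$), since only pairs whose relative $\iota$-order is altered by the swap can change inversion status. The prefactor $(-1)^{\sum_{i\in I}\iota_i}$ flips precisely when exactly one of $a,b$ lies in $I$, because only then does that sum shift by $\pm 1$. Finally, the minor $\det L^{Iv,Jv}_K$ acquires a factor $(-1)$ on each of ``rows'' and ``columns'' for which $\{a,b\}\cap Iv=\emptyset$ (resp.\ $\{a,b\}\cap Jv=\emptyset$), because then the two corresponding rows (resp.\ columns) of the submatrix are genuinely swapped; in every other configuration one or both of $a,b$ is deleted and the remaining order is preserved. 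A case analysis on $|\{a,b\}\cap Iv|\in\{0,1,2\}$ with the subcase distinction $v\in\{a,b\}$ vs.\ $v\notin\{a,b\}$ (using $|\{a,b\}\cap I|=|\{a,b\}\cap Iv|-\mathbb{1}[v\in\{a,b\}]$) shows that the $I$- and $J$-contributions to the total sign exponent each equal $\mathbb{1}[v\notin\{a,b\}]$, so their sum is even.

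For the $v$-independence, fix $\iota$ and compare two choices $v,v'\in\sV(G)$. The key input is that the vector $u$ equal to $1$ on vertex entries and $0$ on edge entries lies in $\ker L(G)$: by (\ref{Lincidence}) a vertex row has no vertex-column entry, and an edge row $e=u_1u_2$ contributes $\alpha_e\cdot 0+1-1=0$. Setting $\alpha_k=0$ for $k\in K$ preserves this, and since $I,J\subseteq\sE$ consist of edges on which $u$ vanishes, the truncations $u|_{\overline J}$ and $u|_{\overline I}$ lie in the right and left kernels of $L^{I,J}_K$, so the matrix has corank $\geq 1$. If the corank is $\geq 2$ then $\mathrm{adj}(L^{I,J}_K)=0$, every $\det L^{Iv,Jv}_K$ vanishes, and the identity is trivial. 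Otherwise the rank-one factorisation $\mathrm{adj}(L^{I,J}_K)=c\cdot u|_{\overline J}(u|_{\overline I})^T$ combined with the cofactor formula yields
\[
\det L^{Iv,Jv}_K=(-1)^{\mathrm{pos}_I(v)+\mathrm{pos}_J(v)}\,c,
\]
where $\mathrm{pos}_X(v)$ denotes the position of $v$ in the $\iota$-ordered complement of $X$. Since $\mathrm{pos}_X(v)\equiv\iota_v+|X_{<v}|\pmod 2$, the identity (\ref{Ax}) shows this sign cancels the $v$-dependence of $\sgn(Iv)\sgn(Jv)$, so $\sgn(Iv)\sgn(Jv)\det L^{Iv,Jv}_K$, and hence $\Psi^{I,J}_K$, does not depend on $v$.

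The hard part will be the bookkeeping in the first step: six independent sign contributions must conspire to cancel in every configuration, and the easiest pitfall is the subcase $v\in\{a,b\}$, where $v$ lies in $Iv$ (or $Jv$) but not in $I$ (or $J$), so the relationship between $|\{a,b\}\cap Iv|$ and $|\{a,b\}\cap I|$ shifts by one and must be tracked separately. The second step is conceptually clean but relies on noticing that the vertex-indicator vector remains in the kernel after setting $K$-variables to zero and restricting to the edge-indexed minor.
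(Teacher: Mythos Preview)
Your proof is correct, but both halves take a different route from the paper.

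For $\iota$-independence, the paper argues by induction on $|I|=|J|$: it treats the signed minor $\Phi^{A,B}=(-1)^{\sum\iota_i+\sum\iota_j}\sgn(A)\sgn(B)\det M^{A,B}$ of a generic matrix, observes that the base case $A=B=\emptyset$ is conjugation by a permutation, and for the inductive step picks out the coefficient of a single entry $m_{x,y}$ to peel off one row and column. Your approach instead fixes $I,J,v$ and tracks all six sign contributions under a single adjacent transposition, reducing to the observation that the $I$- and $J$-contributions each equal $\mathbb{1}[v\notin\{a,b\}]$ and hence cancel in pairs. The paper's induction is shorter to write and avoids the case split on $v\in\{a,b\}$; your transposition bookkeeping is more hands-on but entirely self-contained.

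For $v$-independence the contrast is sharper. The paper does an explicit row/column manipulation: using that the rows of $E_G$ sum to zero, it adds all surviving vertex rows to row $u$ to turn it into $-e_v$, negates, and then invokes the already-proved $\iota$-independence to absorb the resulting relabelling. Your adjugate argument is more structural: the same ``rows sum to zero'' fact becomes the statement that the vertex-indicator vector lies in both the left and right kernel of $L^{I,J}_K$, forcing $\mathrm{adj}(L^{I,J}_K)$ to be a scalar multiple of $u|_{\overline J}(u|_{\overline I})^T$, so that every $\det L^{Iv,Jv}_K$ equals $(-1)^{\mathrm{pos}_I(v)+\mathrm{pos}_J(v)}c$ for a common $c$. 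This has the advantage of not relying on the $\iota$-independence already established, and it makes transparent why only the vertex--vertex cofactors are nonzero. The paper's argument is more elementary (no rank/adjugate machinery) but leans on the first half of the lemma.
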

\begin{proof}
To prove independence of $\iota$ we may assume that neither $I$ nor $J$ have repeated letters.
Let $M=(m_{i,j})$ be a generic square matrix labeled by $\sE(G)\cup\sV(G)$. Let $A,B$ be words of equal length with distinct letters in $\sE(G)\cup\sV(G)$.
We prove by induction over $|A|=|B|$ that
\begin{equation}\label{psiAB}
\Phi^{A,B}:=(-1)^{\iota_A+\iota_B}\sgn(A)\,\sgn(B)\det M^{A,B}
\end{equation}
is independent of $\iota$.

If $A$ and $B$ are empty then a change of $\iota$ amounts to conjugating $M$ by a permutation matrix. This leaves the determinant invariant.

Now, consider the words $Ax,By$ for letters $x\notin A,y\notin B$. The coefficient of $m_{x,y}$ in $\det M^{A,B}$ is $(-1)^{\iota_x-|A_{<x}|+\iota_y-|B_{<y}|}\det M^{Ax,By}$.
We multiply the above equation with the sign on the right hand side of (\ref{psiAB}) and use induction. The result follows from (\ref{Ax}) because $|A|=|B|$.
With $A=Iv$, $B=Jv$, $M=L_K$ in (\ref{psiAB}) we see that $\Psi^{I,J}_K$ does not depend on $\iota$.

To show the independence of $v$ let $u\neq v$ be a another vertex in $G$. By independence of $\iota$ we may assume that edges are ordered before vertices and use
(\ref{Lincidence}) for the expanded Laplacian. Let $r_u$, $r_v$ be rows $u$, $v$ in $E_G$, respectively. Because the rows add up to zero in $E_G$,
adding the rows (columns) $\neq u$ in the lower left (upper right) corner of $L^{Iv,Jv}_K$ to $r_u$ ($r_u^T$) gives $-r_v$ ($-r_v^T$).
A simultaneous multiplication of row and column $u$ by $-1$ (leaving the determinant unchanged) gives $L^{Iu,Ju}_K$ up to a transposition of the vertices $u$ and $v$.
Independence of $\iota$ proves the claim.
\end{proof}

In the following we will frequently reduce graphs to minors by deleting and contracting edges.
We write $G\backslash e$ if we remove the edge $e$ from $G$ and $G/e$ if we contract the edge $e$ in $G$, i.e.\ we remove $e$ and identify the end-points of $e$.
Contraction leaves the number of components $h_0(G)$ and the number of independent cycles $h_1(G)$ unchanged. Note that contraction and deletion commute.

Only edges which are not self-loops can be contracted. This makes it convenient to pass to the free abelian group $\ZZ[\sM(G)]$ generated by the set $\sM(G)$ of minors (possibly with self-loops)
of a graph $G$. This notation implies that we use the natural identification of edges in minors with edges in $G$ to
establish a coherent set of variables in $\sM(G)$. In some contexts it is also useful to make $\ZZ[\sM(G)]$ a ring where multiplication is disjoint union.
In $\ZZ[\sM(G)]$ we identify a graph $H$ with $1\cdot H$ and define
$$
H/e=0,\quad\text{if $e$ is a self-loop in $H\in\sM(G)$.}
$$
Moreover, we extend the definition of the Dodgson $\Psi^{I,J}_K$ to $\ZZ[\sM(G)]$ by linearity. In particular, we have $\Psi^{I,J}_K(0)=0$.
\begin{remark}
Using set notations for words as explained in Remark \ref{setrem} we get the following elementary results.
\begin{enumerate}
\item $\Psi^{I,J}_K=\Psi^{J,I}_K$ by transposition.
\item The orientation of an edge $e$ affects $\Psi^{I,J}_K$ by a sign if and only if $e\in I\backslash J$ or $e\in J\backslash I$, i.e.\ $e$ is in exactly one of the two words $I$, $J$.
\item The Dodgson $\Psi^{I,J}_K$ is of degree $\leq1$ in all variables. If $e\in IJ\cup K$ then $\Psi^{I,J}_K$ is constant in $\alpha_e$.
In this case nullifying $\alpha_e$ is trivial:
\begin{equation}\label{red1}
\Psi^{I,J}_K=\Psi^{I,J}_{K\backslash IJ}.
\end{equation}
\end{enumerate}
\end{remark}

A key connection between the Dodgson polynomial and the topology of the underlying graph is given by the following contraction-deletion lemma:
\begin{lem}[contraction-deletion]\label{lemcd}
Let $I,J$ be words in $\sE(G)$ and $K\subseteq\sE(G)$. For any $e\in\sE(G)\backslash (IJ\cup K)$,
\begin{equation}\label{cd}
\Psi^{I,J}_K(G)=\alpha_e\Psi^{Ie,Je}_K(G)+\Psi^{I,J}_{Ke}(G)=\alpha_e\Psi^{I,J}_K(G\backslash e)+\Psi^{I,J}_K(G/e).
\end{equation}
\end{lem}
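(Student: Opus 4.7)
The plan is to prove each of the two equalities by reducing it to a determinant identity for the expanded Laplacian $L(G)$, with careful bookkeeping of signs dictated by Definition~\ref{defpsi}.

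For the first equality, I note that since $e\notin IJ\cup K$, the variable $\alpha_e$ appears in $L^{Iv,Jv}_K(G)$ only at the diagonal position $(e,e)$. Multilinearity of the determinant in row $e$ gives
\begin{equation*}
\det L^{Iv,Jv}_K(G)=\det L^{Iv,Jv}_{Ke}(G)+\alpha_e\cdot\epsilon\cdot\det L^{Ive,Jve}_K(G),
\end{equation*}
where $\epsilon=(-1)^{|(Iv)_{<e}|+|(Jv)_{<e}|}$ is the cofactor sign at $(e,e)$ after the removal of rows and columns in $Iv$ and $Jv$ shifts the position of $e$. Multiplying through by the Dodgson prefactor reduces the first equality to the sign identity $\sgn(Iv)\sgn(Jv)\cdot\epsilon=\sgn(Iev)\sgn(Jev)$. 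I would verify this by two applications of (\ref{Ax}) to obtain $\sgn(Aev)/\sgn(Av)=(-1)^{|(Av)_{<e}|+|A|}$ for $A=I,J$ separately; the two $|A|$-contributions sum to $|I|+|J|=2|I|$, which is even and drops out.

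For the second equality, I would verify the deletion identity $\Psi^{Ie,Je}_K(G)=\Psi^{I,J}_K(G\backslash e)$ and the contraction identity $\Psi^{I,J}_{Ke}(G)=\Psi^{I,J}_K(G/e)$ separately. The deletion identity is essentially tautological at the matrix level: removing $e$ from $G$ deletes row and column $e$ from $L(G)$, so $L^{Iev,Jev}_K(G)$ and $L^{Iv,Jv}_K(G\backslash e)$ are literally the same matrix. By independence of $\iota$ (already established), one can work on $G\backslash e$ with the restriction of the ambient ordering; the doubled $\iota_e$ in the prefactor contributes a trivial $(-1)^{2\iota_e}$, and the residual $\sgn$-discrepancy is again killed by the $|I|=|J|$ parity. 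The contraction identity is the subtler step. Writing $e=u_1u_2$ and exploiting the freedom to choose $v$, I take $v=u_2$ in $G$ and $v=u$ (the merged vertex) in $G/e$. Once column $u_2$ is removed and $\alpha_e$ is set to zero, row $e$ of $L^{Iu_2,Ju_2}_{Ke}(G)$ collapses to a single entry $+1$ in column $u_1$, and symmetrically for column $e$. Expanding the determinant first along row $e$ and then along column $e$ produces $\pm\det L^{Ieu_1u_2,Jeu_1u_2}_{Ke}(G)$. A direct inspection shows that this matrix coincides with $L^{Iu,Ju}_K(G/e)$, since removing rows and columns $e,u_1,u_2$ in $L(G)$ leaves exactly those rows and columns present in $L(G/e)$ after removing the merged vertex $u$.

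The main obstacle is the sign bookkeeping. At each step---cofactor expansion, $\iota$-position shifts from the removal of rows/columns, the change $|\sV(G/e)|=|\sV(G)|-1$ in the Dodgson prefactor, and the replacement of $\sgn(\cdot u_2)$ by $\sgn(\cdot u)$---a new sign is generated, and the four sources only conspire to give the stated identities thanks to the evenness of $|I|+|J|$. Carrying out these cancellations cleanly, without losing track of any of them, is the main technical burden of the argument.
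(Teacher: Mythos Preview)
Your proposal is correct and follows essentially the same route as the paper's proof: both arguments split the second equality into a deletion identity $\Psi^{Ie,Je}_K(G)=\Psi^{I,J}_K(G\backslash e)$ and a contraction identity $\Psi^{I,J}_{Ke}(G)=\Psi^{I,J}_K(G/e)$, choose $v$ to be an endpoint of $e$ (the paper writes $e=uv$ directly), expand the determinant along row and column $e$, and then chase the four sources of sign through (\ref{Ax}) and the parity $|I|=|J|$. The only point you should add is the degenerate case where $e$ is a self-loop: then $G/e=0$ by convention, while row and column $e$ of $L^{Iv,Jv}_{Ke}(G)$ vanish, so $\Psi^{I,J}_{Ke}(G)=0$ as well; your choice $v=u_2$ tacitly assumes $u_1\neq u_2$.
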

\begin{proof}
We order edges before vertices with $\iota_e=1$. Omitting $e$ induces the ordering $\iota'=\iota-1$ on $G\backslash e$. By definition $L^{Iev,Jev}_K(G)=L^{Iv,Jv}_K(G\backslash e)$.
Moreover, $\iota_{Ie}=\iota'_I+|I|+1$ and, by (\ref{Ax}), $\sgn(Iev)=\sgn(Ie)=(-1)^{|I|}\sgn(I)$. With the analogous statements for $Je$ we get $\Psi^{Ie,Je}_K(G)=\Psi^{I,J}_K(G\backslash e)$.

Next, we show that $\Psi^{I,J}_{Ke}(G)=\Psi^{I,J}_K(G/e)$. If $e$ is a self-loop then row and column $e$ are zero in $L^{Iv,Jv}_{Ke}$ and the result follows.
Otherwise orient $e$ such that $e=uv$ and assume $\iota_u=|\sE(G)|+|\sV(G)|$. Because $\alpha_e=0$ in $L^{Iv,Jv}_{Ke}(G)$, row and column 1 have non-zero entries only at positions $\iota_u-|J|-1$
and $\iota_u-|I|-1$, respectively. Both entries are 1. Expanding the determinant along row and (thereafter) column $1$ gives the sign $(-1)^{\iota_u-|J|}(-1)^{\iota_u-|I|-1}$.
The deletion of row and column $1$ can be interpreted as deleting edge $e$ and vertex $u$ while connecting all $u$-incident edges to $v$ (keeping the orientation).
This gives $G/e$. We conclude that $\det L^{Iv,Jv}_{Ke}(G)=(-1)^{|I|+|J|+1}\det L^{Iv,Jv}_{K}(G/e)$.
Contracting $e$ induces the ordering $\iota'=\iota-1$ on $G/e$. We get $\iota_I+\iota_J=\iota'_I+\iota'_J+|I|+|J|$.
With a minus sign from the reduction of the number of vertices in $G/e$ we get $\Psi^{I,J}_{Ke}(G)=\Psi^{I,J}_K(G/e)$.

Finally, we show that the coefficient of $\alpha_e$ in $\Psi^{I,J}_K(G)$ is $\Psi^{Ie,Je}_K(G)$.
The only entry with $\alpha_e$ in $L^{Iv,Jv}_K(G)$ is at position $(1,1)$. The coefficient of $\alpha_e$ is hence $\det L^{Iev,Jev}_K(G)$. The result follows from (\ref{Ax}).
\end{proof}

\begin{remark}
By passing to minors, iterated use of (\ref{cd}) [together with (\ref{red1})] allows one to reduce Dodgson polynomials to the case $I\cap J=K=\emptyset$,
$$
\Psi^{I,J}_K(G)=\Psi^{I\backslash J,J\backslash I}(G\backslash(I\cap J)/(K\backslash IJ)).
$$
\end{remark}

A cut set is a set of edges that, when removed, leaves a disconnected graph. Cuts in $I$ (or in $J$) or cycles in $K$ trivialize Dodgsons.
\begin{lem}[cuts and loops, F. Brown, \S 2.2 (4) in \cite{BSYc2}]\label{lemcuts}
If $I$ cuts $G$ or $JK\backslash I$ contains a cycle (i.e.\ $h_0(G\backslash I)\geq2$ or $h_1(JK\backslash I)\geq1$) then $\Psi^{I,J}_K(G)=0$.
In particular, any disconnected graph has trivial Dodgsons.
\end{lem}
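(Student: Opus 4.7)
The plan is, in either case, to exhibit a non-trivial linear relation among the columns (cut case) or rows (cycle case) of $L^{Iv, Jv}_K(G)$; by Definition~\ref{defpsi} this forces $\Psi^{I, J}_K(G) = 0$. After reducing to the situation where $I$ and $J$ both have distinct letters (otherwise $\sgn(Iv) = 0$ or $\sgn(Jv) = 0$ kills the Dodgson at once), I will treat $I$ and $J$ as subsets of $\sE(G)$ and handle the two hypotheses separately.

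For the cut case, the idea is to choose a connected component $V_1$ of $G \setminus I$ with $v \notin V_1$, which exists because $h_0(G \setminus I) \geq 2$. I claim $\sum_{u \in V_1} (\text{column } u) = 0$ in $L^{Iv, Jv}_K$. Indeed, column $u$ has nonzero entries only in rows indexed by edges incident to $u$; such an edge either lies in $I$ (its row is deleted) or is internal to $V_1$. For each internal non-self-loop edge with endpoints $u_1, u_2 \in V_1$, the contributions $+1$ at row of that edge in column $u_1$ and $-1$ in column $u_2$ cancel in the $V_1$-sum, while self-loops put no entry into vertex columns at all (as in Example~\ref{ex1}). Nonemptiness of $V_1$ makes the relation non-trivial, so $\det L^{Iv,Jv}_K(G)$ vanishes.

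For the cycle case, the argument is dual. I would pick a cycle $C \subseteq K \setminus IJ$ with an orientation, and choose boundary-annihilating signs $\epsilon_c \in \{\pm 1\}$ so that $\sum_{c \in C} \epsilon_c \partial c = 0$ in the vertex boundary. Since $C \cap I = \emptyset$, all rows indexed by $c \in C$ survive in $L^{Iv, Jv}_K$, and I plan to show $\sum_{c \in C} \epsilon_c (\text{row } c) = 0$. In any vertex column this follows from the cycle relation; in any edge column $e$ the only conceivably nonzero contribution would be the diagonal entry $\alpha_c$ at $e = c$, but $\alpha_c = 0$ because $c \in K$, so this entry vanishes as well.

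The hard part will be only the bookkeeping around degenerate shapes: a ``cycle'' consisting of a single self-loop $c \in K$ collapses to a zero row outright (the self-loop's only entry $\alpha_c$ vanishes), while a singleton component $V_1 = \{u\}$ whose incident edges are all cut-edges or self-loops yields a zero column by the same mechanism, so both corner cases still fit the general argument. The ``in particular'' clause about disconnected graphs is the special case $I = \emptyset$ of the cut argument, applied to any component of $G$ distinct from the one containing $v$.
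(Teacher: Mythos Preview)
Your argument is correct. The cut case is essentially the paper's own argument: the paper orders edges before vertices, notes that $E_{G\backslash I}$ becomes block diagonal, and observes that the rows of the block not containing $v$ still sum to zero; this is exactly your column relation $\sum_{u\in V_1}(\text{column }u)=0$, phrased via the block structure of (\ref{Lincidence}).

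The cycle case, however, is handled differently. The paper does not produce a linear dependence in $L^{Iv,Jv}_K$; instead it invokes the contraction-deletion Lemma~\ref{lemcd} repeatedly to contract all edges of the cycle in $K\backslash IJ$, so that the final step contracts a self-loop and returns $0$ by the convention $G/e=0$ for self-loops. Your route is more self-contained: it stays inside linear algebra and does not need Lemma~\ref{lemcd} or the $\ZZ[\sG]$ convention, at the cost of checking the small bookkeeping you mention (the diagonal entries $\alpha_c$ vanish because $c\in K$, and the vertex-column contributions cancel by $\sum_c\epsilon_c\partial c=0$ regardless of whether the column $v$ has been deleted). The paper's route, on the other hand, reuses machinery it needs anyway and avoids re-examining the matrix entries.
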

\begin{proof}
In $L^{Iv,Jv}_K$ the rows of edges in $I$ and the row of the vertex $v$ are deleted. The columns in $L^{Iv,Jv}_K$ corresponding to the vertices of the connected component
not containing $v$ add up to zero. Hence $\Psi^{I,J}_K=0$.

Assume the cycle $C$ is in $JK\backslash I$. Using (\ref{red1}) we assume that $K\cap IJ=\emptyset$.
If an edge $e\in C$ is in $K$ we contract $e$ using (\ref{cd}). This process iterates until either a self-loop which contracts to zero or $C\cap K=\emptyset$.
We can hence assume that $C$ (which may be a self-loop) is in $J\backslash I$. The rows corresponding to $C$ in $L^{Iv,Jv}_K$ are of the form $(0,E_C^T)$.
Because $C$ is a cycle they add up to zero and $\Psi^{I,J}_K=0$.
\end{proof}

We need the notion of oriented vertices and cycles.
\begin{defn}
We call $v\in\sV(G)$ {\em oriented} if non-self-loop edges incident to $v$ are all outgoing or all ingoing.
We write $e\sim v$ if the non-self-loop edge $e$ is incident to the oriented vertex $v$ (which may have incident self-loops $\neq e$).
A cycle $C$ is {\em oriented} if all edge-heads are attached to tails in $C$.
\end{defn}

\begin{lem}[vanishing sums]
Let $G$ be a graph, $I,J$ words in the letters $\sE(G)$, and $K\subseteq\sE(G)$. If $v\in\sV(G)$ is oriented and $e\sim v$, $e\notin IJ$, then
\begin{equation}\label{v0}
\sum_{f\sim v}\Psi^{If,J}_{Ke}(G)=0.
\end{equation}
If $C$ is an oriented cycle attached to $G$, i.e.\ $\sE(C)\cap\sE(G)=\emptyset$ and $\sV(C)\subseteq\sV(G)$, then
\begin{equation}\label{c0}
\sum_{f\in C}\Psi^{If,J}_K(G\cup\{f\})=0.
\end{equation}
\end{lem}
\begin{proof}
Note that terms in (\ref{v0}) with $f\in I$ drop because they have a repeated letter in a superscript.
Using (\ref{red1}) and (\ref{cd}) we write (\ref{v0}) in the form
\begin{equation}\label{v0a}
\Psi^{Ie,J}_K(G)=-\sum_{Ie\not\ni f\sim v}\Psi^{If,J}_K(G/e).
\end{equation}
We take the second vertex $u$ of $e$ as the vertex whose row and column is deleted in the graph Laplacian for calculating the Dodgson.
We order edges before vertices with $\iota_e=1$ and $\iota_v=|\sE(G)|+|\sV(G)|$.
Then, the only non-zero entries in columns 1 and $v$ in $L^{Ieu,Ju}_K$ are at positions $(\iota_v-|Ie|-1,1)$ and $(\iota_f-|Ie_{<f}|,\iota_v-|J|-1)$ for $Ie\not\ni f\sim v$, respectively.
The entries are $\sigma=+1$ if $e$ (and hence also $f$) begins in $v$ and $\sigma=-1$ otherwise. We first expand along column $1$ and thereafter along column $v$ yielding
\begin{eqnarray*}
\det L^{Ieu,Ju}_K&=&\sigma(-1)^{\iota_v-|Ie|}\det L^{Ievu,Jeu}_K,\\
\det L^{Ievu,Jeu}_K&=&\sum_{Ie\not\ni f\sim v}\sigma(-1)^{\iota_f-|Ie_{<f}|+\iota_v-|Je|-1}\det L^{Iefvu,Jevu}_K,\\
L^{Iefvu,Jevu}_K(G)&=&L^{Ifu,Ju}_K(G/e),
\end{eqnarray*}
with the ordering $\iota'=\iota-1$ on $G/e$. For any $f\notin Ie$ we get $\iota_{Ie}=\iota'_{I}+|Ie|=\iota'_{If}-\iota_f+|Ie|$ and $\iota_J=\iota'_J+|J|$.
Collecting all signs we obtain (the $\sgn$'s with respect to $\iota$ and $\iota'$ are identical)
$$
\Psi^{Ie,J}_K(G)=\sgn(Ie)\sum_{Ie\not\ni f\sim v}(-1)^{-|Ie_{<f}|}\sgn(If)\Psi^{If,J}_K(G/e),
$$
With (\ref{Ax}) we get $\sgn(If)=(-1)^{|I|-|I_{<f}|}\sgn(I)=(-1)^{-|I_{<f}|}\sgn(Ie)$ and (\ref{v0a}) follows because $|I_{<f}|=|Ie_{<f}|-1$.

For (\ref{c0}) we order the graph $G\cup\{f\}$ such that $\iota_f=1$. The matrices $L^{Ifu,Ju}_K(G\cup\{f\})$ are identical after the first column.
Because $C$ is an oriented cycle the first columns of the matrices $L^{Ifu,Ju}_K(G\cup\{f\})$ add up to zero yielding $\sum_{f\in C}\det L^{Ifu,Ju}_K(G\cup\{f\})=0$.
Let $\iota'=\iota-1$ be the ordering on $G$, then $\iota_{If}=\iota'_I+|If|$ and $\iota_J=\iota'_J+|J|$. For $\Psi^{If,J}_K(G\cup\{f\})$ we get the expression
$$
(-1)^{|\sV(G\cup\{f\})|+\iota'_I+|I|+\iota'_J+|J|}\sgn(If)\,\sgn(J)\det L^{Ifu,Ju}_K(G\cup\{f\}).
$$
We have $|\sV(G\cup\{f\})|=|\sV(G)|$ and by (\ref{Ax}) the sign becomes independent of $f$. The expression sums up to zero.
\end{proof}

The most practical way to use (\ref{v0}) is by (\ref{v0a}). The result simplifies in a special setup with a 2- or 3-valent vertex.

\begin{lem}[see Proposition 1.16 in \cite{HSSYc2}]
Let $G$ be graph and $u\in\sV(G)$ an oriented vertex with two edges $e,f$ or three edges $e,f,h$ and no self-loops.
Let $I,J$ be words in $\sE(G)$ and $e,f\notin IJ$. If $u$ has degree three then assume furthermore that $h\in J\backslash I$. For any $K\subseteq\sE(G)$ we have
\begin{equation}\label{23u}
\Psi^{Ie,Jf}_K(G)=-\Psi^{I,J}_K(G\backslash e/f).
\end{equation}
\end{lem}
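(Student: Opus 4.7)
The plan is to apply the vanishing-sum identity (\ref{v0a}) to the second superscript of $\Psi^{Ie,Jf}_K(G)$ and then to recognize the resulting Dodgson via contraction-deletion. First, using the symmetry $\Psi^{A,B}_K=\Psi^{B,A}_K$, I read (\ref{v0a}) as also allowing the extraction of a letter from the second superscript. Since $u$ is oriented, $f\sim u$, and $f\notin (Ie)\cup J$ (which follows from $e,f\notin IJ$ together with $e\neq f$), this gives
\[
\Psi^{Ie,Jf}_K(G)\;=\;-\!\!\sum_{Jf\,\not\ni\, g\,\sim\, u}\!\!\Psi^{Ie,Jg}_K(G/f).
\]

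The core of the argument is to check that only a single term survives in this sum, namely $g=e$. In the 2-valent case the edges incident to $u$ are $\{e,f\}$; we have $f\in Jf$ automatically, while $e\notin Jf$ because $e\notin J$ and $e\neq f$. In the 3-valent case the edges at $u$ are $\{e,f,h\}$; here the extra hypothesis $h\in J\setminus I$ is exactly what is needed to force $h\in Jf$, and $e$ is again the unique edge $\sim u$ lying outside $Jf$. In both cases we therefore obtain
\[
\Psi^{Ie,Jf}_K(G)\;=\;-\Psi^{Ie,Je}_K(G/f).
\]

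To finish, I would invoke the identity $\Psi^{Ie,Je}_K(H)=\Psi^{I,J}_K(H\backslash e)$ for any graph $H$ containing the edge $e$, which is simply the coefficient-of-$\alpha_e$ comparison of the two forms of Lemma~\ref{lemcd}. Applied to $H=G/f$ (note $e\neq f$, so $e$ still exists in $G/f$), and using the commutativity of deletion and contraction, this yields
\[
\Psi^{Ie,Je}_K(G/f)\;=\;\Psi^{I,J}_K\bigl((G/f)\backslash e\bigr)\;=\;\Psi^{I,J}_K(G\backslash e/f),
\]
which proves (\ref{23u}).

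I do not foresee any real obstacle: the only subtlety is keeping careful track of where the hypothesis $h\in J\setminus I$ enters. Its sole purpose is to ensure $h\in Jf$ so that $h$ does not contribute a spurious term to the vanishing sum; if instead $h\in I\setminus J$ the analogous argument would work with the roles of the two superscripts swapped. Everything else reduces to straightforward bookkeeping with the identities already established in the excerpt.
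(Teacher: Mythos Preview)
Your proof is correct and is genuinely more streamlined than the paper's. The key difference lies in which superscript you apply the vanishing sum (\ref{v0a}) to. The paper extracts $e$ from the \emph{first} superscript, which in the degree-three case produces two surviving terms (for $f$ and for $h$, since $h\notin I$); it then has to invoke the cycle identity (\ref{c0}) on a triangle in $G\backslash ef/h$ to combine those terms into $-\Psi^{I,J}_K(G\backslash e/f)$. You instead extract $f$ from the \emph{second} superscript; because $h\in J$, the $h$-term is killed immediately and only $g=e$ remains, so the identity drops out after a single application of contraction-deletion without ever touching (\ref{c0}). Your route thus isolates exactly where the hypothesis $h\in J$ does its work, and in fact your argument only uses $h\in J$; the additional condition $h\notin I$ is what the paper's proof needs (so that $h$ actually appears in its sum), which is consistent with your closing remark that for $h\in I\setminus J$ one would swap the roles of the two superscripts.
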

\begin{proof}
Consider the case that $u$ has degree 3. We may permute edge $h$ to the rightmost position of $J$. Because this induces the same sign on both sides of (\ref{23u}) we may assume that $J=J'h$.
We use (\ref{v0a}) and (\ref{cd}) to obtain
\begin{eqnarray*}
\Psi^{Ie,Jf}_K(G)&=&-\;\Psi^{If,J'hf}_K(G/e)-\Psi^{Ih,J'hf}_K(G/e)\\
&=&-\;\Psi^{I,J'h}_K(G\backslash f/e)+\Psi^{I,J'f}_K(G\backslash h/e).
\end{eqnarray*}
For the last identity we swapped $h$ with $f$ which gives a minus sign.

If $u$ has degree 2 then the second terms on the right hand sides are absent. In this case $G\backslash f/e=G\backslash e/f$ and we get the result.

Otherwise the edges $h,f,h$ in the graphs $G\backslash f/e$, $G\backslash h/e$, $G\backslash e/f$ form a cycle $C$ attached to $G\backslash ef/h$.
To orient $C$ we reverse the orientation of the edge $h$ in $G\backslash f/e$. This gives a minus sign in $\Psi^{I,J'h}_K(G\backslash f/e)$.
Using (\ref{c0}) on the second superscript gives the result.
\end{proof}

A third family of equations are the Dodgson identities.
\begin{lem}
For any $n\times n$ matrix $M$ and any $i\neq j\in\{1,\ldots,n\}$,
\begin{equation}\label{eqprop1}
\det M^{i,i}\det M^{j,j}-\det M^{i,j}\det M^{j,i}=\det M\det M^{ij,ij},
\end{equation}
where $M^{I,J}$ is $M$ with rows $I$ and columns $J$ deleted.
\end{lem}
\begin{proof}
This is the special case (20) of Lemma 28 in \cite{Bperiods}.
%
\end{proof}

Here, we only need a special Dodgson identity. More general results are in Lemma 30 of \cite{Bperiods}.
\begin{lem}[Dodgson identity]
Let $G$ be a graph, $e,f\in\sE(G)$, $I,J$ words in $\sE(G)$ and $K\subseteq\sE(G)$. Then
\begin{equation}\label{did}
\Psi^{Ie,Je}_{Kf}\Psi^{If,Jf}_{Ke}-\Psi^{I,J}_{Kef}\Psi^{Ief,Jef}_K=\Psi^{Ie,Jf}_K\Psi^{If,Je}_K,
\end{equation}
where every Dodgson is evaluated at the graph $G$.
\end{lem}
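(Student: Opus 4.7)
The natural plan is to deduce (\ref{did}) from the general determinant Dodgson identity in Proposition \ref{prop0} applied to the single matrix
$$
M := L^{Iv,Jv}_{Kef}(G),
$$
with the two distinguished rows/columns taken to be those labeled by the edges $e$ and $f$ (both of which survive in $M$ since $e,f\notin IJ$). Proposition \ref{prop0} then yields
$$
\det M^{e,e}\det M^{f,f}-\det M^{e,f}\det M^{f,e}=\det M\cdot\det M^{ef,ef}.
$$

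The first step is to rewrite each of these six determinants as a reduced Laplacian of $G$ with the correct set of nullified edges. The crucial observation is that $\alpha_e$ and $\alpha_f$ appear in $L(G)$ only on the diagonal: by (\ref{Lincidence}) the edge/edge block is diagonal. Hence nullifying $\alpha_e$ affects only the diagonal entry at position $e$, and as soon as row $e$ or column $e$ is removed, this nullification becomes vacuous; the analogous statement holds for $\alpha_f$. Consequently,
$$
\det M^{e,e}=\det L^{Iev,Jev}_{Kf}(G),\qquad \det M^{f,f}=\det L^{Ifv,Jfv}_{Ke}(G),
$$
$$
\det M^{e,f}=\det L^{Iev,Jfv}_K(G),\qquad \det M^{f,e}=\det L^{Ifv,Jev}_K(G),
$$
$$
\det M=\det L^{Iv,Jv}_{Kef}(G),\qquad \det M^{ef,ef}=\det L^{Iefv,Jefv}_K(G),
$$
and by Definition \ref{defpsi} each of these determinants equals the corresponding Dodgson in (\ref{did}) up to an explicit sign $(-1)^{|\sV(G)|-1+s(A)+s(B)}\sgn(Av)\sgn(Bv)$, where $s(A):=\sum_{a\in A}\iota_a$.

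It then remains to check that these six signs are compatible, i.e.\ that each of the three products of Dodgsons in (\ref{did}) carries the same global sign. The $(-1)^{s(\cdot)+s(\cdot)}$ contribution is immediate: in all three products the combined $s$-exponent equals $2s(I)+2s(J)+2\iota_e+2\iota_f$, hence is even. For the $\sgn$-factors, iterated application of (\ref{Ax}) expresses $\sgn(Axv)$ in terms of $\sgn(A)$, $|A_{<v}|$ and $|A_{<x}|$ for $x\in\{e,f\}$; substituting $A\in\{I,J\}$ one verifies that the contributions from $|I_{<e}|,|I_{<f}|,|J_{<e}|,|J_{<f}|$ (and $\iota_e,\iota_f$ modulo $2$) pair up identically across the three products, so all three carry the same global sign and (\ref{did}) follows. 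The only genuine obstacle is this final sign bookkeeping: it is purely combinatorial and uses nothing beyond (\ref{Ax}), but one must be careful about the interaction between the positions of $e$, $f$, and $v$ in the ordering $\iota$.
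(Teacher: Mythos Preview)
Your proposal is correct and follows exactly the same approach as the paper: apply Proposition \ref{prop0} to $M=L^{Iv,Jv}_{Kef}(G)$ with $i=e$, $j=f$, then verify that the three products of Dodgsons carry the same sign via (\ref{Ax}) and (\ref{psidef}). The paper's proof is simply a two-sentence summary of what you wrote out in detail; in particular, your observation that nullifying $\alpha_e$ (resp.\ $\alpha_f$) becomes vacuous once row or column $e$ (resp.\ $f$) is deleted is the implicit content behind the paper's choice of $M$.
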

\begin{proof}
From (\ref{Ax}) and (\ref{psidef}) we see that all products of Dodgsons in (\ref{did}) have the same sign. It hence suffices to prove (\ref{did})
on the level of determinants. To do this we assume $e,f\notin IJ$ (without restriction), pick $v\in\sV(G)$ and use (\ref{eqprop1}) with $i=e$, $j=f$, $M=L^{Iv,Jv}_{Kef}$.
\end{proof}
Note that the left hand side of (\ref{did}) can be rephrased in terms of minors.

A standard situation is that a graph has several 3-valent vertices. We need to re-prove a result in \cite{Bperiods} with our sign convention.
\begin{lem}[Example 32 in \cite{Bperiods} and Lemma 22 in \cite{K3}]\label{deff}
Let $G$ be a graph and $u\in\sV(G)$ be an oriented vertex of degree 3 with no self-loops. Then, the graph polynomial $\Psi(G)$ has the structure
\begin{equation}\label{3a}
\Psi=f_0(\alpha_1\alpha_2+\alpha_1\alpha_3+\alpha_2\alpha_3)-(f_2+f_3)\alpha_1-(f_1+f_3)\alpha_2-(f_1+f_2)\alpha_3+f_{123},
\end{equation}
where the polynomials
\begin{equation}\label{fdef}
f_0=\Psi^{12,12}_3,\;f_1=\Psi^{2,3}_1,\;f_2=\Psi^{1,3}_2,\;f_3=\Psi^{1,2}_3,\;f_{123}=\Psi_{123}
\end{equation}
fulfill the equation
\begin{equation}\label{3b}
f_0f_{123}=f_1f_2+f_1f_3+f_2f_3.
\end{equation}
\end{lem}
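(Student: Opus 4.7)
The plan is to expand $\Psi(G)$ multilinearly in $\alpha_1,\alpha_2,\alpha_3$ via iterated contraction-deletion (\ref{cd}), identifying each of the eight resulting coefficients as a Dodgson polynomial, and then pin down each one to match the claimed structure (\ref{3a}). The cubic coefficient $\Psi^{123,123}$ vanishes by Lemma \ref{lemcuts} because $\{1,2,3\}$ is a cut at the oriented 3-valent vertex $u$. The linear coefficient of $\alpha_1$ is $\Psi^{1,1}_{23}$; applying the vanishing-sum identity (\ref{v0}) at $u$ with $I=\emptyset$, $J=1$, $e=2$, $K=\{3\}$ gives $\Psi^{1,1}_{23}+\Psi^{2,1}_{23}+\Psi^{3,1}_{23}=0$, which reduces via (\ref{red1}) and transposition symmetry to $\Psi^{1,1}_{23}=-(f_2+f_3)$. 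Cyclic analogues handle the coefficients of $\alpha_2$ and $\alpha_3$, and the constant term is $\Psi_{123}=f_{123}$ by definition.

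The step I expect to be the main obstacle is proving that the three quadratic coefficients $\Psi^{23,23}_1$, $\Psi^{13,13}_2$, $\Psi^{12,12}_3$ coincide with a common value $f_0$. My plan is to route them through the auxiliary Dodgson $\Psi^{12,23}$, which, because $\{1,2,3\}\subseteq IJ$, is constant in $\alpha_1,\alpha_2,\alpha_3$ by the remarks preceding Lemma \ref{lemcd}. Applying the $J$-side version of (\ref{v0a}) (obtained from transposition symmetry) with $I=12$, $J=2$, $e=3$, $K=\emptyset$, the sum collapses to the single term $f=1$, giving $\Psi^{12,23}(G)=-\Psi^{12,21}(G/3)=\Psi^{12,12}(G/3)=\Psi^{12,12}_3(G)$, where the last equality uses the contraction part of (\ref{cd}). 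The mirror application with $I=2$, $J=23$, $e=1$ produces $\Psi^{21,23}(G)=-\Psi^{23,23}(G/1)$, which after a sign flip becomes $\Psi^{23,23}_1=\Psi^{12,23}$. A cyclically identical argument through $\Psi^{13,12}$ handles $\Psi^{13,13}_2$. The technical trick is to arrange the parameters so that (i) the hypothesis $e\notin IJ$ is satisfied, (ii) the vanishing sum at the 3-valent vertex collapses to exactly one surviving term, and (iii) the hinge Dodgson is independent of $\alpha_1,\alpha_2,\alpha_3$.

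With (\ref{3a}) in hand, the quadratic identity (\ref{3b}) follows from a single application of the Dodgson identity (\ref{did}) with $I=J=\emptyset$, $K=\{3\}$, $e=1$, $f=2$, namely
\[
\Psi^{1,1}_{23}\Psi^{2,2}_{13}-\Psi_{123}\Psi^{12,12}_3=\Psi^{1,2}_3\Psi^{2,1}_3=f_3^2.
\]
Substituting the coefficient formulas $\Psi^{1,1}_{23}=-(f_2+f_3)$, $\Psi^{2,2}_{13}=-(f_1+f_3)$, $\Psi^{12,12}_3=f_0$, $\Psi_{123}=f_{123}$ expands the left-hand side to $f_1f_2+f_1f_3+f_2f_3+f_3^2-f_0f_{123}$; cancelling $f_3^2$ yields (\ref{3b}). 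All steps other than the quadratic-coefficient coincidence amount to index bookkeeping and a single direct use of the Dodgson identity.
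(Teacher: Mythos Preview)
Your proof is correct and follows essentially the same route as the paper: multilinear expansion via contraction--deletion, Lemma~\ref{lemcuts} for the cubic term, the vanishing-sum identity for the linear coefficients, and the Dodgson identity (\ref{did}) with $e=1$, $f=2$, $K=\{3\}$ for (\ref{3b}).

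The only point of divergence is precisely the step you flagged as the main obstacle. The paper handles the equality of the three quadratic coefficients in one line by a purely topological observation: by iterated (\ref{cd}), $\Psi^{ij,ij}_k(G)=\Psi(G\backslash ij/k)$, and since $u$ has degree $3$ with edges $1,2,3$, the minors $G\backslash12/3$, $G\backslash13/2$, $G\backslash23/1$ are all the same graph (each deletes $u$ together with its three edges). Your detour through the hinge Dodgson $\Psi^{12,23}$ and two applications of (\ref{v0a}) is correct but unnecessary; the graph-minor argument is both shorter and more transparent.
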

\begin{proof}
We use (\ref{v0a}) for edge $e=1$ yielding $\Psi^{1,2}=-\Psi^{2,2}_1-\Psi^{3,2}_1$. Setting $\alpha_3=0$ we get $\Psi^{2,2}_{13}=-f_1-f_3$.
By contraction-deletion (\ref{cd}), $\Psi^{2,2}_{13}$ is the coefficient of $\alpha_2$ in $\Psi$. Likewise we get the coefficients of $\alpha_1$ and $\alpha_3$.
The constant term in $\alpha_1,\alpha_2,\alpha_3$ is $f_{123}$ whereas the coefficient of $\alpha_1\alpha_2$ is $f_0$.
Because $G\backslash 12/3=G\backslash 13/2=G\backslash 23/1$ the other quadratic terms have the same coefficient.
There is no cubic term because 123 cuts $G$, see Lemma \ref{lemcuts}.

Finally, we use the Dodgson identity (\ref{did}) with $e=1$, $f=2$, $K=\{3\}$ and get
$$
\Psi^{1,1}_{23}\Psi^{2,2}_{13}-\Psi_{123}\Psi^{12,12}_3=(\Psi^{1,2}_3)^2.
$$
With (\ref{3a}) and (\ref{fdef}) this gives $(-f_2-f_3)(-f_1-f_3)-f_{123}f_0=(f_3)^2$ which is (\ref{3b}).
\end{proof}

The degrees of Dodgson polynomials,
\begin{equation}\label{doddeg}
\deg\Psi^{I,J}_K(G)=h_1(G)-|I|,\quad\hbox{if }\Psi^{I,J}_K(G)\neq0,
\end{equation}
follow from (\ref{1}) for $I=\emptyset$. For general $I$, Equation (\ref{doddeg}) is obtained by iteratively using (\ref{cd}) and (\ref{v0a}).

Note that there exist many more identities for Dodgsons in \cite{Bperiods,K3,BSYc2}. There also exists a powerful combinatorial approach to Dodgson polynomials.
This approach relates monomials to spanning forests and is pursued in \cite{BYc2,BSYc2,Yc2,Yc2pre,HSSYc2,Hourglass}.

\section{The $c_2$-invariant}\label{sectc2}
In the next two sections we review the definition and basic properties of the $c_2$-invariant.
Because the material has been covered in several articles (see e.g.\ \cite{K3,HSSYc2} and the references therein) we keep these sections short.
In particular, we do not give proofs but refer to the literature.

\begin{defn}\label{defprim}
A graph $G$ with $h_1(G)$ independent cycles is {\em primitive} if $|\sE(G)|=2h_1(G)$ and every non-empty proper subgraph $g$ of $G$ has $|\sE(g)|>2h_1(g)$.
\end{defn}

A primitive $\phi^4$ graph is short of being 4-regular (i.e.\ every vertex has degree 4) by four half-edges. If one adds an extra vertex $\infty$ to $G$ and connects these four half-edges
to $\infty$ then one obtains a 4-regular graph: the completion $\overline{G}$ of $G$. Conversely, $G$ is a decompletion of $\overline{G}$. While completion is unique, decompletion is not
(in general).

Any primitive graph $G$ has a period (\ref{2}) which is a contribution to the beta-function of four-dimensional $\phi^4$ theory, see e.g.\ \cite{IZ}.
Every decompletion of a completed graph has the same period (Theorem 2.7 in \cite{Scensus}).

The smallest primitive graph is a double edge which has period 1. Periods of larger graphs are non-trivial (and sometimes very hard) to calculate \cite{Snumfunct}.
Lists of known periods are in the files {\tt Periods.m} (for $\phi^4$ and $h_1(G)\leq11$) and {\tt PeriodsNonPhi4} (non-$\phi^4$ with $h_1(G)\leq8$) in \cite{Shlog}.
Although the period exists for all primitive graphs, non-$\phi^4$ periods have no physical meaning. The $\phi^4$ periods have a conjectural coaction structure \cite{PScoaction}
which may indicate a deep connection between quantum field theory and motivic Galois theory \cite{Bcoact1,Bcoact2}.

Because of this motivic connection there exists interest in $\phi^4$ periods from the mathematical as well as from the physical side.
However, progress in calculating $\phi^4$ periods is modest. It is a lucky coincidence that there exists the combinatorial $c_2$-invariant,
which captures some number theoretic aspects of the period. This $c_2$-invariant---while being easier to determine---is able to detect arithmetic geometries which will persist
in the period, driving some of its motivic structure \cite{BD}.

\begin{defn}\label{c2}
Let $G$ be a graph with at least three vertices and let $\FF_q$ be the finite field with $q=p^n$ elements ($p$ prime).
In this case the point-count (\ref{2a}) of the graph polynomial is divisible by $q^2$ (Theorem 2.9 in \cite{SFq} and Proposition 2 in \cite{K3}). The $c_2$ of $G$ at $q$ is
\begin{equation}
c_2^{(q)}(G)\equiv\frac{[\Psi_G]_q}{q^2} \mod q.
\end{equation}
\end{defn}
Note that for a fixed graph $G$ the $c_2$ associates to every prime power $q$ an element in $\ZZ/q\ZZ$. In many cases the $c_2$ is easier to calculate for primes $q=p$ (see Conjecture
\ref{conqp}). In practice, one often has to be even more modest and content oneself with the knowledge of the $c_2$ for a finite prefix of primes (here typically all primes up to 97).

Note that many graphs may have the same period. The $c_2$ should be a period invariant (Conjecture \ref{con2}).
In particular, all decompletions of a completed graph conjecturally have the same $c_2$ (the completion conjecture).

One can use the completion conjecture to limit the number of graphs which may lead to new $c_2$s. Firstly, in $\phi^4$ we only need to consider completed graphs.
Completed 4-regular graphs have primitive decompletions if and only if they are internally 6-edge-connected, i.e.\ the only 4-edge-splits come from cutting off a vertex
(Proposition 2.6 in \cite{Scensus}). Secondly, with Proposition 31 in \cite{K3}, the completion conjecture implies that the $c_2$ vanishes if a completed graph
has a 3-vertex-split. Thirdly, if a completed graph has two triangles $abc$ and $abd$ sharing an edge $ab$ then the $c_2$ is invariant under the {\em double triangle} reduction of
vertex $a$ by the overcrossing $cd$, $be$, where $e$ is the fourth vertex connected to $a$ ($e$ must not be connected to $b$, Theorem 3.5 in \cite{HSSYc2} using
Theorem 35 in \cite{BYc2}). This leads to the following definition.

\begin{defn}[Definition 2.5 in \cite{Scensus}]\label{primeancestor}
A graph with $\geq5$ vertices is a prime ancestor if
\begin{enumerate}
\item it is 4-regular, and
\item it is internally 6-edge-connected, and
\item it has vertex connectivity 4, and
\item it has no edge which is shared by exactly two triangles.
\end{enumerate}
\end{defn}

\begin{con}
The $c_2$-invariants of prime ancestor decompletions exhaust all $c_2$-invariants up to a given loop order.
\end{con}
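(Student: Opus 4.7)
The plan is to prove the conjecture as a reduction: starting from an arbitrary primitive $\phi^4$ graph $G$ with $h_1(G)\leq L$, I will produce a prime ancestor $H$ with $h_1(H)\leq L$ such that the decompletions of $H$ have the same $c_2^{(q)}$ as $G$ for every prime power $q$. The engine of the reduction is precisely the three $c_2$-preserving (or $c_2$-trivializing) moves listed before Definition \ref{primeancestor}, iterated on the completion $\overline{G}$.

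First, replace $G$ by its unique completion $\overline{G}$; the completion conjecture (Conjecture \ref{con2} combined with Theorem 2.7 in \cite{Scensus}) keeps $c_2^{(q)}$ unchanged. By construction $\overline{G}$ is 4-regular, so property (1) of Definition \ref{primeancestor} holds, and Proposition 2.6 in \cite{Scensus} shows that primitivity of $G$ forces $\overline{G}$ to be internally 6-connected, giving property (2). Next, if $\overline{G}$ violates property (3), i.e.\ admits a 3-vertex-split, then Proposition 31 in \cite{K3} (again using the completion conjecture) yields $c_2^{(q)}(G)\equiv 0\bmod q$ for all $q$. In that situation I only need to exhibit some prime ancestor of loop order $\leq L$ whose decompletion has vanishing $c_2$; these are abundant at low loop orders and can be taken as the canonical representative. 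Finally, if some edge of $\overline{G}$ is shared by exactly two triangles (violating property (4)), apply the double triangle reduction of Theorem 3.5 in \cite{HSSYc2}, which produces a 4-regular graph of strictly smaller loop order with the same $c_2^{(q)}$.

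The reduction terminates because each double triangle step strictly decreases $|\mathcal{V}|$, and the process never leaves the range $h_1\leq L$. After each double triangle move I re-check properties (2) and (3); if either is newly violated I loop back to the appropriate step. When no further reduction is possible, the resulting graph satisfies all four conditions of Definition \ref{primeancestor} and is therefore a prime ancestor $H$ with $c_2^{(q)}(H\text{ decompletion})\equiv c_2^{(q)}(G)$, proving the conjecture.

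The main obstacle is that every preservation statement above is itself conditional on the completion conjecture, which is open: the period-invariance of $c_2$ is only known for specific families of graphs \cite{BD}. In particular, both the vanishing in the 3-vertex-split case and the invariance under double triangle reduction currently rest on this conjecture. A secondary difficulty is verifying that the double triangle move does not destroy internal 6-connectivity or introduce a 3-vertex-split; this is a finite but nontrivial local case analysis on the neighborhood of the reduced vertex $a$. An unconditional proof would require a combinatorial argument, presumably via Dodgson or spanning-forest identities in the style of Section \ref{sectdp} and \cite{HSSYc2}, showing directly that $[\Psi_G]_q/q^2\bmod q$ is unchanged by each of the three moves without routing through the period.
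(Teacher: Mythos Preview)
The statement you are attempting to prove is labeled a \emph{conjecture} in the paper, not a theorem, and the paper offers no proof; immediately after stating it the authors write ``We will assume the above conjecture'' and proceed. There is therefore no paper proof to compare your attempt against.

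Your sketch is in fact the heuristic that motivates the conjecture, and you have correctly identified that it is conditional on the completion conjecture. That conditionality is precisely why the statement remains open: passing freely between decompletions (needed both for the initial completion step and after each double triangle move) and the vanishing of $c_2$ under a 3-vertex-split (Proposition~31 in \cite{K3}) both rest on the unproved completion conjecture. One small correction: the double-triangle invariance of $c_2$ is established unconditionally in \cite{HSSYc2} (Theorem~3.5, building on \cite{BYc2}) for a specific choice of decompletion vertex, so your assertion that ``the invariance under double triangle reduction currently rests on this conjecture'' is slightly too pessimistic; the dependence enters only when one wants to change the decompletion vertex between successive moves.

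Beyond the conditional nature, the two difficulties you flag are genuine gaps. First, your loop-back scheme does not actually handle a failure of property~(2): if a double triangle move destroys internal 6-connectivity there is no step in your outline that repairs it. Second, your treatment of the $c_2\equiv 0$ branch presupposes the existence of a prime ancestor with vanishing $c_2$ at loop order $\leq L$; the first such $\phi^4$ ancestor appears at loop order~6 (Table~1), so for $L\leq 5$ one must verify separately that no 3-vertex-split completions arise---true by inspection, but not covered by your argument.
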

We will assume the above conjecture and investigate only (single decompletions of) prime ancestors. For non-$\phi^4$ graphs an analogous definition exists which,
however, is less powerful in the sense that vast numbers of prime ancestors give the same $c_2$. In $\phi^4$ the number of prime ancestors with equal $c_2$ at fixed
loop order is usually quite modest (see Conjecture \ref{K5} and Problem \ref{probgraphc2}).

\section{Denominator reduction}\label{sectdr}
Apart from the reduction to prime ancestors from the last section, the main tool for calculating $c_2$s is denominator reduction.
Consider a primitive graph $G$ with at least three edges. Assume we want to calculate the period (\ref{2}) by integrating out one variable after the other
(note that there exist much more powerful tools \cite{Snumfunct}). With (\ref{cd}) and (\ref{did}) we get
\begin{eqnarray*}
P(G)&=&\int_{\alpha>0}\frac{\dd \alpha_1\ldots}{\Psi^2}=\int_{\alpha>0}\frac{\dd \alpha_2\ldots}{\Psi^{1,1}\Psi_1}=
 \int_{\alpha>0}\frac{\log\frac{\Psi^{1,1}_2\Psi^{2,2}_1}{\Psi^{12,12}\Psi_{12}}\dd\alpha_3\ldots}{(\Psi^{1,2})^2}\\
&=&\int_{\alpha>0}\frac{\log\frac{\Psi^{1,1}_{23}\Psi^{2,2}_{13}}{\Psi^{12,12}_3\Psi_{123}}\dd\alpha_4\ldots}{\Psi^{13,23}\Psi^{1,2}_3}+\ldots.
\end{eqnarray*}
In the last equation we used integration by parts and skipped some terms which are of similar structure. Note that we have done three integrations but only obtained logarithms
(no di- or tri-logarithms). This can be considered as a double weight drop and it corresponds to the divisibility of $[\Psi_G]_q$ by $q^2$ in Definition \ref{c2}.
The log in the numerator of the last integrand consists of graph polynomials of minors [by (\ref{cd})]. The most complicated part of the geometry in the integrand comes
from the denominator. This denominator is seen in the $c_2$-invariant.
\begin{lem}[Corollary 28 and Theorem 29 in \cite{K3}]
For any graph $G$ with a 3-valent vertex we have
\begin{equation}\label{c2dod}
c_2^{(q)}(G)\equiv-[\Psi^{13,23}(G)\Psi^{1,2}_3(G)]_q\mod q.
\end{equation}
\end{lem}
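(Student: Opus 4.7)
My plan is to prove (\ref{c2dod}) by three successive denominator reductions of $[\Psi_G]_q$, eliminating the variables $\alpha_1,\alpha_2,\alpha_3$ in turn and tracking everything modulo $q^3$; since $c_2^{(q)}(G) \equiv [\Psi_G]_q/q^2 \bmod q$ by Definition \ref{c2}, this suffices. At each step I would use the linearity of the relevant Dodgsons in a single variable from (\ref{cd}), together with two elementary point-counting identities:
\begin{equation*}
[A\alpha+B]_q = q^k - [A]_q + q[A,B]_q
\end{equation*}
for a single linear polynomial in $\alpha$, and
\begin{equation*}
[A\alpha+B,\,C\alpha+D]_q = [AD-BC]_q - [A,C]_q + q[A,B,C,D]_q
\end{equation*}
for a pair (where $k$ counts the auxiliary variables and $[\cdot,\cdot]_q$ denotes the common vanishing count). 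Both identities follow from a direct case split on whether the leading coefficients vanish.

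Step 1 (reducing $\alpha_1$) applies the single-linear identity to $\Psi = \alpha_1\Psi^{1,1}+\Psi_1$, yielding $[\Psi]_q = q^{N-1} - [\Psi^{1,1}]_q + q[\Psi^{1,1},\Psi_1]_q$. For primitive graphs with enough vertices, $q^{N-1}$ is divisible by $q^3$ and $[\Psi^{1,1}]_q = [\Psi_{G/1}]_q$ is divisible by $q^2$ by Theorem 2.9 of \cite{SFq}. Step 2 (reducing $\alpha_2$) uses the two-linear identity on $\Psi^{1,1} = \alpha_2 \Psi^{12,12} + \Psi^{1,1}_2$ and $\Psi_1 = \alpha_2 \Psi^{2,2}_1 + \Psi_{12}$. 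The key input is the Dodgson identity (\ref{did}) with $I=J=K=\emptyset$, $e=1$, $f=2$, giving $\Psi^{12,12}\Psi_{12} - \Psi^{1,1}_2 \Psi^{2,2}_1 = -(\Psi^{1,2})^2$; since $-X^2$ and $X$ have the same zero set, the main piece equals $[\Psi^{1,2}]_q$, and correction counts such as $[\Psi^{12,12},\Psi^{2,2}_1]_q$ gain an extra factor of $q$ because those Dodgsons are independent of $\alpha_2$.

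Step 3 (reducing $\alpha_3$) invokes the 3-valent vertex at edges $\{1,2,3\}$ crucially. By Lemma \ref{lemcuts}, $\Psi^{123,123}=0$ since the word $123$ cuts $G$ there, so by (\ref{cd}) $\Psi^{12,12}$ is independent of $\alpha_3$. Applying (\ref{did}) with $I=J=\{3\}$, $e=1$, $f=2$ produces the square factorization
\begin{equation*}
\Psi^{13,13}_2\,\Psi^{23,23}_1 = (\Psi^{13,23})^2.
\end{equation*}
Coupled with the expansion $\Psi^{1,2} = \alpha_3 \Psi^{13,23} + \Psi^{1,2}_3$ from (\ref{cd}) and the inclusion--exclusion formula $[FG]_q = [F]_q + [G]_q - [F,G]_q$, the remaining contributions assemble into $-q^2[\Psi^{13,23}\Psi^{1,2}_3]_q$ modulo $q^3$; dividing by $q^2$ gives (\ref{c2dod}).

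The hard part is the bookkeeping: after Step 2 one is left with several residues (the main term $[\Psi^{1,2}]_q$, the correction $[\Psi^{12,12},\Psi^{2,2}_1]_q$, and a contribution from $[\Psi^{1,1}]_q/q^2$), each with different divisibility by $q$. Only the 3-valent vertex assumption, via Lemma \ref{lemcuts} and the resulting square factorizations of Dodgson identities, forces these residues to combine into the clean product $\Psi^{13,23}\Psi^{1,2}_3$ rather than into just one of its factors; without that assumption Steps 1 and 2 still go through but the third reduction fails to close.
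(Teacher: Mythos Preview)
The paper does not give its own proof of this lemma; it cites Corollary~28 and Theorem~29 of \cite{K3} and states at the start of Section~\ref{sectc2} that proofs in that section are omitted in favour of the literature. So there is no in-paper argument to compare against.

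Your outline is essentially the strategy carried out in \cite{K3}: reduce one variable at a time using the linearity from (\ref{cd}), apply the Dodgson identity (\ref{did}) to recognise the successive resultants as perfect squares, and invoke Lemma~\ref{lemcuts} at the 3-valent vertex to kill $\Psi^{123,123}$. Two small corrections: by (\ref{cd}) one has $\Psi^{1,1}(G)=\Psi(G\backslash 1)$, not $\Psi(G/1)$ (this does not affect your $q^2$-divisibility claim, since $G\backslash 1$ retains all vertices of $G$); and the lemma does not assume primitivity, only the presence of a 3-valent vertex, so your appeal to ``primitive graphs with enough vertices'' imports a hypothesis that is stronger than needed. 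The real content, as you acknowledge, is the Step~3 bookkeeping: the residual terms $-[\Psi^{1,1}]_q$, $q[\Psi^{12,12},\Psi^{2,2}_1]_q$, and the pieces coming from $[\Psi^{1,2}]_q$ must all be shown to combine into $-q^2[\Psi^{13,23}\Psi^{1,2}_3]_q$ modulo $q^3$. In \cite{K3} this is handled through a chain of auxiliary divisibility lemmas for point-counts of Dodgsons and their intersections (ultimately Chevalley--Warning-type statements); your sketch correctly identifies the ingredients but would need those lemmas spelled out to be a complete proof.
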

Note that any primitive graph with loop order $>1$ has at least four 3-valent vertices.

By inclusion-exclusion the right hand side of (\ref{c2dod}) is $-[\Psi^{13,23}]_q-[\Psi^{1,2}_3]_q+[\Psi^{13,23},\Psi^{1,2}_3]_q$ where the last term is
the point-count of the intersection. Because (by linearity in each variable, see Corollary 2.6 in \cite{SFq}) $[\Psi^{13,23}]_q\equiv[\Psi^{1,2}_3]_q\equiv0\mod q$
the $c_2$ is the point-count of the intersection of two Dodgson polynomials. In general we define:

\begin{defn}\label{defdodpair}
A {\em Dodgson pair} is a pair of homogeneous polynomials in $\ZZ[\alpha_1,\ldots,\alpha_N]$ of degrees $d,N-d>0$ such that
each polynomial is linear in each variable. A Dodgson intersection is the intersection of a Dodgson pair.
\end{defn}

Examples of Dodgson pairs are ($\Psi^{1,1}$, $\Psi_1$) and ($\Psi^{13,23}$, $\Psi^{1,2}_3$). Table \ref{tab3} at the end of the article contains a list of Dodgson pairs with low-dimensional
intersections. The resultant $[\Phi_1,\Phi_2]_\alpha$ of a Dodgson pair $\Phi_1$, $\Phi_2$ with respect
to any variable $\alpha$ parametrizes the Dodgson intersection. It is of degree $N-1$ in $N-1$ variables and has zero locus of dimension $N-3$ as a projective variety.
Dodgson intersections of dimensions 0 and 1 are square roots and elliptic curves, respectively.

One may continue to integrate out variables in the period integral as long as the polynomials factorize to produce a variable which is linear in all factors (see \cite{Bperiods,Phyperint}).
In this case we obtain a sequence of denominators which are quadratic in each variable. If we focus on the denominator the condition for staying clear of
roots is that the denominator factorizes (as it does after three integrations). If each factor is linear in all variables then the denominator is the product of a Dodgson pair.

\begin{defn}[Denominator reduction, Definition 120 and Proposition 126 in \cite{Bperiods}]\label{defdr}
Given a graph $G$ with at least three edges and a sequence of edges $1,2,\ldots,{|\sE(G)|}$ we define
$$
^3\Psi_G(1,2,3)=\pm\Psi^{13,23}(G)\Psi^{1,2}_3(G).
$$
Suppose $^n\Psi_G$ for $n\geq3$ factorizes as
$$
^n\Psi_G(1,\ldots,n)=(A\alpha_{n+1}+B)(C\alpha_{n+1}+D)
$$
then we define
$$
^{n+1}\Psi_G(1,\ldots,n+1)=\pm(AD-BC).
$$
Otherwise denominator reduction terminates at step $n$. If it exists we call $^n\Psi_G$ an $n$-invariant of $G$.
If $^n\Psi_G=0$ for some sequence of edges and some $n$ (and hence for all subsequent $n$) then we say that $G$ has weight drop.
\end{defn}
Note that the $n$-invariants are only defined up to sign.

\begin{thm}[Corollary 125 in \cite{Bperiods}]\label{standarddenomred}
For any graph $G$ with at least five edges all 5-invariants exist. Explicitly, one has
\begin{equation}\label{5invariant}
^5\Psi_G(1,2,3,4,5)=\pm\det\left(\begin{array}{cc}
 \Psi^{125,345}&\Psi^{135,245}\\
 \Psi^{12,34}_5&\Psi^{13,24}_5\end{array}\right).
\end{equation}
For $n\geq5$ all $n$-invariants (if existent) do not depend on the order of edges.
\end{thm}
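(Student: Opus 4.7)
The plan is to unfold the definition of $^n\Psi_G$ inductively for $n=3,4,5$, using contraction-deletion (Lemma \ref{lemcd}) at each step to exhibit the required linear factorization in the new variable, and then applying the Dodgson identity (\ref{did}) to the resulting difference $AD-BC$ to collapse it back into a product of Dodgson polynomials. This guarantees that the factorization hypothesis is met at the next step and produces closed formulas along the way.

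Concretely, start from $^3\Psi_G(1,2,3)=\pm\Psi^{13,23}\Psi^{1,2}_3$. Contraction-deletion gives $\Psi^{13,23}=\alpha_4\Psi^{134,234}+\Psi^{13,23}_4$ and $\Psi^{1,2}_3=\alpha_4\Psi^{14,24}_3+\Psi^{1,2}_{34}$, so
$$^4\Psi_G=\pm\bigl(\Psi^{134,234}\Psi^{1,2}_{34}-\Psi^{13,23}_4\Psi^{14,24}_3\bigr).$$
The Dodgson identity (\ref{did}) with $I=1$, $J=2$, $K=\emptyset$, $e=3$, $f=4$ identifies the right-hand side, up to sign, with $\Psi^{13,24}\Psi^{14,23}$, a product of a new Dodgson pair; in particular the 4-invariant exists for every graph with at least four edges. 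Splitting each factor as linear in $\alpha_5$ and forming the resultant once more yields
$$^5\Psi_G=\pm\bigl(\Psi^{135,245}\Psi^{14,23}_5-\Psi^{13,24}_5\Psi^{145,235}\bigr),$$
so $^5\Psi_G$ exists for every graph with at least five edges.

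To rewrite this in the claimed determinantal form one needs a second-order identity, of Plücker/Jacobi type, relating the three Dodgsons $\Psi^{12,34}_5,\Psi^{13,24}_5,\Psi^{14,23}_5$ (and likewise the three $\Psi^{ij5,kl5}$) corresponding to the three pair-partitions of $\{1,2,3,4\}$. This relation, which lies one level beyond the basic Dodgson identity (\ref{did}) and arises from Jacobi's identity applied to the submatrix of $L^{v,v}_5$ indexed by $\{1,2,3,4\}$, allows one to trade the pair $(\Psi^{145,235},\Psi^{14,23}_5)$ for $(\Psi^{125,345},\Psi^{12,34}_5)$ and thereby produces the stated $2\times 2$ determinant. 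Establishing this Plücker-type relation cleanly within the sign convention of Section \ref{sectdp} is the main technical obstacle.

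For order-independence at $n=5$, observe that the determinantal formula is expressed in terms of the three pair-partitions of $\{1,\ldots,5\}$ into an unordered triple and an unordered pair, hence is already symmetric under swaps within each block; the remaining generators of $S_5$ can be handled by rerunning the two reduction steps above with edges permuted, verifying that each transposition alters the invariant only by a controlled overall sign. For $n>5$, invariance under reordering propagates by induction: given symmetry at step $n$, the bilinear factorization in $\alpha_{n+1}$ and its resultant are symmetric in all $n+1$ variables, so $^{n+1}\Psi_G$ does not depend on which variable was eliminated last, and hence (iteratively) not on the entire ordering.
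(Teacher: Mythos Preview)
Your route is the direct algebraic one (essentially the argument of \cite{Bperiods}), whereas the paper here does not reprove the statement but invokes the analytic interpretation: $^5\Psi_G$ is the unique denominator of the period integrand after five integrations, so the formula and its $S_5$-symmetry come for free by Fubini (with the remark that one can alternatively use (\ref{c0}) on the 3-invariant). Your computation is correct through the existence of $^5\Psi_G$: the Dodgson identity (\ref{did}) with $I=1,J=2,e=3,f=4$ does collapse $^4\Psi_G$ to $\pm\Psi^{13,24}\Psi^{14,23}$, and the next resultant then visibly exists in the form $\pm(\Psi^{135,245}\Psi^{14,23}_5-\Psi^{13,24}_5\Psi^{145,235})$.

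There are, however, two genuine gaps you have not closed. First, matching your expression to the stated determinant (\ref{5invariant}) does require the linear three-term relation $\Psi^{12,34}_K-\Psi^{13,24}_K+\Psi^{14,23}_K=0$ (and its analogue with edge $5$ in the superscripts). This is \emph{not} a consequence of the quadratic Dodgson identity (\ref{did}); it is a separate Pl\"ucker-type determinant identity (Lemma~28 in \cite{Bperiods}) and is not derived in Section~\ref{sectdp}. You correctly flag it as the obstacle but do not prove it. Second, your order-independence argument is circular: the determinant (\ref{5invariant}) visibly singles out edge $5$ (it sits in every subscript and in every triple superscript), so ``symmetry within blocks'' does not give transpositions involving $5$, and ``rerunning the reduction with edges permuted'' only shows that each ordering produces \emph{some} $2\times 2$ determinant of this shape---identifying them again needs the same three-term relation. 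The paper sidesteps both issues via the integral interpretation; if you want to stay purely algebraic you must supply that Pl\"ucker identity explicitly.
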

In Corollary 125 in \cite{Bperiods} it is proved that after five integrations the integrand of the period has the 5-invariant as unique denominator.
It follows that the 5-invariant does not depend on the order of edges (one can also use (\ref{c0}) on the 3-invariant).
For $n>5$ invariance follows by Fubini's theorem and the connection to denominators of the integrand (or by direct computation, see the remark after Definition 14 in \cite{K3}).
Note that for some sequences of edges denominator reduction may terminate sooner than for others. It is not clear in general what an ideal sequence of edges is for getting high invariants.

Denominator reduction is compatible with the $c_2$-invariant:

\begin{thm}[Theorem 29 in \cite{K3}]\label{denredthm}
Let $G$ be a connected graph with at least three edges and $2h_1(G)\leq|\sE(G)|$ then,
\begin{equation}\label{c2frominvariant}
c_2^{(q)}(G)\equiv (-1)^n [^n\Psi_G]_q\mod q
\end{equation}
whenever $^n\Psi_G$ exists for $n<|\sE(G)|$. If $2h_1(G)<|\sE(G)|\geq4$, then $c_2^{(q)}(G)\equiv0\mod q$.
\end{thm}
The above theorem was proved for a minimum of five edges in \cite{K3} (where $^n\Psi_G$ was only defined for $n\geq5$). It trivially extends to the case
of three or four edges.

Note that the point-count $[^n\Psi_G]_q$ is well-defined for all prime powers although the $n$-invariant is only defined up to sign.
Because often denominator reduction is possible for many steps, the above theorem is a powerful tool to determine the $c_2$ for small $q$.
By experiment, however, we know that denominator reduction for prime ancestors almost never provides all possible reductions.

\section{The Legendre symbol for $\FF_q$}\label{sectls}
The Legendre symbol $(a/p)$ is $\pm1$ depending on whether or not $a\neq0$ is a square in $\FF_p$ (and $(0/p)=0$).
We need an analogous definition for $\FF_q$. In this section we assume that $q=p^n$ for an odd prime $p$.
We project $\ZZ$ onto $\FF_p\subseteq\FF_q$ by $a\mapsto a\cdot1$, $1\in\FF_q$. Because $p$ is odd, $-1,0,1$ are mutually distinct in $\FF_q$.
We identify these three integers with their images in $\FF_q$.
\begin{defn}\label{deflege}
For any $a\in\FF_q$ the Legendre symbol $(a/q)\in\{-1,0,1\}$ is defined by
\begin{equation}\label{legedef}
\left(\frac{a}{q}\right)=|\{x\in\FF_q:x^2=a\}|-1.
\end{equation}
For any polynomial $F\in\ZZ[\alpha_1,\ldots,\alpha_N]$ we define
\begin{equation}\label{Fqdef}
(F)_q=\sum_{\alpha\in\FF_q^N}\left(\frac{F(\alpha)}{q}\right),
\end{equation}
where the sum is in $\ZZ$.
\end{defn}

Some elementary properties of the Legendre symbol are summarized in the following lemma.
\begin{lem}
\begin{enumerate}
\item The Legendre symbol is multiplicative,
\begin{equation}\label{legemult}
\left(\frac{ab}{q}\right)=\left(\frac{a}{q}\right)\left(\frac{b}{q}\right).
\end{equation}
\item If $a\in\ZZ$ (and $q=p^n$) the Legendre symbol $(a/q)$ agrees with the Jacobi symbol,
$$
\left(\frac{a}{q}\right)=\left(\frac{a}{p}\right)^n.
$$
\item
For any $a\in\FF_q$,
\begin{equation}\label{legeexpl}
\left(\frac{a}{q}\right)=a^{\frac{q-1}{2}}.
\end{equation}
\item
For any $A\in\FF_q^\times$ and $B\in\FF_q$,
\begin{equation}\label{legesum}
(A\alpha+B)_q=0.
\end{equation}
\item
For any $F\in\ZZ[\alpha_1,\ldots,\alpha_N]$ the point-count $[F]_q$ can be expressed in terms of the Legendre symbol as
\begin{equation}\label{legepc}
[F]_q=q^N-(F^2)_q.
\end{equation}
\end{enumerate}
\end{lem}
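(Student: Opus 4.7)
The plan is to prove part (3) first, since the explicit formula $(a/q) = a^{(q-1)/2}$ turns the Legendre symbol into a familiar algebraic object and makes parts (1) and (2) easy consequences. Parts (4) and (5) are then direct combinatorial arguments.

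For part (3), the key input is that $\FF_q^\times$ is cyclic of order $q-1$. For $a=0$ both sides of $(a/q)=a^{(q-1)/2}$ vanish, so take $a \in \FF_q^\times$. Then $(a^{(q-1)/2})^2 = a^{q-1} = 1$, so $a^{(q-1)/2} \in \{\pm 1\}$ (using $p$ odd, so $-1 \neq 1$ in $\FF_q$). The squaring map on $\FF_q^\times$ has kernel $\{\pm 1\}$, hence image of index two, so $a$ is a nonzero square iff $a^{(q-1)/2} = 1$. Comparing with the definition (\ref{legedef}) (which gives $1$ for nonzero squares, $-1$ for non-squares, $0$ for $a=0$) proves (\ref{legeexpl}). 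Multiplicativity (\ref{legemult}) then follows because $(ab)^{(q-1)/2} = a^{(q-1)/2} b^{(q-1)/2}$.

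For part (2), factor $(q-1)/2 = \frac{p-1}{2}(1 + p + p^2 + \cdots + p^{n-1})$, which is legitimate because $p$ is odd. Then for $a \in \ZZ$, viewed inside $\FF_q$ via the prime subfield,
\[
\Bigl(\tfrac{a}{q}\Bigr) = a^{(q-1)/2} = \bigl(a^{(p-1)/2}\bigr)^{1+p+\cdots+p^{n-1}} = \Bigl(\tfrac{a}{p}\Bigr)^{1+p+\cdots+p^{n-1}}.
\]
Since each $p^i$ is odd, the exponent $1+p+\cdots+p^{n-1}$ has the same parity as $n$, and because $(a/p) \in \{-1,0,1\}$ this exponent can be reduced mod $2$, yielding $(a/p)^n$. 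This parity-of-exponent step is the only mildly delicate bit; the rest is bookkeeping.

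For part (4), the map $\alpha \mapsto A\alpha + B$ is a bijection of $\FF_q$ when $A \neq 0$, so
\[
(A\alpha+B)_q = \sum_{\alpha \in \FF_q}\Bigl(\tfrac{A\alpha+B}{q}\Bigr) = \sum_{\beta \in \FF_q}\Bigl(\tfrac{\beta}{q}\Bigr),
\]
and the last sum vanishes because $\FF_q^\times$ contains exactly $(q-1)/2$ squares and $(q-1)/2$ non-squares (by the index-$2$ argument above), while $\beta=0$ contributes $0$. For part (5), the point is simply that $(F(\alpha)^2/q) = 1$ whenever $F(\alpha) \neq 0$ (since a nonzero square has Legendre symbol $1$) and $(F(\alpha)^2/q) = 0$ when $F(\alpha) = 0$. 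Thus $1 - (F(\alpha)^2/q)$ is the indicator of $F(\alpha) = 0$, and summing over $\alpha \in \FF_q^N$ gives $[F]_q = q^N - (F^2)_q$. The main obstacle, such as it is, lies only in the parity argument of part (2); every other part reduces to a one-line computation once (\ref{legeexpl}) is in hand.
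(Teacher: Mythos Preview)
Your proof is correct. The arguments for parts (3), (4), and (5) are essentially identical to the paper's, and your derivation of (1) from (3) is a clean alternative to the paper's direct index-two subgroup argument.

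The one genuinely different step is part (2). The paper argues field-theoretically: a non-square $a\in\FF_p^\times$ becomes a square in $\FF_q$ precisely when $\FF_q$ contains the quadratic extension $\FF_{p^2}$, i.e.\ when $n$ is even. You instead compute directly from (\ref{legeexpl}) by factoring $(q-1)/2=\tfrac{p-1}{2}(1+p+\cdots+p^{n-1})$ and reducing the exponent modulo $2$. Your route is more elementary and self-contained (it uses nothing beyond (\ref{legeexpl}) and parity), while the paper's route is shorter to state and makes the structural reason visible. Both are perfectly fine; there is no gap in either.
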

\begin{proof}
For multiplicativity it suffices that any $a\in\FF_q^\times$ is in half the cases a square and in half the cases a non-square, so that the product of two non-squares is a square.

For (2) we observe that the unique quadratic extension of $\FF_p$ is $\FF_{p^2}$ which is in $\FF_q$ if and only if $n$ is even.

For (3) we may assume that $a\neq0$. The multiplicative group $\FF_q^\times$ is cyclic. If $b$ is a generator of $\FF_q^\times$ then $a=b^m$ for some integer $m\geq0$.
If $m$ is even then $a$ is the square of $b^{m/2}$ and $(a/q)=1$, otherwise $(a/q)=-1$. Substituting $a=b^m$ into the right hand side of (\ref{legeexpl}) gives the result
because $m(q-1)/2\equiv0\mod q-1$ if and only if $m$ is even.

For (4) we transform $\alpha$ in the sum by the bijection $\alpha\mapsto(\alpha-B)/A$ yielding $(A\alpha+B)_q=(\alpha)_q$. We have $(0/q)=0$.
For $\alpha\in\FF_q^\times$, half the $\alpha$ have $(\alpha/q)=1$ while the other half has $(\alpha/q)=-1$.

For (5) we observe that $(F^2/q)=1-\chi(F)$ with the characteristic function $\chi(F)$ being 1 if $F=0$ in $\FF_q$ and 0 otherwise. The result follows by summing over
$\alpha\in\FF_q^N$.
\end{proof}

Equation (\ref{legeexpl}) gives a Chevalley-Warning method to calculate $(F)_q$ modulo $p$ (see e.g.\ \cite{Sa}).
\begin{lem}\label{lemsumx}
Let $1\leq k\leq q-2$. Then
\begin{equation}\label{sumx}
\sum_{\alpha\in\FF_q}\alpha^k=0.
\end{equation}
\end{lem}
\begin{proof}
Because $k\geq1$ we can drop 0 from the sum and write $\alpha=b^m$ for a generator $b$ of $\FF_q^\times$. The sum becomes a geometric series where $m$ ranges from 0 to $q-2$.
Because $k\leq q-2$ we have $b^k\neq1$ and summing up the series provides a numerator $b^{k(q-1)}-1$. From $b^{q-1}=1$ we deduce the result.
\end{proof}

\begin{lem}[Chevalley-Warning]\label{lemcw}
Let $F\in\ZZ[\alpha_1,\ldots,\alpha_N]$ be of degree $\leq 2N$. Then
\begin{equation}\label{chewar}
(-1)^N(F)_q\equiv\hbox{coefficient of $(\alpha_1\cdots\alpha_N)^{q-1}$ in }F^\frac{q-1}{2}\mod p.
\end{equation}
In particular, $(F)_q\equiv0\mod p$ if $\deg(F)<2N$.
\end{lem}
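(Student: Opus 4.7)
The plan is to start from Fermat's criterion (\ref{legeexpl}) and rewrite
\[
(F)_q = \sum_{\alpha\in\FF_q^N}\left(\frac{F(\alpha)}{q}\right) = \sum_{\alpha\in\FF_q^N} F(\alpha)^{(q-1)/2},
\]
where the second equality holds because $0^{(q-1)/2}=0$ agrees with the convention $(0/q)=0$. I would then expand $F^{(q-1)/2}$ as a $\ZZ$-linear combination of monomials $\alpha_1^{k_1}\cdots\alpha_N^{k_N}$ and interchange the (finite) sum, reducing the problem to evaluating the one-variable sums
\[
S_k := \sum_{\alpha\in\FF_q}\alpha^k \in \FF_p.
\]

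The core combinatorial step is to identify which monomials survive modulo $p$. Proposition \ref{propsumx} gives $S_k \equiv 0 \mod p$ for $1\leq k\leq q-2$, and trivially $S_0 = q \equiv 0\mod p$. For $k\geq 1$ with $(q-1)\mid k$, the nonzero elements all satisfy $\alpha^k=1$, so $S_k = q-1 \equiv -1 \mod p$; for other $k\geq 1$, we write $k = m(q-1)+r$ with $1\leq r\leq q-2$ and use $\alpha^{q-1}=1$ on $\FF_q^\times$ to reduce to $S_r \equiv 0$. Thus the product $\prod_i S_{k_i}$ is nonzero mod $p$ exactly when every $k_i$ is a positive multiple of $q-1$, and in that case the product equals $(-1)^N$.

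Now I would combine this with the degree bound. Since $\deg F \leq 2N$, each monomial in $F^{(q-1)/2}$ satisfies $\sum_i k_i \leq 2N\cdot(q-1)/2 = N(q-1)$. Combined with the constraint that each surviving $k_i \geq q-1$, we are forced into equality $k_i = q-1$ for all $i$, so the \emph{only} monomial that can contribute modulo $p$ is $(\alpha_1\cdots\alpha_N)^{q-1}$. Denoting its coefficient in $F^{(q-1)/2}$ by $c$, we obtain
\[
(F)_q \equiv c \cdot (-1)^N \mod p,
\]
which is (\ref{chewar}). The final assertion follows immediately: if $\deg F < 2N$ then $\deg F^{(q-1)/2} < N(q-1)$, so $c = 0$ and $(F)_q \equiv 0 \mod p$.

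There is essentially no hard step here; the argument is a direct Chevalley--Warning-style computation. The only thing to be careful about is the simultaneous use of the upper and lower constraints on the exponents $k_i$ to pin them down uniquely, and the bookkeeping of the sign $(-1)^N$ coming from $S_{q-1}\equiv -1$ in each of the $N$ variables.
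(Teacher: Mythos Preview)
Your proof is correct and follows essentially the same route as the paper: use Fermat's criterion (\ref{legeexpl}) to pass to $F^{(q-1)/2}$, expand into monomials, apply Proposition~\ref{propsumx} (together with $S_0\equiv0$) to kill all monomials except those with each exponent a positive multiple of $q-1$, and then use the degree bound $N(q-1)$ to force every exponent to equal $q-1$, picking up the sign $(-1)^N$ from $S_{q-1}\equiv-1$. The only cosmetic point is that your first displayed equality is really a congruence in $\FF_q$ (hence modulo $p$), as the paper makes explicit by saying it embeds the sum into $\FF_q$; otherwise the arguments match.
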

\begin{proof}
We use (\ref{legeexpl}) to calculate $(F)_q$. We project the sum (\ref{Fqdef}) onto $\FF_p\subseteq\FF_q$ which amounts to calculating modulo $p$. We use (\ref{legeexpl})
and expand $F^{(q-1)/2}$. With $\sum_{\alpha\in\FF_q}1=q\equiv0\mod p$ and Lemma \ref{lemsumx} we see that we need at least an exponent $q-1$ in each variable to obtain a non-vanishing term.
The unique term of lowest degree with this property is $c(\alpha_1\cdots\alpha_N)^{q-1}$, for some coefficient $c\in\ZZ$.
Because the degree of $F^{(q-1)/2}$ is at most $N(q-1)$ this term is the only contribution to the sum. The sum over this monomial can be restricted to $\FF_q^\times$.
For any $\alpha_i\in\FF_q^\times$ we have $\alpha_i^{q-1}=1$. Hence the sum gives $c(q-1)^N\equiv c(-1)^N\mod p$. If $\deg(F)<2N$ then $c=0$.
\end{proof}
Equation (\ref{chewar}) does not hold modulo $q$. A simple counter-example is $F=(2\alpha)^2$ and $q=9$. We get $-(F)_9\equiv[2\alpha]_9\equiv1\mod 9$ whereas the $\alpha^8$ coefficient of $F^4$
is $2^8\equiv4\mod9$. In this article we solely need the second statement of the lemma which extends to prime powers.
\begin{cor}[of Theorem \ref{thm:main}]\label{corcw}
Let $N\geq1$ and $F\in\ZZ[\alpha_1,\ldots,\alpha_N]$ be of degree $<2N$. Then $(F)_q\equiv0\mod q$ for any odd prime power $q$.
\end{cor}
\begin{proof}
In the notation of Theorem \ref{thm:main} we get $0\equiv|X(k)|=[y^2-F]_q=(F)_q+q^N\equiv(F)_q\mod q$, where we summed (\ref{legedef}) over $\alpha\in\FF_q^N$.
\end{proof}

\section{Quadratic denominator reduction}\label{sectionqdr}
The next integration after denominator reduction terminates is schematically (ignoring numerators in the integrand)
$$
\int_0^\infty\frac{\dd\alpha}{A\alpha^2+B\alpha+C}=\frac{\log(X)}{\sqrt{B^2-4AC}},
$$
for some algebraic $X$. The discriminant $B^2-4AC$ will be of degree $\leq4$ in each variable.
For the next integration we may extract square factors in the root and only enter the elliptic setup if thereafter the argument of the root is of degree $\geq3$ in each variable.
Otherwise we schematically have the structure
$$
\int_0^\infty\frac{\dd\alpha}{\sqrt{D\alpha^2+E\alpha+F}(H\alpha+J)}=\frac{\log(Y)}{\sqrt{DJ^2-EHJ+FH^2}}.
$$
The root on the right hand side may be seen as residue of the integrand at $\alpha=-J/H$.
The general idea of quadratic denominator reduction is to use these structures to continue eliminating variables as long as the geometry of the denominator remains rational.

\begin{defn}[Quadratic denominator reduction]\label{defqdr}
Given a graph $G$ with at least three edges and a sequence of edges $1,2,\ldots,{|\sE(G)|}$ we define
\begin{equation}\label{n3}
^3\Psi^2_G(1,2,3)=[\pm^3\Psi_G(1,2,3)]^2=[\Psi^{13,23}(G)\Psi^{1,2}_3(G)]^2.
\end{equation}
Suppose $^n\Psi^2_G$ for $n\geq3$ is of the form
\begin{equation}\label{case1}
^n\Psi^2_G(1,\ldots,n)=(A\alpha_{n+1}^2+B\alpha_{n+1}+C)^2
\end{equation}
then we define
$$
^{n+1}\Psi^2_G(1,\ldots,n+1)=B^2-4AC.
$$
Suppose $^n\Psi^2_G$ is of the form
\begin{equation}\label{case2}
^n\Psi^2_G(1,\ldots,n)=(D\alpha_{n+1}^2+E\alpha_{n+1}+F)(H\alpha_{n+1}+J)^2
\end{equation}
then we define
$$
^{n+1}\Psi^2_G(1,\ldots,n+1)=DJ^2-EHJ+FH^2.
$$
Otherwise quadratic denominator reduction terminates at step $n$. If it exists we call $^n\Psi^2_G$ a quadratic $n$-invariant of $G$.
If $^n\Psi^2_G=0$ for some sequence of edges and some $n$ then we say that $G$ has weight drop.
\end{defn}
Note that quadratic $n$-invariants have no sign ambiguity. For primitive graphs they have degree $2(|\sE(G)|-n)$ [see (\ref{doddeg})].
Compatibility of the cases (\ref{case1}) and (\ref{case2}) with standard denominator reduction is proved in the next lemma (also see Proposition 126 in \cite{Bperiods}).

\begin{lem}\label{lemquadratic}
In the case of both structures (\ref{case1}) and (\ref{case2}) standard denominator reduction can be used for $^n\Psi_G:=\pm(^n\Psi_G^2)^{1/2}$. Any of the two quadratic reductions leads to
$$
^{n+1}\Psi^2_G(1,\ldots,n+1)=[\pm^{n+1}\Psi_G(1,\ldots,n+1)]^2.
$$
\end{lem}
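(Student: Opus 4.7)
The plan is to handle the two structures separately and show that, in each case, standard denominator reduction produces a polynomial whose square agrees with the quadratic invariant of Definition \ref{defqdr}.

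In case (\ref{case1}), the polynomial $^n\Psi^2_G = (A\alpha_{n+1}^2+B\alpha_{n+1}+C)^2$ is manifestly a polynomial square, so up to a global sign $^n\Psi_G = A\alpha_{n+1}^2+B\alpha_{n+1}+C$. Standard denominator reduction applies exactly when this quadratic factors into two linear polynomials $(P\alpha_{n+1}+Q)(R\alpha_{n+1}+S)$, which gives the Vieta relations $A=PR$, $B=PS+QR$, $C=QS$. The standard reduction then yields $^{n+1}\Psi_G = \pm(PS-QR)$, and the algebraic identity
\[
(PS-QR)^2 \;=\; (PS+QR)^2 - 4(PR)(QS) \;=\; B^2 - 4AC
\]
is precisely the right-hand side of the definition of $^{n+1}\Psi^2_G$ in case (\ref{case1}).

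In case (\ref{case2}), we have $^n\Psi^2_G = (A\alpha_{n+1}^2+B\alpha_{n+1}+C)(D\alpha_{n+1}+E)^2$. Under the hypothesis that $^n\Psi_G$ is itself a polynomial, $^n\Psi^2_G$ must be a square in the UFD $R[\alpha_{n+1}]$ with $R=\ZZ[\alpha_{n+2},\ldots]$. Since $(D\alpha_{n+1}+E)^2$ is already a square there, the quadratic factor $A\alpha_{n+1}^2+B\alpha_{n+1}+C$ must also be a polynomial square, say $(F\alpha_{n+1}+G)^2$, giving $A=F^2$, $B=2FG$, $C=G^2$. A short case analysis in $R[\alpha_{n+1}]$ confirms this: any alternative factorisation—irreducible quadratic, or a single shared factor with $D\alpha_{n+1}+E$—would leave some irreducible with odd multiplicity in $^n\Psi^2_G$. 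Consequently $^n\Psi_G = \pm(F\alpha_{n+1}+G)(D\alpha_{n+1}+E)$ is automatically the factorisation required by standard reduction, which gives $^{n+1}\Psi_G = \pm(FE-GD)$ and
\[
(FE-GD)^2 \;=\; F^2E^2 - 2FGDE + G^2D^2 \;=\; AE^2 - BDE + CD^2 \;=\; {}^{n+1}\Psi^2_G.
\]

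The only genuine obstacle is the UFD step in case (\ref{case2}): one must invoke unique factorisation in $R[\alpha_{n+1}]$ to upgrade the structural hypothesis on $^n\Psi^2_G$ into a clean polynomial square root of the quadratic factor, ruling out all the spurious configurations that would leave an irreducible factor of odd multiplicity. Once that is established, both cases collapse to routine polynomial bookkeeping, with signs matching because $^{n+1}\Psi_G$ and $(^{n+1}\Psi^2_G)^{1/2}$ are each only defined up to an overall sign anyway.
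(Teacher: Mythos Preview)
Your proof is correct and follows essentially the same approach as the paper: both arguments reduce to showing that when (\ref{case1}) and (\ref{case2}) hold simultaneously one has $^n\Psi^2_G=(a\alpha_{n+1}+b)^2(c\alpha_{n+1}+d)^2$, and then verifying that each of the two quadratic reduction formulas returns $(ad-bc)^2=[^{n+1}\Psi_G]^2$. The paper simply asserts this factorisation at the outset, whereas you derive it via unique factorisation in $R[\alpha_{n+1}]$ during your case~(\ref{case2}) analysis; the remaining algebra (your $P,Q,R,S$ and $F,G,D,E$ versus the paper's $a,b,c,d$) is identical.
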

\begin{proof}
If a perfect square (\ref{case1}) of degree 4 in $\alpha_{n+1}$ factorizes according to (\ref{case2}) then it is of the form
$$
(a\alpha_{n+1}+b)^2(c\alpha_{n+1}+d)^2.
$$
In case (\ref{case1}) we have $A=ac$, $B=ad+bc$, and $D=bd$ leading to
$$
^{n+1}\Psi^2_G(1,\ldots,n+1)=(ad+bc)^2-4acbd=(ad-bc)^2.
$$
In case (\ref{case2}) we have the ambiguity of moving a common square constant in the first factor to the second factor and vice versa.
The reduction being of degree $(1,2)$ in $D,E,F$ and $H,J$ is compatible with this ambiguity. Moreover, standard denominator reduction is symmetric under swapping the
two linear factors. We can hence restrict ourselves to the case $D=a^2$, $E=2ab$, $F=b^2$, $H=c$, $J=d$ in which we obtain
$$
^{n+1}\Psi^2_G(1,\ldots,n+1)=a^2d^2-2abcd+b^2c^2=(ad-bc)^2.
$$
\end{proof}

Although one might expect that case (\ref{case2}) is rare in practice, by experiment we find that rather the contrary is true.
With each loop order a small but increasing number of prime ancestors even seems to reduce to a constant. In \cite{Hourglass}, K. Yeats and the author prove full reductions for families of
$\phi^4$ ancestors.

The connection to the $c_2$-invariant is similar to the standard case.
\begin{thm}\label{thmquadratic}
Let $q$ be an odd prime power. Let $G$ be a connected graph with at least three edges and $2h_1(G)\leq|\sE(G)|$ then
\begin{equation}\label{c2fromquadratic}
c_2^{(q)}(G)\equiv (-1)^{n-1}(^n\Psi^2_G)_q\mod q
\end{equation}
whenever $^n\Psi^2_G$ exists. If $2h_1(G)<|\sE(G)|\geq4$, then $c_2^{(q)}(G)\equiv0\mod q$.
\end{thm}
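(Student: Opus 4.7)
The plan is induction on $n\ge 3$. For the base case $n=3$, I would combine (\ref{c2dod}), which reads $c_2^{(q)}(G)\equiv -[\Psi^{13,23}\Psi^{1,2}_3]_q\mod q$, with (\ref{legepc}) applied to $F=\Psi^{13,23}\Psi^{1,2}_3$ regarded as a polynomial in the $N=|\sE(G)|-3$ variables $\alpha_4,\ldots,\alpha_{|\sE(G)|}$. The latter yields $[F]_q=q^N-(F^2)_q$, and since $N\ge 1$ we have $q^N\equiv0\mod p$, so $c_2^{(q)}(G)\equiv(F^2)_q=(^3\Psi^2_G)_q\mod p$, which matches $(-1)^{3-1}(^3\Psi^2_G)_q$.

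For the inductive step at level $n\to n+1$, it suffices to establish the sign-flip identity
$$(^n\Psi^2_G)_q+(^{n+1}\Psi^2_G)_q\equiv 0\mod p$$
in each of the two cases of Definition \ref{defqdr}. In case (\ref{case1}) with $\alpha=\alpha_{n+1}$ and $H=A\alpha^2+B\alpha+C$, I would expand $(H^2)_q=\sum_v(q-\#\{\alpha:H(v,\alpha)=0\})$; when $A(v)\ne 0$ the root count is $1+((B^2-4AC)(v)/q)$, with simpler expressions in the three degenerate subcases. A short case analysis, reducing modulo $p$ (so $q\equiv0$ and $q-1\equiv-1$) and identifying the Legendre sum $(B^2-4AC)_q$, should yield
$$(^n\Psi^2_G)_q+(B^2-4AC)_q\equiv[A]_q\mod p.$$
The right-hand side vanishes by (\ref{legepc}) and Lemma \ref{lemcw}: $[A]_q\equiv-(A^2)_q\mod p$, and $(A^2)_q\equiv 0\mod p$ once $2\deg A<2M$, where $M=|\sE(G)|-n-1$ is the number of variables in $A$. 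The bound $\deg A\le\deg H-2\le M-1$ relies on the auxiliary inductive hypothesis $\deg(^n\Psi^2_G)\le 2(|\sE(G)|-n)$, which is initialized at $n=3$ by $\deg(^3\Psi^2_G)=4h_1(G)-6\le 2|\sE(G)|-6$ (via (\ref{doddeg}) and the hypothesis $2h_1(G)\le|\sE(G)|$) and is preserved because every quadratic reduction cuts the total degree by at least two.

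In case (\ref{case2}), write $^n\Psi^2_G=PQ^2$ with $P=A\alpha^2+B\alpha+C$ and $Q=D\alpha+E$. Multiplicativity of the Legendre symbol gives
$$(^n\Psi^2_G)_q=(P)_q-\sum_{v,\alpha:\,Q(v,\alpha)=0}(P(v,\alpha)/q).$$
To evaluate $(P)_q$ (integrating over $\alpha$ first) and the inner partial sum on the degenerate locus $D(v)=E(v)=0$, I would complete the square in $P$ and invoke the classical identity $\sum_\alpha((\alpha^2-X)/q)=-1$ for $X\ne 0$ and $q-1$ for $X=0$, which itself follows from (\ref{legesum}) and Proposition \ref{propsumx} by a short Jacobi-sum rearrangement. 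The generic locus $D(v)\ne 0$ contributes $(R/q)$ at the unique root $\alpha=-E/D$, using $D^2P(-E/D)=AE^2-BDE+CD^2=:R$ and $(D^2/q)=1$. Compiling all partial sums and reducing modulo $p$, the computation should yield
$$(^n\Psi^2_G)_q+(AE^2-BDE+CD^2)_q\equiv -(AD^2)_q\mod p.$$
Lemma \ref{lemcw} again closes the case: with $\deg A\le\deg P-2$, $\deg D\le\deg Q-1$, and $\deg P+2\deg Q\le 2(|\sE(G)|-n)=2(M+1)$, one gets $\deg(AD^2)\le 2M-2<2M$, so $(AD^2)_q\equiv0\mod p$.

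The main obstacle is the meticulous bookkeeping of the case splits: because the Legendre symbol behaves discontinuously on the loci where the leading coefficients of $H$ (or of $P$ and $Q$) vanish in $\FF_q$, partial point counts and partial Legendre sums must be combined carefully to expose the Chevalley-Warning target on the right-hand side. The Jacobi-sum evaluation of $\sum_\alpha((\alpha^2-X)/q)$ needed in case (\ref{case2}) is the technical core; everything else amounts to linear algebra modulo $p$ together with tracking the degree bound $\deg(^n\Psi^2_G)\le 2(|\sE(G)|-n)$ that makes Chevalley-Warning applicable.
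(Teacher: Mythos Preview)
Your proposal is correct and follows essentially the same route as the paper: induction on $n$ with the base case coming from the standard $n$-invariant formula (\ref{c2frominvariant}) together with (\ref{legepc}), the inductive step in the two cases handled by the pointwise Legendre-sum identities that the paper packages as Propositions \ref{prop1}--\ref{prop3}, and the residual terms killed by Chevalley--Warning using the running degree bound $\deg(^n\Psi^2_G)\le 2(|\sE(G)|-n)$. One small remark: in case (\ref{case2}) you isolate the error term as $(AD^2)_q$, which is the exact identity and is directly amenable to Lemma \ref{lemcw}; the paper instead writes $(A)_q$, which is only equal to $(AD^2)_q$ modulo $p$ (both vanish by Chevalley--Warning), so your bookkeeping here is in fact slightly cleaner. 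For the base case you should cite (\ref{c2frominvariant}) rather than (\ref{c2dod}), since the latter assumes edges $1,2,3$ meet at a $3$-valent vertex while the definition of $^3\Psi^2_G$ does not.
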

If the standard $n$-invariant exists we get the statement of the theorem by Lemma \ref{lemquadratic} and (\ref{legepc}).
\begin{remark}
By Chevalley-Warning, Theorem \ref{thmquadratic} extends to the prime $2$:
In characteristic 2, quadratic denominator reduction picks (squares of) linear coefficients in (squares of) quadratic polynomials, see (\ref{case1}) and Lemma \ref{lemquadratic}.
If we define $(a/2)=1-\delta_{a,0}$ for $a\in\FF_2$, we get $(F^2)_2=(F)_2=2^N-[F]_2$ for $F\in\ZZ[\alpha_1,\ldots,\alpha_N]$.
The connection (\ref{legepc}) between the point-count and the Legendre sum stays intact and quadratic denominator reduction picks the coefficient of $\alpha_1\cdots\alpha_N$
in $F$.

A combinatorial point-count for the prime 2 (and beyond) is performed for some families of prime ancestors in \cite{Yc2,Yc2pre}.

It is unclear if Theorem \ref{thmquadratic} extends to all prime powers if one defines $(a/2^n)=1-\delta_{a,0}$ for $a\in\FF_{2^n}$.
\end{remark}

\section{Proof of Theorem \ref{thmquadratic}}\label{sectpf}
Throughout this section $q=p^n$ is an odd prime power. We need a sequence of lemmas.
\begin{lem}
For any $A,B,C\in\FF_q$,
\begin{equation}\label{pc2ls}
[A\alpha^2+B\alpha+C]_q\equiv\left(\frac{B^2-4AC}{q}\right)+\left(\frac{A^2}{q}\right)\mod q.
\end{equation}
\end{lem}
\begin{proof}
For the point-count we get $q$ if $A=B=C=0$, 0 if $A=B=0$, $C\neq0$, and 1 ($\alpha=-C/B$) if $A=0$, $B\neq0$. In any case (\ref{pc2ls}) holds.

If $A\neq0$ then the left hand side equals $|\{x\in\FF_q:x^2=B^2-4AC\}|$. Because $(A^2/q)=1$ we get the result from (\ref{legedef}).
\end{proof}

The next lemma follows modulo $p$ from Lemma \ref{lemcw} if we assume that the constants $A,B,C$ are integers. Here, we need the general case.
\begin{lem}\label{lem2}
For any $A,B,C\in\FF_q$,
\begin{equation}\label{prop2eq}
(A\alpha^2+B\alpha+C)_q\equiv-\left(\frac{A}{q}\right)\mod q.
\end{equation}
\end{lem}
\begin{proof}
First assume $A=1,B=0$. If $C=0$ then $(\alpha^2)_q=q-1$ and the result follows.

If $(C/q)=-1$, then
$$
(\alpha^2+C)_q=\left(\frac{C}{q}\right)+\sum_{\alpha\in\FF_q^\times}\left(\frac{\alpha^2+C}{q}\right)=-1+\sum_{\alpha\in\FF_q^\times}\left(\frac{(C/\alpha)^2+C}{q}\right),
$$
where we used the bijection $\alpha\mapsto C/\alpha$ on $\FF_q^\times$. Using multiplicativity (\ref{legemult}) and $(\alpha^{-2}/q)=1$ for $\alpha\in\FF_q^\times$ we obtain for the
sum on the right hand side
$$
\sum_{\alpha\in\FF_q^\times}\left(\frac{C/\alpha^2}{q}\right)\left(\frac{C+\alpha^2}{q}\right)=-\sum_{\alpha\in\FF_q^\times}\left(\frac{C+\alpha^2}{q}\right)=-1-(\alpha^2+C)_q.
$$
Solving for $(\alpha^2+C)_q$ gives $-1$ and (\ref{prop2eq}) holds.

If $(C/q)=1$ then there exists an $x\in\FF_q^\times$ such that $x^2=C$. With the bijection $\alpha\mapsto\alpha x$ we find $(\alpha^2+C)_q=(\alpha^2+1)_q$.

Now, we sum over all $C$, interchange the sums over $C$ and $\alpha$, and use (\ref{legesum}) to see that
$$
0=\sum_{C\in\FF_q}(\alpha^2+C)_q=q-1+\frac{q-1}{2}(-1)+\frac{q-1}{2}(\alpha^2+1)_q,
$$
where we used the previous results. So, $(\alpha^2+1)_q=-1$ as desired.

To handle the case $A=1,B\neq0$ we substitute $\alpha\mapsto\alpha-B/2$ which eliminates the linear term.

For general nonzero $A$ we use multiplicativity (\ref{legemult}) yielding
$$
(A\alpha^2+B\alpha+C)_q=\left(\frac{A}{q}\right)(\alpha^2+B\alpha/A+C/A)_q\equiv-\left(\frac{A}{q}\right)\mod q.
$$
For $A=0$, $B\neq0$ the result follows from (\ref{legesum}), while $A=B=0$ gives $\sum_\alpha(C/q)=q(C/q)\equiv0\mod q$.
\end{proof}

\begin{lem}\label{lem3}
For any $H\in\FF_q^\times$ and $D,E,F,J\in\FF_q$,
\begin{equation}
((D\alpha^2+E\alpha+F)(H\alpha+J)^2)_q\equiv-\left(\frac{D}{q}\right)-\left(\frac{DJ^2-EHJ+FH^2}{q}\right)\mod q.
\end{equation}
\end{lem}
\begin{proof}
The term $\alpha=-J/H$ in the sum over $\alpha$ vanishes. If we omit this term and use multiplicativity (\ref{legemult}) we get for the left hand side
$$
\sum_{\alpha\neq -J/H}\left(\frac{D\alpha^2+E\alpha+F}{q}\right)=(D\alpha^2+E\alpha+F)_q-\left(\frac{D\frac{J^2}{H^2}-E\frac{J}{H}+F}{q}\right).
$$
Using multiplicativity again we get the result from (\ref{prop2eq}).
\end{proof}

With these preparations we are ready to prove Theorem \ref{thmquadratic}.
\begin{proof}[Proof of Theorem \ref{thmquadratic}]
At the end of Section \ref{sectionqdr} we already saw that the theorem holds if standard denominator reduction exists (i.e.\ at least for $n\leq5$).
Because $2h_1(G)\leq|\sE(G)|$ we deduce from (\ref{doddeg}) and (\ref{n3}) that $\deg(^3\Psi^2_G)\leq2(|\sE(G)|-3)$.
Every elimination step with non-zero result reduces the degree of the quadratic $n$-invariant by 2. By induction we get
\begin{equation}\label{ninvdeg}
\deg(^n\Psi^2_G)\leq2(|\sE(G)|-n).
\end{equation}
Assume we have $n\geq3$ in the situation of case (\ref{case1}). By induction and (\ref{legepc}), (\ref{pc2ls}) we get modulo $q$,
\begin{eqnarray*}
(-1)^{n-1}c_2^{(q)}(G)&\!\equiv&\!(^n\Psi^2_G(1,\ldots,n))_q\;=\;((A\alpha_{n+1}^2+B\alpha_{n+1}+C)^2)_q\\
&\!\equiv&\!-[A\alpha_{n+1}^2+B\alpha_{n+1}+C]_q\;\equiv\;-(B^2-4AC)_q-(A^2)_q.
\end{eqnarray*}
If $n+1=|\sE(G)|$ then, by (\ref{ninvdeg}), we have $A=0$.
Otherwise $A^2$ is of degree $\leq2(|\sE(G)|-n)-4$ in $|\sE(G)|-n-1$ variables.
By Corollary \ref{corcw} we get $(A^2)_q\equiv0\mod q$ and the result follows.
(We also have $(A^2)_q\equiv-[A]_q\equiv0\mod q$ by Ax's extension of the Chevalley-Warning theorem \cite{Ax}.)

In the situation of (\ref{case2}) we have by induction
$$
(-1)^{n-1}c_2^{(q)}(G)\equiv((D\alpha_{n+1}^2+E\alpha_{n+1}+F)(H\alpha_{n+1}+J)^2)_q\mod q.
$$
In the case $H\neq0$ we can use Lemma \ref{lem3} and obtain $-(D)_q-(DJ^2-EHJ+FH^2)_q\mod q$.
If $n+1=|\sE(G)|$ then $D=0$. Otherwise $\deg(D)\leq2(|\sE(G)|-n)-4$, so that $(D)_q\equiv0\mod q$ by Corollary \ref{corcw}.

If $H=0$ then, by Lemma \ref{lem2}, we have $(-1)^{n-1}c_2^{(q)}(G)\equiv-(DJ^2)_q\mod q$ which completes the proof.
\end{proof}

\section{Initial reductions}\label{sectini}
While it is not clear what is the best sequence of edges for denominator reduction in general, it always seems advantageous to begin with 3-valent vertices.
Note that every (non-trivial) decompleted $\phi^4$ prime ancestor has four vertices of degree 3, while primitive non-$\phi^4$ graphs have at least five such vertices.
We immediately see from (\ref{5invariant}) and Lemma \ref{lemcuts} that the 5- and the 6-invariants factorize if one reduces the edges 2,b,3,c,1,a of two generic 3-valent vertices
(see and Figure \ref{fig1} for the labeling of edges). From (\ref{cd}) and Definition \ref{defdr} we obtain
\begin{equation}\label{56inv}
^5\Psi_G=\pm\Psi^{2b1,3c1}\Psi^{23,bc}_1,\quad ^6\Psi_G=\pm\Psi^{2b1a,3c1a}\Psi^{23,bc}_{1a}.
\end{equation}
Moreover note that from (\ref{cd}) and (\ref{23u}) we have
\begin{equation}\label{56invG}
\Psi^{2b1a,3c1a}(G)=\Psi^{2b,3c}(G\backslash1a)=\pm\Psi(G\backslash12ab/3c).
\end{equation}
The graph $G\backslash12ab/3c$ is $G$ with the six edges $1,2,3,a,b,c$ removed (see Lemma 4.5 in \cite{HSSYc2}).
Because the 6-invariant factorizes we get a minimum of seven standard denominator reductions.
We get one extra quadratic reduction by (\ref{case1}). Lemma \ref{leminiqred} shows that we always get nine reductions by eliminating a third 3-valent vertex.

Triangles in the completed graph further simplify denominator reduction. Because we can restrict ourselves to prime ancestors (see Definition \ref{primeancestor}) we ignore the
case of double triangles (two triangles attached at a common edge). This leads to a classification of prime ancestors by three classes of increasing difficulty: graphs with
(1) no triangles, (2) isolated triangles, (3) vertex-attached triangles (hourglasses). Decompletion of these cases lead to the substructures in Figure \ref{fig1}.

\begin{figure}
\begin{center}
\fcolorbox{white}{white}{
  \begin{picture}(274,222) (3,-3)
    \SetWidth{0.8}
    \SetColor{Black}
    \Vertex(30.312,186.421){2}
    \Vertex(103.062,186.421){2}
    \Vertex(175.812,186.421){2}
    \Line(30.312,216.734)(30.312,186.421)
    \Line(30.312,186.421)(6.062,168.234)
    \Line(30.312,186.421)(54.562,168.234)
    \Line(103.062,186.421)(127.312,168.234)
    \Line[arrow,arrowpos=0.5,arrowlength=5,arrowwidth=2,arrowinset=0.2](175.812,186.421)(200.062,168.234)
    \Line(103.062,186.421)(78.812,168.234)
    \Line(103.062,216.734)(103.062,186.421)
    \Line[arrow,arrowpos=0.5,arrowlength=5,arrowwidth=2,arrowinset=0.2,flip](175.812,216.734)(175.812,186.421)
    \Line[arrow,arrowpos=0.5,arrowlength=5,arrowwidth=2,arrowinset=0.2](175.812,186.421)(151.562,168.234)
    \Line(30.312,131.859)(30.312,101.546)
    \Line(30.312,101.546)(54.562,83.359)
    \Line(54.562,150.046)(30.312,131.859)
    \Line(6.062,150.046)(30.312,131.859)
    \Line(30.312,101.546)(6.062,83.359)
    \Line(103.062,119.734)(78.812,101.546)
    \Line(103.062,119.734)(127.312,101.546)
    \Line(103.062,150.046)(103.062,119.734)
    \Line[arrow,arrowpos=0.5,arrowlength=5,arrowwidth=2,arrowinset=0.2,flip](175.812,150.046)(175.812,119.734)
    \Line[arrow,arrowpos=0.5,arrowlength=5,arrowwidth=2,arrowinset=0.2](175.812,119.734)(151.562,101.546)
    \Line[arrow,arrowpos=0.5,arrowlength=5,arrowwidth=2,arrowinset=0.2](175.812,119.734)(200.062,101.546)
    \Line(54.562,65.172)(30.312,46.984)
    \Line(6.062,65.172)(30.312,46.984)
    \Line(30.312,46.984)(30.312,16.672)
    \Line(30.312,16.672)(54.562,-1.516)
    \Line(30.312,16.672)(6.062,-1.516)
    \Line(103.062,16.672)(78.812,-1.516)
    \Line(103.062,16.672)(127.312,-1.516)
    \Line(103.062,46.984)(103.062,16.672)
    \Line(78.812,65.172)(103.062,46.984)
    \Line(127.312,65.172)(103.062,46.984)
    \Line[arrow,arrowpos=0.5,arrowlength=5,arrowwidth=2,arrowinset=0.2](175.812,34.859)(151.562,16.672)
    \Line[arrow,arrowpos=0.5,arrowlength=5,arrowwidth=2,arrowinset=0.2](175.812,34.859)(200.062,16.672)
    \Line[arrow,arrowpos=0.5,arrowlength=5,arrowwidth=2,arrowinset=0.2,flip](175.812,65.172)(175.812,34.859)
    \Vertex(175.812,34.859){2}
    \Vertex(175.812,119.734){2}
    \Vertex(103.062,119.734){2}
    \Vertex(103.062,46.984){2}
    \Vertex(30.312,131.859){2}
    \Vertex(30.312,101.546){2}
    \Vertex(103.062,16.672){2}
    \Vertex(30.312,46.984){2}
    \Vertex(30.312,16.672){2}
    \Text(42,180)[lb]{$1$}
    \Text(33,199)[lb]{$2$}
    \Text(12,180)[lb]{$3$}
    \Text(33,116)[lb]{$1$}
    \Text(42,148)[lb]{$2$}
    \Text(12,148)[lb]{$3$}
    \Text(12,95)[lb]{$4$}
    \Text(42,95)[lb]{$5$}
    \Text(33,30)[lb]{$1$}
    \Text(42,62)[lb]{$2$}
    \Text(12,62)[lb]{$3$}
    \Text(12,10)[lb]{$4$}
    \Text(42,10)[lb]{$5$}
    \Text(116,180)[lb]{$a$}
    \Text(107,199)[lb]{$b$}
    \Text(85,180)[lb]{$c$}
    \Text(116,113)[lb]{$a$}
    \Text(107,134)[lb]{$b$}
    \Text(85,113)[lb]{$c$}
    \Text(107,30)[lb]{$a$}
    \Text(116,62)[lb]{$b$}
    \Text(85,62)[lb]{$c$}
    \Text(85,10)[lb]{$d$}
    \Text(116,10)[lb]{$e$}
    \Text(189,180)[lb]{$A$}
    \Text(179,199)[lb]{$B$}
    \Text(157,180)[lb]{$C$}
    \Text(189,113)[lb]{$A$}
    \Text(179,134)[lb]{$B$}
    \Text(157,113)[lb]{$C$}
    \Text(189,28)[lb]{$A$}
    \Text(179,50)[lb]{$B$}
    \Text(157,28)[lb]{$C$}
    \Text(240,182)[lb]{(1)}
    \Text(240,115)[lb]{(2)}
    \Text(240,30)[lb]{(3)}
  \end{picture}
}
\end{center}
\caption{Some three-valent vertices from the decompletion of (1) a generic 4-valent vertex, (2) a vertex of a triangle, (3) the middle vertex of an hourglass,
where the vertex $ABC$ only exists if the middle vertex has degree $\geq5$. Because ancestors have no double triangles the depicted edges are distinct.}
\label{fig1}
\end{figure}
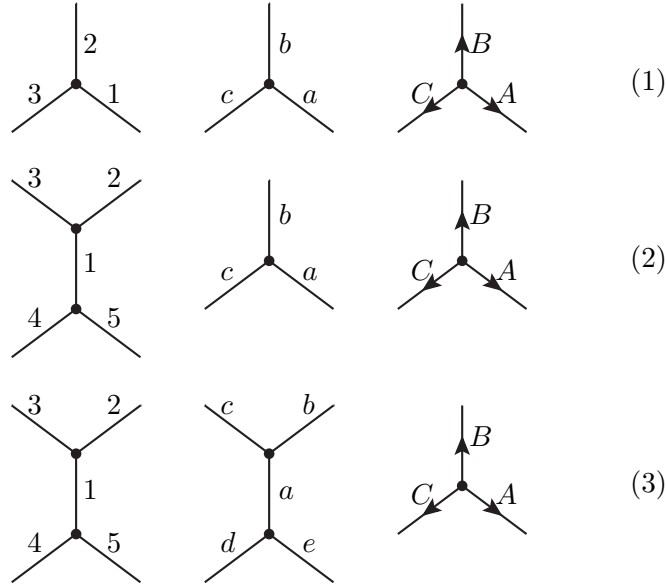

\begin{lem}[Lemma 4.5 in \cite{HSSYc2}]\label{leminired}
Consider a graph $G$ which has a substructure as in Figure \ref{fig1} without the oriented vertex $ABC$. Let $G_0$ be the graph without the plotted edges and vertices.
In the cases $(1)$, $(2)$, $(3)$ we define $H\in\ZZ[\sM(G)]$ as
\begin{eqnarray*}
(1)&&H=G/1a,\\
(2)&&H=G\backslash4/15a-G\backslash5/14a,\\
(3)&&H=G\backslash4d/15ae-G\backslash5d/14ae-G\backslash4e/15ad+G\backslash5e/14ad.
\end{eqnarray*}
Then, the $n$-invariant for $n=6$, $8$, $10$ (respectively) is
\begin{equation}\label{inired}
^n\Psi_G=\pm\Psi(G_0)\Psi^{23,bc}(H).
\end{equation}
\end{lem}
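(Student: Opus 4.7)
The plan is to handle the three cases in increasing order of difficulty, using (\ref{56inv}) as a common starting point and then exploiting the triangle/hourglass structure for additional reductions.

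Case (1) should be essentially immediate. Combining (\ref{56inv}) and (\ref{56invG}) already gives $^6\Psi_G = \pm\Psi(G\backslash 12ab/3c)\,\Psi^{23,bc}_{1a}(G)$. The first factor is, by the definition of $G_0$, just $\pm\Psi(G_0)$. For the second factor, the remark following Lemma \ref{lemcd} (and repeated contraction--deletion (\ref{cd}) on the edges in $K\setminus IJ$) gives $\Psi^{23,bc}_{1a}(G)=\Psi^{23,bc}(G/1a)=\Psi^{23,bc}(H)$. So case (1) reduces to a bookkeeping check.

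For case (2), the idea is to take $^6\Psi_G$ from case (1) and perform two further reductions in $\alpha_4,\alpha_5$. In case (2) the edges $1,4,5$ form a triangle, so $4,5$ form an oriented $2$-cycle in $G\backslash 1$ (attached to a smaller graph at two vertices). I would first observe that $\Psi^{2b,3c}(G\backslash 1a)$, viewed in $\alpha_4,\alpha_5$, factorizes because the two edges are parallel after removing $1$; this is exactly the input needed to run one more reduction step. After the $\alpha_4$ reduction, the emerging $^7\Psi_G$ factorizes again, now because of the oriented-cycle relation (\ref{c0}) applied to the triangle $145$, which produces a two-term alternating combination of Dodgsons of minors. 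A final reduction in $\alpha_5$ then yields $\pm\Psi(G_0)\,\Psi^{23,bc}\bigl(G\backslash 4/15a-G\backslash 5/14a\bigr)$, as claimed. The remainder $\Psi(G_0)$ survives unchanged because $G_0$ has none of the plotted edges.

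For case (3), I would iterate the same mechanism on both triangles of the hourglass: $145$ on the lower side and $ade$ on the upper side (after treating the middle vertex $1a$). Each triangle, via (\ref{c0}), contributes a two-term alternating combination in the Dodgson of the appropriate minor. Because the reductions on $4,5$ and on $d,e$ act independently in disjoint variables, the composed effect is the tensor product of the two two-term differences, which is exactly the four-term alternating sum defining $H$ in case (3). After ten reductions in all (edges $2,b,3,c,1,a,4,5,d,e$ or a permutation compatible with the reduction rule), one reads off $^{10}\Psi_G=\pm\Psi(G_0)\,\Psi^{23,bc}(H)$.

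The main obstacle is twofold. First, one must verify that at each intermediate step denominator reduction does not terminate, i.e.\ that the required one-variable factorization of the current $^n\Psi_G$ actually occurs; this needs a careful use of Lemma \ref{lemcuts} together with the Dodgson identity (\ref{did}) and (\ref{23u}) to recognise the triangle/hourglass as producing the relevant parallel/serial structures after partial contractions. Second, the signs have to be propagated carefully through many applications of (\ref{Ax}), (\ref{cd}), (\ref{v0a}), and especially (\ref{c0}); since the statement is only claimed up to an overall sign, this is tractable but is the most error-prone part. With the refined sign convention established in Section \ref{sectdp}, the bookkeeping is in principle mechanical, and the alternating signs in $H$ for cases (2) and (3) should emerge directly from the signs dictated by (\ref{c0}) applied to each triangle.
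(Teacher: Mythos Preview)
Your treatment of case (1) is fine and matches the paper. The problem is in cases (2) and (3): you have misread Figure 1. In case (2) the edges $1,4,5$ do \emph{not} form a triangle in $G$. The substructure consists of two $3$-valent vertices joined by edge $1$; the top vertex carries edges $1,2,3$ and the bottom vertex carries edges $1,4,5$. (The ``triangle'' is only in the completed graph, with the deleted vertex $\infty$ as third corner.) Consequently, after removing edge $1$ the edges $4,5$ are not parallel---the bottom vertex simply becomes $2$-valent---so neither the parallel-edge factorization you invoke nor the oriented-cycle relation (\ref{c0}) for a ``triangle $145$'' is available. The same misreading propagates to case (3), where your ``triangles $145$ and $ade$'' do not exist in $G$.

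The paper's argument uses a different, purely cut-based mechanism. Starting from the $6$-invariant $^6\Psi_G=\pm\Psi(G\backslash12ab/3c)\,\Psi^{23,bc}(G/1a)$ of case (1), one applies Lemma \ref{lemcuts}: since $\{2,3,4,5\}$ cuts $G/1a$ (it isolates the merged vertex), the $\alpha_4\alpha_5$-coefficient of $\Psi^{23,bc}(G/1a)$ vanishes, giving
\[
\Psi^{23,bc}(G/1a)=\Psi^{23,bc}(G\backslash4/15a)\,\alpha_4+\Psi^{23,bc}(G\backslash5/14a)\,\alpha_5+X.
\]
For the other factor, $\{1,4,5\}$ cuts $G$ (isolating the bottom vertex), so $\Psi(G\backslash1245ab/3c)=0$; moreover $G\backslash124ab/35c=G\backslash125ab/34c=G_0$, so
\[
\Psi(G\backslash12ab/3c)=\Psi(G_0)(\alpha_4+\alpha_5)+Y.
\]
With this $(\alpha_4+\alpha_5)$-structure in one factor and no $\alpha_4\alpha_5$ term in either, two steps of standard denominator reduction yield $^8\Psi_G=\pm\Psi(G_0)\,(A-B)=\pm\Psi(G_0)\,\Psi^{23,bc}(H)$. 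Case (3) then follows by repeating the same cut argument for edges $d,e$ at the second $3$-valent pair. The key identity you need is Lemma \ref{lemcuts}, not (\ref{c0}).
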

\begin{proof}
We reproduce the proof in \cite{HSSYc2} to accommodate our sign convention.

Case (1) is (\ref{56inv}) and (\ref{56invG}). For case (2) we observe the following structure:
\begin{eqnarray*}
\Psi^{23,bc}(G/1a)&=&\Psi^{23,bc}(G\backslash4/15a)\alpha_4+\Psi^{23,bc}(G\backslash5/14a)\alpha_5+X,\\
\Psi(G\backslash12ab/3c)&=&\Psi(G_0)(\alpha_4+\alpha_5)+Y
\end{eqnarray*}
for some $X,Y$ which are constant in $\alpha_4,$ $\alpha_5$. For the first equation we used (\ref{cd}) and Lemma \ref{lemcuts} for the cut 2, 3, 4, 5.
For the second equation we used the cut 1, 4, 5 and $G\backslash124ab/35c=G\backslash125ab/34c=G_0$. The result follows by standard denominator reduction in $\alpha_4$ and $\alpha_5$.

Case (3) follows from case (2) using edges $d$, $e$ in the same way as (2) followed from (1) using edges 4, 5.
\end{proof}

\begin{lem}\label{leminiqred}
Consider a graph $G$ with a substructure as in Figure \ref{fig1}. With the notation of Lemma \ref{leminired} we define
$f_0,$ $f_A,$ $f_B,$ $f_C,$ $f_{ABC}$ according to (\ref{fdef}) with respect to the edges $A$, $B$, $C$ of the graph $G_0$. Let
\begin{equation}\label{gdef}
g_0=\Psi^{23AB,bcAB}_C,\;g_A=\Psi^{23A,bcA}_{BC},\;g_B=\Psi^{23B,bcB}_{AC},\;g_C=\Psi^{23C,bcC}_{AB},
\end{equation}
all evaluated at $H$, be the coefficients of $\alpha_A\alpha_B$, $\alpha_A$, $\alpha_B$, $\alpha_C$ in $\Psi^{23,bc}(H)$, respectively.
Then, the quadratic $n$-invariant for $n=9$, $11$, $13$ (respectively) is
\begin{equation}\label{iniqred}
^n\Psi_G^2=f_0[f_0\lambda(g_A,g_B,g_C)-4g_0(f_Ag_A+f_Bg_B+f_Cg_C+f_{ABC}g_0)],
\end{equation}
where the K\"all\'en function $\lambda$ is defined as
\begin{equation}\label{Kaellen}
\lambda(x,y,z)=(x-y-z)^2-4yz.
\end{equation}
\end{lem}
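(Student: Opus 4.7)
Starting from Lemma \ref{leminired}, which gives $^n\Psi_G = \pm\Psi(G_0)\Psi^{23,bc}(H)$ and hence $^n\Psi_G^2 = [\Psi(G_0)\Psi^{23,bc}(H)]^2$ for $n\in\{6,8,10\}$, the plan is to perform three further quadratic denominator reductions eliminating the edges $A, B, C$ at the remaining 3-valent vertex. My first task is to establish that the proof of Lemma \ref{deff}, applied verbatim to $\Psi^{23,bc}(H)$ around the oriented vertex $ABC$, shows that this Dodgson polynomial has the same 3-valent-vertex form
$$\Psi^{23,bc}(H) = g_0(\alpha_A\alpha_B+\alpha_A\alpha_C+\alpha_B\alpha_C) + g_A\alpha_A + g_B\alpha_B + g_C\alpha_C + g_{ABC},$$
with $g_{ABC} := \Psi^{23,bc}_{ABC}(H)$ and $g_0, g_A, g_B, g_C$ as in (\ref{gdef}): the $\alpha_A\alpha_B\alpha_C$-term vanishes because $\{A,B,C\}$ cuts $H$ (Lemma \ref{lemcuts}), and the three quadratic coefficients coincide because $H\backslash AB/C = H\backslash AC/B = H\backslash BC/A$. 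Applying the Dodgson identity (\ref{did}) to edges $A, B$ with $K=\{C\}$ and converting the resulting identity into the coefficient basis yields the K\"all\'en-type relation $4g_0 g_{ABC} + \lambda(g_A,g_B,g_C) = 0$; Lemma \ref{deff} gives $(f_A+f_B)(f_A+f_C) = f_A^2 + f_0 f_{ABC}$ for the $f$'s.

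\textbf{The three reductions.} Writing $\Psi(G_0) = F_0 + \alpha_A F_1$ and $\Psi^{23,bc}(H) = G_0 + \alpha_A G_1$, the first step applies case (\ref{case1}) to the perfect square $[(F_0+\alpha_A F_1)(G_0+\alpha_A G_1)]^2$ and produces $S^2 = (F_1G_0 - F_0G_1)^2$, again a perfect square. Since $S$ is of degree $\leq 2$ in $\alpha_B$, the second step applies case (\ref{case1}) to $S^2$ and yields $s_1^2 - 4s_0s_2$ where $S = s_2\alpha_B^2 + s_1\alpha_B + s_0$. Two features are crucial: $s_2 = f_0 g_B + (f_A+f_C)g_0$ is independent of $\alpha_C$, and the $\alpha_C^2$-contributions in $s_1$ from $PR'$ and $RP'$ both equal $f_0 g_0 \alpha_C^2$ and cancel, so $s_1$ is \emph{linear} (not quadratic) in $\alpha_C$. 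Consequently $s_1^2 - 4s_0 s_2$ is at most quadratic in $\alpha_C$ and fits case (\ref{case2}) with $D=0$, $E=1$; the third reduction returns its $\alpha_C^2$-coefficient, which equals $u^2 - 4vw$ for $u = s_1|_{\alpha_C}$, $v = s_2$, $w = s_0|_{\alpha_C^2}$.

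\textbf{Final algebra and main obstacle.} Explicit computation gives $u = f_0(-g_A + g_B + g_C) + 2f_A g_0$ and $w = f_0 g_C + (f_A+f_B)g_0$. The expansion of $u^2 - 4vw$ is organized around $(-g_A+g_B+g_C)^2 = \lambda(g_A,g_B,g_C) + 4g_Bg_C$: the $4f_0^2 g_B g_C$ terms cancel between $u^2$ and $4vw$; the $4 f_A^2 g_0^2$ terms cancel between $u^2$ and $4vw$; the remaining $4g_0^2(f_A+f_B)(f_A+f_C)$ reduces via $(f_A+f_B)(f_A+f_C) = f_A^2 + f_0 f_{ABC}$ to $-4 f_0 g_0^2 f_{ABC}$; and the surviving $f_0 g_0$-cross-terms collapse, using $f_A(-g_A+g_B+g_C) - f_A(g_B+g_C) = -f_A g_A$, to $-4 f_0 g_0 (f_A g_A + f_B g_B + f_C g_C)$. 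Collecting terms yields exactly $f_0[f_0\lambda(g_A,g_B,g_C) - 4g_0(f_A g_A + f_B g_B + f_C g_C + f_{ABC} g_0)]$, as claimed. The main obstacle I anticipate is this final bookkeeping: both quadratic constraints (one for the $f$'s, one for the $g$'s) must be deployed in tandem, and the crucial degree-drop $\deg_{\alpha_C}(s_1) \leq 1$ --- which enables case (\ref{case2}) at the last step --- must be verified separately by direct computation of the two cancelling $f_0 g_0\alpha_C^2$ contributions.
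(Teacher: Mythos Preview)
Your proof is correct and follows essentially the same route as the paper: write $^n\Psi_G$ as the product of the two structured polynomials in $\alpha_A,\alpha_B,\alpha_C$, perform two case~(\ref{case1}) reductions (for $\alpha_A$ and $\alpha_B$), observe that the result is only quadratic in $\alpha_C$ so that case~(\ref{case2}) with $D=0$ applies, and then simplify the resulting $\alpha_C^2$-coefficient using the relation (\ref{3b}) for the $f$'s. One small remark: the K\"all\'en-type relation $4g_0g_{ABC}+\lambda(g_A,g_B,g_C)=0$ that you derive for the $g$'s is true but \emph{not} needed---your own final algebra uses only the $f$-identity $(f_A+f_B)(f_A+f_C)=f_A^2+f_0f_{ABC}$, and the paper likewise uses only (\ref{3b}); so your anticipated ``main obstacle'' of deploying both constraints in tandem does not actually arise.
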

Note that case (3) needs more than four 3-valent vertices and hence only exists in non-$\phi^4$ graphs.
\begin{proof}
With (\ref{cd}) we see that $g_0=\Psi^{23,bc}(H\backslash AB/C)$ is also the coefficient of $\alpha_A\alpha_C$ and $\alpha_B\alpha_C$ in $\Psi^{23,bc}(H)$.
By Lemmas \ref{lemcuts}, \ref{deff} the $n$-invariant $\pm^n\Psi_G$ in (\ref{inired}) is
\begin{equation*}
\begin{split}
&\big[f_0(\alpha_A\alpha_B+\alpha_A\alpha_C+\alpha_B\alpha_C)-(f_B+f_C)\alpha_A-(f_A+f_C)\alpha_B-(f_A+f_B)\alpha_C\\
&\quad+\;f_{ABC}\big]\big[g_0(\alpha_A\alpha_B+\alpha_A\alpha_C+\alpha_B\alpha_C)+g_A\alpha_A+g_B\alpha_B+g_C\alpha_C+X\big],
\end{split}
\end{equation*}
with some polynomial $X$ which is constant in $\alpha_A,$ $\alpha_B,$ $\alpha_C$.
Quadratic denominator reduction with (\ref{case1}) of $[\pm^n\Psi_G]^2$ allows us to eliminate $\alpha_A$ and $\alpha_B$ yielding an expression which is quadratic in $\alpha_C$.
This corresponds to the case $H=0$ in (\ref{case2}). The reduction in $\alpha_C$ is the quadratic coefficient
$$
f_0^2\lambda(g_A,g_B,g_C)-4f_0g_0(f_Ag_A+f_Bg_B+f_Cg_C)-4g_0^2(f_Af_B+f_Af_C+f_Bf_C).
$$
With (\ref{3b}) we obtain the result.
\end{proof}

Of the 1731 prime ancestors at loop order 11, only 31 have no triangles. For these only the nine reductions of case (1) in Lemma \ref{leminiqred} are available.
However, in a special setup where one has an arrangement of eight or twelve squares (see Figure \ref{fig2}) one obtains an extra simplification leading to a tenth reduction.
At 11 loops, 8 prime ancestors with no triangle are of this type.

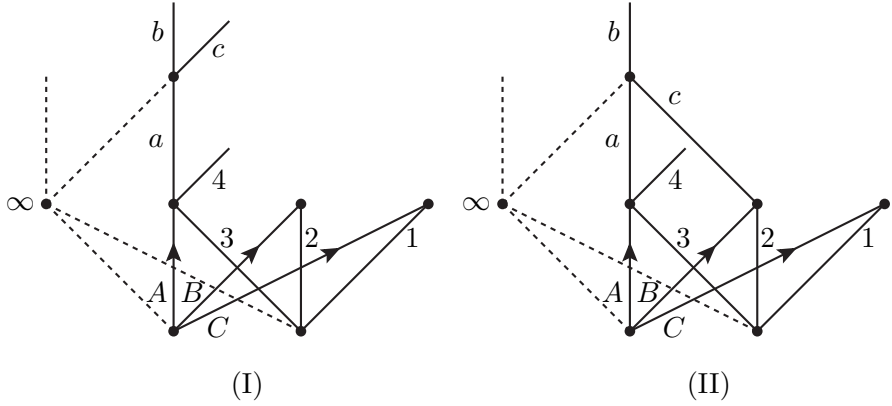
\begin{figure}
\begin{center}
\fcolorbox{white}{white}{
  \begin{picture}(160,155) (70,-15)
    \SetWidth{0.8}
    \SetColor{Black}
    \Vertex(80,59){2}
    \Vertex(128,11){2}
    \Vertex(128,59){2}
    \Vertex(176,59){2}
    \Vertex(224,59){2}
    \Vertex(176,11){2}
    \Vertex(128,107){2}
    \Line[arrow,arrowpos=0.38,arrowlength=5,arrowwidth=2,arrowinset=0.2,flip](176,59)(128,11)
    \Line(128,59)(176,11)
    \Line(176,59)(176,11)
    \Line(224,59)(176,11)
    \Line(128,107)(128,59)
    \Line[dash,dashsize=2](80,59)(128,107)
    \Line[dash,dashsize=2](80,59)(128,11)
    \Line[dash,dashsize=2](80,59)(80,107)
    \Line(128,107)(128,135)
    \Line(128,107)(149,128)
    \Line(128,59)(149,80)
    \Line(176,59)(165,70)
    \Line(176,59)(187,70)
    \Line(224,59)(213,70)
    \Line(224,59)(235,70)
    \Line[arrow,arrowpos=0.62,arrowlength=5,arrowwidth=2,arrowinset=0.2](128,11)(224,59)
    \Line[dash,dashsize=2](80,59)(176,11)
    \Line[arrow,arrowpos=0.38,arrowlength=5,arrowwidth=2,arrowinset=0.2,flip](128,59)(128,11)
    \Text(65,57)[lb]{$\infty$}
    \Text(216,43)[lb]{$1$}
    \Text(178,43)[lb]{$2$}
    \Text(146,43)[lb]{$3$}
    \Text(118,22)[lb]{$A$}
    \Text(131,22)[lb]{$B$}
    \Text(141,9)[lb]{$C$}
    \Text(119,81)[lb]{$a$}
    \Text(120,121)[lb]{$b$}
    \Text(143,115)[lb]{$c$}
    \Text(143,65)[lb]{$4$}
    \Text(150,-15)[lb]{(I)}
  \end{picture}
}\quad
\fcolorbox{white}{white}{
  \begin{picture}(160,155) (70,-15)
    \SetWidth{0.8}
    \SetColor{Black}
    \Vertex(80,59){2}
    \Vertex(128,11){2}
    \Vertex(128,59){2}
    \Vertex(176,59){2}
    \Vertex(224,59){2}
    \Vertex(176,11){2}
    \Vertex(128,107){2}
    \Line[arrow,arrowpos=0.38,arrowlength=5,arrowwidth=2,arrowinset=0.2,flip](176,59)(128,11)
    \Line(128,59)(176,11)
    \Line(176,59)(176,11)
    \Line(224,59)(176,11)
    \Line(128,107)(128,59)
    \Line[dash,dashsize=2](80,59)(128,107)
    \Line[dash,dashsize=2](80,59)(128,11)
    \Line[dash,dashsize=2](80,59)(80,107)
    \Line(128,107)(128,135)
    \Line(128,59)(149,80)
    \Line(176,59)(197,80)
    \Line(224,59)(213,70)
    \Line(224,59)(235,70)
    \Line(128,107)(176,59)
    \Line[arrow,arrowpos=0.62,arrowlength=5,arrowwidth=2,arrowinset=0.2](128,11)(224,59)
    \Line[dash,dashsize=2](80,59)(176,11)
    \Line[arrow,arrowpos=0.38,arrowlength=5,arrowwidth=2,arrowinset=0.2,flip](128,59)(128,11)
    \Text(65,57)[lb]{$\infty$}
    \Text(216,43)[lb]{$1$}
    \Text(178,43)[lb]{$2$}
    \Text(146,43)[lb]{$3$}
    \Text(118,22)[lb]{$A$}
    \Text(131,22)[lb]{$B$}
    \Text(141,9)[lb]{$C$}
    \Text(119,81)[lb]{$a$}
    \Text(120,121)[lb]{$b$}
    \Text(143,97)[lb]{$c$}
    \Text(143,65)[lb]{$4$}
    \Text(191,65)[lb]{$d$}
    \Text(150,-15)[lb]{(II)}
  \end{picture}
}
\end{center}
\caption{Decompletions of arrangements of eight squares (I) and twelve squares (II).}
\label{fig2}
\end{figure}

\begin{lem}\label{lemsquares}
Assume a graph $G$ has a substructure as in Figure \ref{fig2} (I) or (II).
Let $H=G/1a$ and $G_0$ be defined as in Lemma \ref{leminired} (1) and let $H_0\in\ZZ[\sM(G)]$ be $H$ without the vertex $ABC$.
Then the quadratic 10-invariant with respect to the 10 solid edges is
\begin{equation}\label{squares}
\begin{split}
^{10}\Psi^2_G=&4\Psi(G_0\backslash AB/4C)\Psi^{23,bc}(H_0\backslash4)\times\\
&\quad\big[\Psi(G_0\backslash4/ABC)\Psi^{23,bc}(H_0/4)-\Psi(G_0/4ABC)\Psi^{23,bc}(H_0\backslash4)\big].
\end{split}
\end{equation}
\end{lem}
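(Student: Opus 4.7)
\emph{Proof proposal.} The plan is to start from the 9-invariant
\[
{}^9\Psi^2_G = f_0\bigl[f_0\,\lambda(g_A,g_B,g_C) - 4g_0(f_Ag_A + f_Bg_B + f_Cg_C + f_{ABC}g_0)\bigr]
\]
given by Lemma \ref{leminiqred}, and to perform one further quadratic denominator reduction in the edge variable $\alpha_4$, identifying the result with the right-hand side of (\ref{squares}).

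First I would decompose each of the Dodgsons $f_0,f_A,f_B,f_C,f_{ABC},g_0,g_A,g_B,g_C$ in the variable $\alpha_4$ by contraction-deletion (\ref{cd}), so that $f_i = \alpha_4\,f_i^{\backslash} + f_i^{/}$ and similarly for $g_i$, where the superscripts $\backslash$ and $/$ refer to the Dodgsons evaluated on $G_0\backslash 4$ respectively $G_0/4$, and analogously on $H\backslash 4$ respectively $H/4$ for the $g_i$. Substituting into the 9-invariant turns $^9\Psi^2_G$ into a polynomial in $\alpha_4$ of total degree at most 4, and the remaining task is to show that in both configurations (I) and (II) this quartic falls into one of the factored shapes of Definition \ref{defqdr}.

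Next I would exploit the square topology of Figure 2. In both configurations, the four edges around a small square form an edge cut inside $G_0\backslash 4$, $G_0/4$ or their further minors, so by Lemma \ref{lemcuts} many of the $f_i^{\backslash/}$ and $g_i^{\backslash/}$ vanish; moreover, opposite edges in a square give identifications like $G_0\backslash e/f = G_0\backslash f/e$ among the surviving minors. After compiling one short table of such vanishings and identifications for each of (I) and (II), the quartic in $\alpha_4$ should collapse to the shape $(A\alpha_4^2 + B\alpha_4 + C)(D\alpha_4 + E)^2$, so that the tenth reduction is governed by case (\ref{case2}) of Definition \ref{defqdr}, namely $AE^2 - BDE + CD^2$. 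The final step is to rewrite $A,B,C,D,E$ in terms of minors of the smaller graphs $G_0$ and $H_0$, using (\ref{cd}) once more together with the three-vertex identity (\ref{3b}) applied at $ABC$ (first inside $G_0$, then inside $H_0$). This should produce the common prefactor $4\,\Psi(G_0\backslash AB/4C)\,\Psi^{23,bc}(H_0\backslash 4)$ together with the bracketed difference in (\ref{squares}); the factor $4$ is inherited from the $-4g_0$ appearing in (\ref{iniqred}).

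I expect the main obstacle to be the combinatorial bookkeeping: tracking which of the roughly eighteen minors survive, keeping signs straight under the reorderings needed to bring the Dodgson superscripts into canonical form, and confirming that the same factored answer emerges in both (I) and (II) despite the extra edge in (II). A clean route is probably to rewrite the 9-invariant as a $2\times 2$ matrix identity whose common row (or column) factor is precisely $\Psi^{23,bc}(H_0\backslash 4)$, so that the shared prefactor emerges structurally rather than by term-by-term cancellation.
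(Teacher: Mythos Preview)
Your overall architecture---start from the 9-invariant of Lemma \ref{leminiqred} and perform one more quadratic reduction in $\alpha_4$ via case (\ref{case2})---matches the paper. But you are missing the key simplification, and without it the factorization you need will not emerge cleanly from a generic contraction--deletion expansion.

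The paper does \emph{not} decompose all nine Dodgsons $f_\bullet,g_\bullet$ in $\alpha_4$ and then hunt for a factored quartic. Instead it first proves that $g_A=g_B=g_C=0$ outright, so that (\ref{iniqred}) collapses to ${}^9\Psi^2_G=-4f_0f_{ABC}g_0^2$, which is already in the shape (\ref{case2}) with the square factor $g_0^2$. The vanishings come from the self-loop identity (\ref{s0}), not from the cut lemma you invoke: for example $g_A=\Psi^{23,bc}(G\backslash A/1aBC)$, and since $12BC$ is a square in $G$ the edge $2$ becomes a self-loop after contracting $1,B,C$, forcing the Dodgson to vanish. The case of $g_C$ is subtler and requires the permutation invariance of the 6-invariant (Theorem \ref{standarddenomred}) to rewrite it as $\Psi^{13,bc}(G\backslash C/2aAB)$ before the self-loop argument applies; your cut-based bookkeeping would not produce this step.

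After the collapse, the tenth reduction is almost immediate: in $G_0\backslash AB/C$ the edge $4$ has a degree-one endpoint, so $f_0$ is constant in $\alpha_4$ by (\ref{cd}) and Lemma \ref{lemcuts}; hence $A=0$ in (\ref{case2}) and the reduction is just $CD^2$, which unpacks directly to (\ref{squares}). In particular the identity (\ref{3b}) at the vertex $ABC$ is not needed at this stage---it was already consumed in the proof of Lemma \ref{leminiqred}.
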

\begin{proof}
We show that $g_A=g_B=g_C=0$ in Lemma \ref{leminiqred}, yielding $^9\Psi^2_G=-4f_0f_{ABC}g_0^2$.

By (\ref{cd}) the coefficient $g_A$ is given by the Dodgson $\Psi^{23,bc}(G\backslash A/1aBC)$.
Because $12BC$ is a square in $G$ the edge 2 is a self-loop in the graph $G\backslash A/1aBC$. The vanishing of $g_A$ then follows from Lemma \ref{lemcuts}.
Likewise $g_B=\Psi^{23,bc}(G\backslash B/1aAC)$. Now, $13AC$ is a square and the edge 3 is a self-loop in $G\backslash B/1aAC$.

To prove $g_C=0$ we use that by Theorem \ref{standarddenomred} the 6-invariant $\pm^6\Psi_G$ is invariant under permuting 1, 2, 3, $a$, $b$, $c$.
The graph $G_0=G\backslash12ab/3c$ in (\ref{56invG}) is trivially invariant. Solving (\ref{56inv}) for $\Psi^{23,bc}(G/1a)$ and permuting 1 and 2 gives
$\Psi^{23,bc}(G/1a)=\pm\Psi^{13,bc}(G/2a)$. Therefore, $g_C=\Psi^{13,bc}(G\backslash C/2aAB)$. With $23AB$ being a a square in $G$,
the edge 3 is a self-loop in $G\backslash C/2aAB$ and we get $g_C=0$ from Lemma \ref{lemcuts}.

Now we are in the situation of (\ref{case2}) in Definition \ref{defqdr}. In the graph $G_0\backslash AB/C$ edge 4 has a degree 1 vertex and by (\ref{cd}), (\ref{fdef}),
and Lemma \ref{lemcuts} we get that $f_0=\Psi(G_0\backslash AB/C)=\Psi(G_0\backslash AB/4C)$ is constant in $\alpha_4$.
This implies $D=0$ in (\ref{case2}) and the result follows from (\ref{cd}) and the explicit expressions for $f_{ABC}$ and $g_0$ in (\ref{fdef}) and Lemma \ref{leminiqred}.
\end{proof}

We did not find reductions beyond ten edges in prime ancestors with no triangles.
\begin{remark}
A referee explained the observation that the resultant on the right hand side of (\ref{squares}) factorizes in case (II):
The graph $H_0$ has a three-valent vertex $2cd$, and (\ref{v0a}) yields with $I=b$, $J=23$ and $e=c$ that
$$
\Psi^{23,bc}(H_0)=-\Psi^{23,b2}(H_0/c)-\Psi^{23,bd}(H_0/c).
$$
In the last term the edges 2 and 3 are a cycle, hence $\Psi^{23,bd}(H_0/c)=0$ by Lemma \ref{lemcuts}. With (\ref{psidef}) and (\ref{cd}) we conclude that
$\Psi^{23,bc}(H_0)=\Psi^{3,b}(H_0\backslash2/c)$. Moreover $G_0/ABC=H_0\backslash2b/3c$, so that the square bracket in (\ref{squares}) is
$$
\Psi^{b4,b4}_3\Psi^{3,b}_4-\Psi^{b,b}_{34}\Psi^{34,b4},
$$
evaluated on the graph $H_0\backslash2/c$. Identity (23) in Lemma 30 of \cite{Bperiods} is
$$
\Psi^{Ix,J}_S\Psi^{Iyz,Jz}_S-\Psi^{Ixz,Jz}_S\Psi^{Iy,J}_S=-\Psi^{Iz,J}_S\Psi^{Ixy,Jz}_S,
$$
where $I,J$ are words and $x,y,z\in IJ$, $S=IJ\cup\{x,y,z\}$. The signs are adjusted such that the special cases $x=y$ and $x=z$ are consistent.
(Note the anti-symmetry in $x,y$ on both sides.) With $I=\emptyset$, $J=x=b$, $y=3$, $z=4$, $S=\{3,4\}$ we get
$$
^{10}\Psi^2_G=4\Psi(G_0\backslash AB/4C)\Psi^{23,bc}(H_0\backslash4)\Psi^{4,b}_3(H_0\backslash2/c)\Psi^{3,4}(H_0\backslash2b/c).
$$
\end{remark}

\section{Affine reduction}\label{sectaffred}
Before calculating $(F)_q$ for a homogeneous polynomial $F$ (and a fixed odd prime power $q$) it is possible to reduce one variable by passing to affine coordinates.

\begin{lem}\label{lemred}
Let $q$ be an odd prime power and $F\in\ZZ[\alpha_1,\ldots,\alpha_N]$ be homogeneous of degree $2N$. Then
\begin{equation}\label{affred1}
(F)_q\equiv-(F|_{\alpha_1=1})_q-(F|_{\alpha_1=0,\alpha_2=1})_q+(F|_{\alpha_1=0,\alpha_2=0})_q\mod q.
\end{equation}
On the right hand side ambient spaces are $\FF_q^{N-1}$ and $\FF_q^{N-2}$, respectively.
\end{lem}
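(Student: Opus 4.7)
The plan is to partition the ambient cube $\FF_q^N$ according to whether $\alpha_1$ and $\alpha_2$ vanish, and exploit the homogeneity of $F$ on each piece. The key observation is that $F$ has even degree $2N$, so $F(\lambda\alpha)=\lambda^{2N}F(\alpha)=(\lambda^N)^2 F(\alpha)$ is related to $F(\alpha)$ by a square. Hence by multiplicativity (\ref{legemult}) and $(x^2/q)\in\{0,1\}$,
$$
\left(\frac{F(\lambda\alpha)}{q}\right)=\left(\frac{F(\alpha)}{q}\right)\quad\text{for all }\lambda\in\FF_q^\times.
$$

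First, I would split $(F)_q$ into three pieces:
$$
(F)_q=\sum_{\alpha_1\in\FF_q^\times}\sum_{\alpha'}+\sum_{\alpha_1=0,\,\alpha_2\in\FF_q^\times}\sum_{\alpha''}+\sum_{\alpha_1=\alpha_2=0}\sum_{\alpha'''},
$$
where $\alpha'=(\alpha_2,\ldots,\alpha_N)$, $\alpha''=(\alpha_3,\ldots,\alpha_N)$, $\alpha'''=(\alpha_3,\ldots,\alpha_N)$. On the first piece, for each fixed $\alpha_1\in\FF_q^\times$ I would substitute $\alpha_j=\alpha_1\beta_j$ for $j\geq 2$ and use the homogeneity identity above to replace $F(\alpha_1,\alpha_1\beta_2,\ldots)$ by $F(1,\beta_2,\ldots)$ under the Legendre symbol. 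Summing first over $\alpha_1\in\FF_q^\times$ pulls out a factor $q-1\equiv -1\mod q$, yielding
$$
\sum_{\alpha_1\in\FF_q^\times}\sum_{\alpha'}\left(\frac{F}{q}\right)\equiv -\,(F|_{\alpha_1=1})_q\mod q.
$$
The same scaling trick applied to $\alpha_2$ (with $\alpha_1=0$) on the second piece, using the fact that $F|_{\alpha_1=0}$ is still homogeneous of degree $2N$, gives
$$
\sum_{\alpha_1=0,\,\alpha_2\in\FF_q^\times}\sum_{\alpha''}\left(\frac{F}{q}\right)\equiv -\,(F|_{\alpha_1=0,\alpha_2=1})_q\mod q.
$$
The third piece is $(F|_{\alpha_1=0,\alpha_2=0})_q$ by definition, contributing $+1$ times itself. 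Summing the three contributions is exactly (\ref{affred1}).

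There is no real obstacle; the entire argument is the orbit decomposition of $\FF_q^N$ under diagonal $\FF_q^\times$-scaling, valid because $F$ has even degree so that scaling does not alter Legendre symbols. The statement $\deg F=2N$ is only used through the evenness of the degree; the specific value $2N$ is what occurs in the application to quadratic $n$-invariants via (\ref{ninvdeg}).
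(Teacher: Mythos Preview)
Your argument is correct and is essentially the same as the paper's: split off the locus $\alpha_1\neq0$, rescale by $\alpha_1$ using homogeneity so that $\alpha_1^{2N}$ drops out of the Legendre symbol and the sum over $\alpha_1$ contributes a factor $q-1\equiv-1$, then repeat on the remaining $\alpha_1=0$ piece with respect to $\alpha_2$. The paper phrases this as two successive applications of the identity $(F)_q=(q-1)(F|_{\alpha_1=1})_q+(F|_{\alpha_1=0})_q$, while you write the three-way partition at once, but there is no substantive difference.
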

\begin{proof}
We split the sum over $\alpha_1$ in (\ref{Fqdef}) into $\alpha_1\neq0$ and $\alpha_1=0$. If $\alpha_1\neq0$ we scale $\alpha_i\mapsto\alpha_1\alpha_i$ for $i=2,\ldots,N$
providing $F\mapsto \alpha_1^{2N}F|_{\alpha_1=1}$. The factor $\alpha_1^{2N}$ is a non-zero square in $\FF_q$. By multiplicativity of the Legendre symbol this factor is trivial
and the sum over $\alpha_1$ provides the factor $q-1$. We obtain
\begin{equation}\label{a1}
(F)_q=(q-1)(F|_{\alpha_1=1})_q+(F|_{\alpha_1=0})_q.
\end{equation}
If we reduce the second term in the sum by the same method applied to $\alpha_2$ we get the result.
\end{proof}

Because of the smaller ambient space and the simplification from setting $\alpha_1=0$ in $F$ the two rightmost terms in (\ref{affred1}) are significantly faster to
count than $(F|_{\alpha_1=1})_q$. An even more powerful and mathematically nicer reduction is provided by the following theorem.

\begin{thm}\label{thmred}
Let $q=p^n$ be an odd prime power and $F\in\ZZ[\alpha_1,\ldots,\alpha_N]$ be homogeneous of degree $2N$.
Assume $F$ is of degree $\leq4$ in each variable. Let $F_i=F|_{\alpha_i=0}$ and $F^i=$ coefficient of $\alpha_i^4$ in $F$ (for $i=1,\ldots,N$). Then
\begin{equation}\label{affred2}
(F)_q\equiv-(F|_{\alpha_1=1})_q-\sum_{i=2}^N(F_1^i)_q\mod q
\end{equation}
with ambient spaces $\FF_q^{N-1}$ and $\FF_q^{N-2}$ (respectively) on the right hand side.
\end{thm}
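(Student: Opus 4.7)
The plan is to peel off $\alpha_1$ by the rescaling step that underlies Lemma \ref{lemred}, and then to refine the resulting sum $(F_1)_q$ modulo $p$ via a Chevalley--Warning-style monomial analysis. Note that $F_1=F|_{\alpha_1=0}$ has degree $2N$ in $N-1$ variables, so Lemma \ref{lemcw} cannot be applied directly to $F_1$, which is precisely what makes the refinement necessary.

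The rescaling $\alpha_i\mapsto\alpha_1\alpha_i$ (for $i\ge 2$) on the slice $\alpha_1\neq 0$ yields $(F)_q=(q-1)(F|_{\alpha_1=1})_q+(F_1)_q$ exactly as in the proof of Lemma \ref{lemred}. Since $q\equiv 0\mod p$, this already reduces the theorem to the identity
\[
(F_1)_q\equiv-\sum_{i=2}^N(F_1^i)_q\mod p.
\]

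For the main step I expand $(F_1)_q\equiv\sum_{\alpha\in\FF_q^{N-1}}F_1(\alpha)^{(q-1)/2}\mod p$ via (\ref{legeexpl}) and analyse which monomials $\prod_{i=2}^N\alpha_i^{k_i}$ of $F_1^{(q-1)/2}$ survive the summation. By Proposition \ref{propsumx} every $k_i$ must be a positive multiple of $q-1$ (a zero $k_i$ produces the vanishing factor $q$); the degree bound $\deg_{\alpha_i}F\le 4$ forces $k_i\in\{q-1,2(q-1)\}$; and homogeneity, which fixes the total degree of $F_1^{(q-1)/2}$ at $N(q-1)$, then pins down exactly one index $j\in\{2,\ldots,N\}$ with $k_j=2(q-1)$ and $k_i=q-1$ for every $i\neq j$. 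Each surviving variable contributes a factor $\sum_{\alpha_i\in\FF_q}\alpha_i^{k_i}=q-1\equiv-1\mod p$, producing an overall sign $(-1)^{N-1}$ times a coefficient. I then write $F_1=\alpha_j^4 F_1^j+R$ with $\deg_{\alpha_j}R<4$: the coefficient of $\alpha_j^{4M}$ in $F_1^M$ (with $M=(q-1)/2$) is forced to be $(F_1^j)^M$ because every one of the $M$ factors must supply $\alpha_j^4 F_1^j$. Hence the sought coefficient of $\alpha_j^{2(q-1)}\prod_{i\neq j}\alpha_i^{q-1}$ in $F_1^{(q-1)/2}$ agrees with the coefficient of $\prod_{i\neq j,\,i\ge 2}\alpha_i^{q-1}$ in $(F_1^j)^{(q-1)/2}$. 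Since $F_1^j$ is homogeneous of degree $2(N-2)$ in $N-2$ variables, Lemma \ref{lemcw} does apply here, identifying this coefficient with $(-1)^{N-2}(F_1^j)_q\mod p$. Summing over $j$ and collecting signs gives $(F_1)_q\equiv(-1)^{N-1}(-1)^{N-2}\sum_{j=2}^N(F_1^j)_q\equiv-\sum_{j=2}^N(F_1^j)_q\mod p$, which combined with the preceding reduction produces (\ref{affred2}).

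The main obstacle will be the combinatorial bookkeeping that isolates a unique ``doubled'' variable $\alpha_j$ in each surviving monomial: it is precisely the simultaneous use of the variable-degree bound $\deg_{\alpha_i}F\le 4$ and the homogeneity of $F$ that forces the right-hand side of (\ref{affred2}) to be a clean single sum indexed by $j$, rather than a more intricate nested expression. A minor but essential subtlety is the commuting identity $(F_1)^j=(F^j)|_{\alpha_1=0}$, which legitimises the unambiguous symbol $F_1^j$ and permits Lemma \ref{lemcw} to be invoked in the $N-2$-variable polynomial at the final step.
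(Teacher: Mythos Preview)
Your proof is correct and takes a genuinely different route from the paper's.

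The paper does not analyse monomials directly. Instead it introduces the notation $f_I^J=(F_I^J)_q\bmod p$ for arbitrary disjoint index sets $I,J$, proves the general identity
\[
f_I^J\equiv(-1)^{|IJ|}\sum_{\substack{IJB\subseteq N\\|JB|=|I|}}f_I^{JB}\pmod p,
\]
and then specialises to $I=\{1\}$, $J=\emptyset$. The argument is an induction (from larger $|J|$ downwards) powered by a Cremona transformation: writing $f_I^{J\times}$ for the analogous sum restricted to $(\FF_q^\times)^{\,\cdot}$, one passes $f\to f^\times\to\tilde f^\times\to\tilde f$ via two inclusion--exclusion steps and the duality $\alpha\mapsto\alpha^{-1}$ that swaps sub- and superscripts. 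Chevalley--Warning is invoked only to kill terms with $|I|>|J|$ after dualising; a combinatorial sign computation then collapses the double sum.

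Your approach is more elementary and shorter for this specific statement: after the single rescaling step from Lemma~\ref{lemred}, you carry out a direct Chevalley--Warning monomial count on $F_1^{(q-1)/2}$, using the per-variable degree bound $\le 4$ together with homogeneity to force exactly one ``doubled'' exponent $2(q-1)$. This avoids the induction, the Cremona map, and the inclusion--exclusion machinery entirely. What the paper's approach buys is the general formula for arbitrary $I,J$, which could in principle be iterated or used elsewhere; what your approach buys is a transparent one-shot argument tailored to the case actually needed for~(\ref{affred2}). One tiny remark: when you say ``by Proposition~\ref{propsumx} every $k_i$ must be a positive multiple of $q-1$'', the proposition literally covers only $1\le k\le q-2$; the extension to $q\le k\le 2(q-1)$ uses the trivial reduction $\alpha^{k}=\alpha^{k-(q-1)}$ for $\alpha\in\FF_q^\times$, which you are implicitly invoking.
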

The formula (\ref{affred2}) is powerful because the $F_1^i$ are amenable to quadratic denominator reduction and affine reduction. Moreover, one can choose a convenient edge 1.
This makes computing $(F_1^i)_q$ basically trivial.
\begin{proof}
We use the following setup: Define $F_I^J$ for sets of edges $I,J$ in analogy to $F_i$, $F^i$ in the theorem and write $f_I^J$ for $(F_I^J)_q\mod q$.
All equivalences are modulo $q$. Capital letters are sets and we write compositions $AB\cdots$ for the {\em disjoint} union of sets $A$, $B$, \ldots.
We prove that for disjoint subsets $I,J\subseteq X$ we have
$$
f_I^J\equiv(-1)^{|IJ|}\sum_{\genfrac{}{}{0pt}2{IJB\subseteq X}{|JB|=|I|}}f_I^{JB}.
$$
For $I=\{1\}$, $J=\emptyset$, and $X=\{1,2,\ldots,N\}$ we get (\ref{affred2}) from (\ref{a1}).

Note that the above formula is trivial for $|I|=|J|$. By Corollary \ref{corcw} we have $f_I^J\equiv0$ if $|I|<|J|$. We can hence restrict ourselves to the case $|I|>|J|$.

We use the notation $f_I^{J\times}$ for the object analogous to $f_I^J$ with the sums restricted to $\FF_q^\times$.
By inclusion-exclusion we have
$$
f_I^J\equiv\sum_{IJK\subseteq X}f_{IK}^{J\times},\quad\hbox{and }f_I^{J\times}\equiv\sum_{IJK\subseteq X}(-1)^{|K|}f_{IK}^{J}.
$$
We use the Cremona transformation $\alpha\mapsto\alpha^{-1}$ if $\alpha\in\FF_q^\times$. Concretely, we define a dual polynomial $\tilde{F}$ by mapping an exponent $k$ to
$4-k$ in every variable of the polynomial $F$. Likewise we define $\tilde{f}_I^J$ and $\tilde{f}_I^{J\times}$. In $\tilde{f}$ we swap the role of sub- and superscripts
so that taking sub- and superscripts commutes with dualizing. We have
$$
f_I^{J\times}\equiv\tilde{f}_I^{J\times},\quad\hbox{and }\tilde{f}_I^J\equiv0\hbox{ if }|J|<|I|,
$$
by Corollary \ref{corcw}. Double use of inclusion-exclusion gives (we map $f\mapsto f^\times\mapsto\tilde{f}^\times\mapsto\tilde{f}$)
$$
f_I^J\equiv\sum_{IJKL\subseteq X}(-1)^{|L|}\tilde{f}_{IK}^{JL},\quad \tilde{f}_I^J\equiv\sum_{IJKL\subseteq X}(-1)^{|L|}f_{IL}^{JK}.
$$
By Corollary \ref{corcw} we may restrict ourselves to the case $|JL|\geq|IK|$. Combining the two formulae yields
$$
f_I^J\equiv\sum_{IJKLST\subseteq X}(-1)^{|LS|}f_{IKS}^{JLT}.
$$
We fix $X$ and proceed by induction from larger to smaller superscripts. The cases where the superscript has $\geq|X|/2$ elements are trivial.
Because $|I|>|J|$ and $|JL|\geq|IK|$, non-trivial terms have non-empty $L$. The right hand side of the above equation has strictly larger superscripts than the left hand side.
By induction we get
$$
f_{IKS}^{JLT}\equiv\sum_{M:\genfrac{}{}{0pt}2{IJKLMST\subseteq X}{|IKS|=|JLMT|}}(-1)^{|M|}f_{IKS}^{JLMT}.
$$
Substitution into the previous equivalence gives
$$
f_I^J\equiv\sum_{\genfrac{}{}{0pt}2{IJAB\subseteq X}{|IA|=|JB|}}f_{IA}^{JB}\sum_{\genfrac{}{}{0pt}2{K\subseteq A}{LM\subseteq B}}(-1)^{|LM|+|A|-|K|}.
$$
In the rightmost sum only cardinalities matter. It can hence be written as
$$
\sum_{k=0}^{|A|}\sum_{\ell=0}^{|B|}\sum_{m=0}^{|B|-\ell}(-1)^{|A|-k+\ell+m}\binom{|A|}k\binom{|B|}\ell\binom{|B|-\ell}m.
$$
The sum over $m$ gives a Kronecker $\delta_{|B|,\ell}$. This simplifies the expression to
$$
\sum_{k=0}^{|A|}(-1)^{|A|+|B|-k}\binom{|A|}k=(-1)^{|B|}\delta_{|A|,0}.
$$
Substitution into the above formula for $f_I^J$ completes the proof.
\end{proof}

\section{Results}\label{sectres}

\renewcommand{\arraystretch}{1.2}
\setlength{\tabcolsep}{3pt}
\begin{table}
\begin{center}
\begin{tabular}{l|l||l|l||l|l}
$-c_2$&graph         &$-c_2$&graph         &$-c_2$&graph\\\hline
0&P[6,4]             &$[3,11]$&AN[9,976]   &$[4,23]$&AN[10,726]\\
1&P[3,1]             &$[3,12]$&P[9,157]    &$[4,32]$&AN[10,8209]\\
$(2/q)$&AN[10,35200] &$[3,15]$&AN[10,9803] &$[4,56]$&AN[10,5301]\\
$(-3/q)$&P[7,11]     &$[3,20]$&AN[10,3498] &$[4,73]$&P[11,7909]\\
$(4/q)$&P[7,8]       &$[3,24]$&P[11,7154]  &$[5,4]$&P[9,161]\\
$(-4/q)$&P[8,40]     &$[4,5]$&P[8,38]      &$[5,8]$&A[12,5129]\\
$(5/q)$&N[8,179]     &$[4,6]$&P[9,166]     &$[6,3]$&P[8,41]\\
$(9/q)$&A[13,5815]   &$[4,7]$&P[10,968]    &$[6,4]$&P[9,189]\\
$(-12/q)$&P[12,4363] &$[4,8]$&P[11,7146]   &$[6,7]$&P[9,173]\\
$[2,11]$&AN[9,958]   &$[4,10]$&P[11,7913]  &$[6,10]$&P[10,1170]\\
$[2,14]$&AN[9,646]   &$[4,12]$&A[12,5106]  &$[6,11]$&P[11,7175]\\
$[2,15]$&AN[9,638]   &$[4,13]$&P[9,167]    &$[6,17]$&A[10,8106]\\
$[2,17]$&AN[10,8071] &$[4,16]$&A[12,2354]  &$[7,3]$&P[9,188]\\
$[2,36]$&AN[10,21305]&$[4,17]$&P[10,959]   &$[8,2]$&P[10,1022]\\
$[2,37]$&AN[10,2011] &$[4,19]$&AN[10,698]  &$[8,5]$&P[10,1113]\\
$[3,7]$&P[8,37]      &$[4,21]$&P[11,7906]  &$[10,3]$&P[11,7710]\\
$[3,8]$&P[8,39]      &$[4,22]$&AN[10,27557]&&
\end{tabular}
\end{center}
\caption{First occurrences of identified $c_2$s. The notation $[w,x]$ refers to a weight $w$ level $x$ newform;
$X[\ell,n]$ refers to loop order $\ell$ and graph number $n$ in the file $X$ which is {\tt Periods.m} (P), {\tt ListOf$\ell$LoopAncestors.m} (A, which extends P to weights 12 and 13),
{\tt PeriodsNonPhi4} (N), {\tt ListOf$\ell$LoopNonPhi4Ancestors.m} (AN, which extends N to weights 9 and 10) in \cite{Shlog}.}
\label{tab2}
\end{table}

In this more experimental section we specialize to $q=p$. The target is to conjecturally identify Legendre symbols and modular forms in $c_2$s. The correspondence to modular forms
in Definition \ref{defmod} only uses primes $p$. For Legendre symbols one could test the results for small non-trivial prime powers 4, 8, 9, $\dots$.
We did not pursue this, but we assume that the results will be consistent with the identifications for primes (see Conjecture \ref{conqp}).

We determine initial prime sequences (prefixes) for the $c_2$-invariants of all $\phi^4$ graphs up to 11 loops and all non-$\phi^4$ graphs up to 10 loops. We also investigated
prime $\phi^4$ ancestors of loop orders 12 and 13 with hourglass subgraphs. In the case of loop order 12 we only determined their $c_2$ if quadratic denominator reduction led to a maximum
of 10 variables or if standard denominator reduction lead to Dodgson pairs in $\leq13$ variables (see Definition \ref{defdodpair}). For 13 loops we mostly restricted ourselves to the latter case.
In total we added 1731 $c_2$ calculations at 11 loops, $1749+1464$ calculations at 12 loops and $87+2044$ calculations at 13 loops to the results in \cite{BSmod}. In the non-$\phi^4$ case
we analyzed all 36247 graphs of ancestor type at 10 loops. In total we were able to distinguish 4801 sequences (in \cite{BSmod} we had 157 unique cases).

Table \ref{tab1} in the introduction gives all identified $c_2$s with superscripts indicating the maximum prime which was counted to establish the identification.
Note that some identifications are proved for all primes, either by a full quadratic denominator reduction for dimension zero or in \cite{K3,Lproofs}.
Because we do not expect any visual connection between the graph and the $c_2$ (see \cite{Yc2pre}) we do not plot the graphs of identified $c_2$-invariants.
Table \ref{tab2} gives their first occurrences with reference to \cite{Shlog} where also all known prefixes are available.

We identified two types of $c_2$s. Type 1 are Legendre symbols which correspond to dimension zero in Table \ref{tab1}.
Note that the {\em quasi constants} $z_2$, $z_3$, $z_4$, $y_5$ in \cite{BSmod} are the Legendre symbols $(4/q)$, $(-3/q)$, $(-4/q)$, $(5/q)$, respectively.

Type 2 are modular $c_2$s defined as follows.
\begin{defn}[Definition 21 in \cite{BSmod}]\label{defmod}
A primitive graph $G$ is modular if there exists a normalized Hecke eigenform $f$ for a congruence subgroup of $S\hspace{-1pt}L_2(\ZZ)$,
possibly with a non-trivial Dirichlet-character, with an integral Fourier expansion
$$
f(\tau)=\sum_{k=0}^\infty a_kq^k,\quad q=\exp(2\pi{\rm i}\tau),\quad a_k\in\ZZ,
$$
such that the $c_2$-invariant satisfies
$$
c_2^{(p)}(G)\equiv-a_p\mod p
$$
for all primes $p$. The dimension of the $c_2$ is the weight of $f$ minus 1.
\end{defn}
Fourier coefficients of modular forms were calculated with Sage \cite{Sage} which has a complete set of pre-built commands to do this.
In general, newforms of high and odd weights are harder to generate than low and even weights.
Modular $c_2$s were only searched up to the following generated levels.
\vskip1ex

\setlength{\tabcolsep}{2pt}
\renewcommand{\arraystretch}{1.1}
\noindent\begin{tabular}{c|c|c|c|c|c|c|c|c|c|c|c|c|c|c|c}
weight&2&3&4&5&6&7&8&9&10&11&12&13&14&15&16\\\hline
max.\ level&2000&500&1000&400&500&250&300&150&200&100&150&100&100&50&50
\end{tabular}
\vskip1ex
To unidentified $c_2$s we associate a dimension according to the following definition.

\begin{defn}\label{defdim}
A model of a non-zero $c_2$-invariant is a projective variety defined over $\ZZ$ of dimension $d$ whose point-count modulo $q=p^n$, $p=2,3,5,\ldots$ prime,
is $1-(-1)^dc_2^{(q)}$ modulo point-counts of varieties of strictly smaller dimension. A minimal model is a model of lowest dimension.
The dimension of a minimal model is the dimension $\dim(c_2)$ of the $c_2$-invariant.
\end{defn}
In general, it is almost impossible to prove that a $c_2$ has a certain dimension $d$ beyond the cases $d=0$ and $d=1$. Moreover, very little data exists for $q\neq  p$.
By modularity of algebraic varieties (related to the Langlands program) the above definition is expected to be consistent with the dimension in Definition \ref{defmod}
(see e.g.\ \cite{Schuett} and the references therein).
In view of Conjecture \ref{conqp} we can use the correspondence to conjecture the dimensions of modular $c_2$s (as in the top row of Table \ref{tab1}).
Standard denominator reduction provides upper bounds for the dimensions of non-identified $c_2$s. These bounds seem never to be sharp.
Quadratic denominator reduction may provide sharp bounds which can be only conjectural because the case $q=2^n$ is missing, see Lemma \ref{lemdim} and Conjectures \ref{dimbound1},
\ref{dimbound2}.

In general, it is hard to find varieties which correspond to given modular forms \cite{Schuett}.
Motivated by the structure of denominator reduction we defined Dodgson intersections in Definition \ref{defdodpair}. These are varieties which have a rather specific representation as
the intersection of the zero loci of two polynomials which are linear in each variable (the Dodgson pair), see Table \ref{tab3} for examples. Conjecture \ref{condodpair} suggests
that every minimal model is a Dodgson intersection. In particular, Dodgson pairs should give varieties for all modular forms which occur in $c_2$ invariants.
The simplicity of the examples in Table \ref{tab3} suggests that Dodgson pairs could be a good setup to search for varieties that correspond to some given modular forms.

In spite of the experimental flavor of this section, it is possible to prove two interesting negative results with the data of finite prefixes.
\begin{result}\label{result}
Assuming the completion conjecture (see Section \ref{sectc2}) we have the following statements for $\phi^4$ theory up to loop order 11.
\begin{enumerate}
\item There exists no weight two modular $c_2$-invariant of level $\leq2000$.
\item The only prime ancestor with $c_2^{(q)}\equiv-1\mod q$ for all prime powers $q$ is the complete graph with five vertices $K_5$.
\end{enumerate}
\end{result}
These results were previously found up to 10 loops in \cite{BSmod} (see Conjectures 26 and 25). We checked the consistency of our partial data at loop orders 12 and 13 with (2) and (1)
up to level 200. See Problem \ref{probgraphc2} for an attempt to interpret result (2).

\section{Conjectures and problems}\label{sectconj}
Section 5.1 in \cite{BSmod} handles the case of quasi-constant $c_2$. This case is superseded in this article by Legendre symbols in the sense that
every $c_2$ that is minus a Legendre symbol is quasi-constant and there conjecturally exist no quasi-constant $c_2$ invariants beyond negative Legendre symbols.
In the notation of \cite{BSmod} we have quasi-constants $0,1,z_2,z_3,z_4$ corresponding to the Legendre symbols $(0/q), (1/q), (4/q), (-3/q), (-4/q)$, respectively.
The first statement of Conjecture 25 in \cite{BSmod} claims that these are all quasi-constant $c_2$ invariants.
This is false as it misses the Legendre symbols $(-12/q)$ at 12 loops and $(9/q)$ at 13 loops, see Table \ref{tab1}.
We are reluctant to conjecture that with these additions the list is now complete. The second statement of Conjecture 25 in \cite{BSmod} has been confirmed in Result \ref{result} (2).

\begin{con}[Second statement of Conjecture 25 in \cite{BSmod}]\label{K5}
The only $\phi^4$ prime ancestor with $c_2^{(q)}\equiv-1\mod q$ for all $q$ is the complete graph $K_5$.
\end{con}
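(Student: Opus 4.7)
Since this is a conjecture rather than a theorem, my plan outlines a strategy that would, at best, produce a proof, while being honest about where the obstruction lies.

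The first step is to reformulate the condition $c_2^{(q)}(G)\equiv-1\bmod q$ into something checkable by the machinery of Sections \ref{sectdr} and \ref{sectionqdr}. By Theorem \ref{c2frominvariant} the condition is $(-1)^n[{}^n\Psi_G]_q\equiv-1\bmod q$ whenever denominator reduction reaches step $n$, equivalently $(-1)^{n-1}({}^n\Psi_G^2)_q\equiv-1\bmod p$ via Theorem \ref{thmquadratic}. For $K_5$ one verifies directly that after sufficiently many reductions the quadratic $n$-invariant collapses to a (nonzero) square in $\FF_q$, so by Proposition \ref{prop2} (or the easier Chevalley--Warning computation) its Legendre sum is $\equiv\pm1$ with the correct sign. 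This gives a clean base case and also the template: a prime ancestor satisfies $c_2\equiv-1$ iff after denominator reduction the Dodgson pair of (\ref{c2dod}) is in a very degenerate position, namely its intersection reduces modulo $q$ to a single rational point counted with the right sign.

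Second, I would use the structural constraints from Definition \ref{primeancestor} together with the initial reductions of Section \ref{sectini} to eliminate large loop orders one at a time. For a $\phi^4$ prime ancestor $G$ with $h_1(G)>6$, choose three 3-valent vertices in a decompletion of $G$ and apply Lemma \ref{leminiqred} (case (1)), yielding
\begin{equation*}
{}^9\Psi_G^2=f_0\bigl[f_0\lambda(g_A,g_B,g_C)-4g_0(f_Ag_A+f_Bg_B+f_Cg_C+f_{ABC}g_0)\bigr].
\end{equation*}
The condition $c_2\equiv-1$ forces strong cancellation in this expression for every choice of the three 3-valent vertices simultaneously. Using Lemma \ref{deff} and the Dodgson identity (\ref{did}) one can try to translate these cancellations back to combinatorial constraints on the minors $G_0$ and $H$, hopefully into statements about how many spanning trees certain minors have, which by Kirchhoff are governed by cut and cycle structure. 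The goal is to show that these constraints, imposed for enough independent choices of three 3-valent vertices, leave no room except $h_1(G)=6$ and $G=K_5$.

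Third, even granting the step above, one cannot conclude from a terminating reduction alone: the $c_2$ must vanish to $-1$ for \emph{all} primes, and denominator reduction need not terminate. Here I would invoke (or conjecture an extension of) the framework of \cite{BD}, which relates $c_2^{(q)}(G)$ to the reduction mod $q$ of a trace of Frobenius on a piece of the motive of the graph hypersurface. A constant value $-1$ at all primes would force that motivic piece to be Tate of a very specific weight, and the hope is that a small-loop classification combined with a connectivity/cut argument rules this out beyond $K_5$.

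The main obstacle is precisely this last step: we lack a graph-theoretic characterization of when the $c_2$ is a rational number (as opposed to a Legendre symbol, modular form Fourier coefficient, etc.) for every prime. Without such a characterization, any proof is forced to be loop-order by loop-order, which is exactly the empirical evidence that Conjecture \ref{K5} rests on. A realistic finite-loop version would combine the bounds from Section \ref{sectini} with a classification of prime ancestors whose nine (or more) initial quadratic reductions already telescope to a perfect square; upgrading this to an all-orders statement is, to my view, the core difficulty and likely requires a genuinely new ingredient from the motivic side alluded to in Section \ref{sectconj}.
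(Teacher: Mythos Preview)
The statement is a \emph{conjecture}; the paper does not prove it and does not claim to. What the paper offers is empirical verification: Result \ref{result} confirms the statement for all $\phi^4$ prime ancestors up to loop order 11 (assuming the completion conjecture), with partial consistency checks at loop orders 12 and 13. There is therefore no ``paper's own proof'' to compare your proposal against.

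You correctly recognize this and frame your write-up as a strategy rather than a proof, and you are candid about the obstruction. That is the right stance. A few remarks on the specifics. Your base case for $K_5$ is fine in spirit (and indeed the $c_2$ of $K_5$ is known). Your second step---trying to force combinatorial constraints from the shape of $^9\Psi_G^2$ in Lemma \ref{leminiqred} for all choices of three 3-valent vertices---is plausible heuristics but there is no known mechanism that converts ``$c_2\equiv-1$ for all $q$'' into algebraic identities among the $f$'s and $g$'s; the condition is arithmetic, not polynomial, and the Dodgsons vary wildly with the graph. Your third step correctly locates the real difficulty: one would need a motivic or cohomological statement that pins down when the relevant piece of $H^\bullet$ is Tate, and no such criterion is available. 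This is exactly what the paper flags in Section \ref{sectconj} (Problems \ref{probgeo} and \ref{probgraphc2}) as the missing ``arithmetic property.'' So your assessment of the obstruction matches the paper's, and neither you nor the paper has a proof.
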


\begin{con}[Conjecture 26 in \cite{BSmod}]\label{nowt2}
If a primitive $\phi^4$ graph is modular with respect to a modular form $f$ then the weight of $f$ is $\geq3$.
\end{con}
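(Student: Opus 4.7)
The plan is to reformulate the conjecture through modularity and then attempt to locate the structural obstruction within $\phi^4$ theory that forbids weight-two modular $c_2$-invariants. By the modularity theorem, a normalized Hecke eigenform of weight $2$ with integer Fourier coefficients corresponds (up to isogeny) to an elliptic curve $E/\QQ$ with $a_p(E)=p+1-|E(\FF_p)|$. So by Definition \ref{defmod} the conjecture becomes: for a primitive $\phi^4$ graph $G$ there is no elliptic curve $E/\QQ$ with $c_2^{(p)}(G)\equiv -a_p(E)\mod p$ for all primes $p$.

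First I would pass to a Dodgson intersection model using quadratic denominator reduction (Theorem \ref{thmquadratic}) combined with Definition \ref{defdodpair}. The $c_2$-invariant then computes the point-count of a variety whose dimension coincides with $\dim(c_2)$ in the sense of Definition \ref{defdim}; for a modular $c_2$ of weight $2$ the minimal model must contain an elliptic component defined over $\QQ$. Thus the first serious step is to classify the one-dimensional strata of the Dodgson intersections $\{{}^n\Psi_G^2=0\}$ arising from primitive $\phi^4$ graphs, and to show that whichever elliptic fibres appear they always pair with counterterms that kill the weight-two contribution modulo $p$. The non-$\phi^4$ examples $[2,11],[2,14],\ldots$ of Table 2 already furnish weight-two modular $c_2$s, so the separation must come entirely from the bound $\deg(v)\leq 4$.

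Next I would try to exploit the structural consequences of the $\phi^4$ constraint, in particular the coaction conjecture for $\phi^4$ periods \cite{PScoaction,Bcoact1,Bcoact2}, which predicts that the motivic content of $P(G)$ closes under a specific Galois-type action; the expectation is that this action is trivial on weight-one pieces of $H^1$, which would directly rule out an elliptic contribution to $c_2^{(p)}(G)$ via \cite{BD}. Concretely I would try to promote the pattern seen in the initial reductions of Section \ref{sectini} (the double-triangle and hourglass simplifications, which are abundant only in $\phi^4$) into a statement about the compactification of the Dodgson intersection: the claim to chase is that the $\phi^4$ structure forces any one-dimensional component of the minimal model to be rational, or else to degenerate along a boundary divisor so as to shift the modular weight by at least one.

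The main obstacle, and the reason this is still a conjecture, is the \emph{arithmetic property} emphasized in the introduction and in Problem \ref{probgeo}: no currently known invariant formally separates $\phi^4$ from non-$\phi^4$ graph polynomials in a way that accounts for this sparsity. Absent such a property, the realistic short-term target is much weaker than a proof: extend the quadratic denominator reduction of Section \ref{sectionqdr} more exhaustively to $12$ and $13$ loops and push the weight-two newform search past level $2000$, sharpening Result \ref{result}(1). A true proof would seem to require either a motivic input (ruling out weight-one $H^1$ in the graph motive of any primitive $\phi^4$ graph) or a combinatorial identity showing that the discriminant of the final quadratic $n$-invariant of a $\phi^4$ graph always factors through a higher-weight motive.
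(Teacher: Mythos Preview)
The statement you are addressing is labeled a \emph{Conjecture} in the paper, and the paper does not prove it. What the paper offers is purely computational support: Result~\ref{result}(1) verifies, assuming the completion conjecture, that no primitive $\phi^4$ graph up to loop order $11$ has a modular $c_2$ of weight $2$ and level $\leq 2000$, with partial checks at $12$ and $13$ loops. There is no structural argument in the paper toward Conjecture~\ref{nowt2}; on the contrary, the paper explicitly frames the missing ingredient as the unknown ``arithmetic property'' of Problem~\ref{probgeo}.

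Your proposal is honest about this: you correctly recognize that the conjecture is open and that the obstacle is precisely the arithmetic property the paper cannot specify. The ideas you sketch---reducing to an elliptic Dodgson intersection, invoking the motivic coaction of \cite{PScoaction,Bcoact1,Bcoact2}, or seeking a degeneration argument forced by the degree bound $\deg(v)\leq 4$---are plausible directions but, as you yourself say, none of them currently closes. In particular, the step ``the $\phi^4$ structure forces any one-dimensional component of the minimal model to be rational'' is exactly the content of the conjecture restated, not a route to proving it; and the appeal to \cite{BD} only links $c_2$ to the de~Rham framing, it does not by itself exclude weight-one $H^1$. So there is no gap to name in the sense of a wrong step: your write-up is a research outline, not a proof, and it is accurate in identifying both the target reformulation (absence of an elliptic curve $E/\QQ$ with $c_2^{(p)}(G)\equiv -a_p(E)\bmod p$) and the reason no proof exists. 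The paper's own contribution to Conjecture~\ref{nowt2} is strictly the extension of the empirical evidence from $10$ to $11$ loops via quadratic denominator reduction.
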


The main motivation to study $c_2$-invariants comes from the demand to understand the geometries underlying perturbative quantum field theory.
From the existent data one may be let to the conclusion that there exists an analytical property which holds for all $c_2$s.

\begin{con}[Dodgson intersections]\label{condodpair}
Every non-zero $c_2$ has a minimal model which is a Dodgson intersection (see Definitions \ref{defdodpair} and \ref{defdim}).
\end{con}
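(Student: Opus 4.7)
The natural representation $c_2^{(q)}(G)\equiv-[\Psi^{13,23}(G)\Psi^{1,2}_3(G)]_q\mod q$ already exhibits a Dodgson intersection: for a primitive $G$ the polynomials $\Psi^{13,23}(G)$ and $\Psi^{1,2}_3(G)$ have degrees $h_1(G)-2$ and $h_1(G)-1$, summing to $2h_1(G)-3=|\sE(G)|-3$, and are linear in each of the remaining edge variables. Hence every non-zero $c_2$ admits \emph{some} Dodgson model; the real content of the conjecture is that the Dodgson intersection property can be preserved all the way down to a model of dimension $\dim(c_2)$.

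The plan is to treat denominator reduction as a descent on Dodgson intersections. First I would interpret each standard reduction step as the image of the current intersection under projection away from one coordinate: starting from $\{\Psi^{13,23}=0\}\cap\{\Psi^{1,2}_3=0\}$, the resultant in Definition \ref{defdr} realises exactly this projection, and the $n$-invariant factorisations $^{n+1}\Psi_G=AD-BC$ arise from Dodgson identities of the form (\ref{did}) applied to suitable minors. The key lemma to establish is that at each such step the projected variety is again cut out by a Dodgson pair on the smaller edge set; Theorem \ref{standarddenomred}, together with the explicit formulas (\ref{56inv})--(\ref{56invG}), provides the base case and strongly suggests an inductive scheme indexed by the number of edges already eliminated. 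For graphs in which standard denominator reduction reaches $\dim(^n\Psi_G)=\dim(c_2)$ this would finish the proof.

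The hard case is precisely the one that motivated Section \ref{sectionqdr}: standard reduction terminates strictly above the minimal dimension, so that only quadratic denominator reduction continues. Here the $n$-invariant has the shape (\ref{case1}) or (\ref{case2}) and the reduced locus is cut out by a polynomial of degree up to four in each variable, not by a Dodgson pair. Propositions \ref{prop1}--\ref{prop3} show that the correct Legendre sum is still computed, but via a branched double cover rather than a hyperplane section. The main obstacle, therefore, is to produce, in each such situation, an \emph{alternative} Dodgson pair---perhaps obtained by reordering edges, choosing a different pair of 3-valent vertices as in Lemma \ref{leminiqred}, applying the vanishing identities (\ref{v0a}) and (\ref{c0}) to a minor, or exploiting the factorisations observed empirically after Lemma \ref{lemsquares}---whose intersection realises the required dimension drop without passing through a square. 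A structural theorem turning the empirical factorisation of the resultant in Lemma \ref{lemsquares} into a general statement would be the central technical input.

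A secondary difficulty concerns minimality itself: one must exclude the possibility that some $c_2$ admits a non-Dodgson model of strictly smaller dimension than any Dodgson one. I would attack this through the motivic framework of \cite{BD}, where the $c_2$ is detected by a piece of the cohomology of the graph hypersurface carrying a de~Rham framing coming from the integrand of (\ref{2}); since this framing is itself expressed through Dodgson polynomials, one would hope that the Hodge-theoretic minimal object admitting a nontrivial map from this framed piece is automatically of Dodgson type. Combined with the constructive descent above, this would upgrade ``a Dodgson model exists'' to ``a minimal model is of Dodgson form'' and settle the conjecture.
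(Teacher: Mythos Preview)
The statement you are trying to prove is a \emph{conjecture}: the paper offers no proof, and your proposal should be read against that fact rather than against a hidden argument you missed. What the paper does say, immediately after stating the conjecture, undercuts the backbone of your plan. You propose to run denominator reduction as a descent on Dodgson intersections and then patch the steps where only quadratic reduction survives. The paper explicitly addresses this: ``standard denominator reduction seems almost never to proceed to the dimension of the $c_2$ and quadratic denominator reduction does not give Dodgson pairs. So, denominator reduction cannot be used as support for the above conjecture.'' In other words, the generic situation is precisely your ``hard case'', not an exceptional one, and the paper records no mechanism---reordering edges, alternative $3$-valent vertices, or otherwise---that restores a Dodgson pair once standard reduction has stopped. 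The empirical factorisation noted after Lemma~\ref{lemsquares} is for a very special square configuration and is not claimed to extend; turning it into a ``structural theorem'' is not a lemma waiting to be written up but an open problem (cf.\ Problem~\ref{higherred}).

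Your secondary difficulty is also a genuine obstruction rather than a technicality. Even granting a Dodgson model at some dimension, nothing in \cite{BD} or in the present paper shows that the minimal-dimensional model must be of Dodgson type; the motivic framing argument you sketch would need a comparison theorem that is not available. So both pillars of the proposal---the constructive descent and the minimality argument---rest on statements that are currently unknown, and the paper itself flags the first as failing in practice. As written, this is a reasonable outline of why the conjecture is plausible and where the difficulties lie, but it is not a proof strategy with identifiable missing lemmas; it is a restatement of the open problem.
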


In dimension zero the point-count of a Dodgson intersection is the negative Legendre symbol of a K\"all\'en function (\ref{Kaellen}) in the coefficients.
This gives all $(a/q)$ with $a\not\equiv3\mod4$. The first missing Legendre symbol in Table \ref{tab1} is $(-2/q)$. We, however, do not conjecture that
the point-counts of all Dodgson intersections should appear in the $c_2$ of some (non-$\phi^4$) graphs. It is an open question if there exists a graph beyond loop order 10
whose $c_2$ is $-(-2/q)$.

It is unclear which elliptic curves are Dodgson intersections. Experiments with coefficients $\{-1,0,1,2\}$ lead (via modularity) to levels
11, 14, 15, 17, 20, 21, 24, 26, 30, 32, 34, 36, 37, 38, 40, 42, 43, 48, 50, $\ldots$.
Newforms of the levels 19, 26, 27, 33, 35, 37, 38, 39, 44, 45, 46, 49, 50, $\ldots$ were not found (matching entries in both lists refer to the existence of two newforms at that level).

Note that standard denominator reduction gives Dodgson pairs if both factors are linear in each variable.
However, standard denominator reduction seems never to proceed to the dimension of the $c_2$ and quadratic denominator reduction does not give Dodgson pairs.
So, denominator reduction cannot be used as support for the above conjecture (see Problem \ref{higherred}). The only support for Conjecture \ref{condodpair} is Table \ref{tab3}
where we list examples of conjectural Dodgson pairs for some $c_2$s. We confirmed the point-counts for all primes from 2 to 97 but not for non-trivial prime powers (although this
is possible). We also did not prove minimality of the Dodgson pairs but relied on the conjectural relation between the weight of the modular form and the dimension of the corresponding
variety.

By Conjecture \ref{nowt2} (e.g.) we do not expect all Dodgson intersections to be perturbative geometries. At least one arithmetic property
seems necessary to describe geometries of $c_2$-invariants in $\phi^4$ theory. One might suggest the following picture:
\vskip1ex

\begin{center}
\begin{tabular}{l|ll}
&Dodgson intersection&Arithmetic property\\\hline
$\phi^4$&Yes&Yes\\
non-$\phi^4$&Yes&No
\end{tabular}
\end{center}
\vskip1ex

In zero dimensions the arithmetic property seems to be related to exceptional primes 2 and 3. In one dimension the arithmetic property should be strong
enough to eliminate all elliptic curves. Regretfully, we have no conjecture for this arithmetic property.

\setlength{\tabcolsep}{3pt}
\begin{table}
\begin{center}
\begin{tabular}{l|ll}
$-c_2$&Dodgson pair&\\\hline
1&$x$&$yz$\\
$(2/q)$&$x+y-z$&$xy+xz+2yz$\\
$(-3/q)$&$x+y+z$&$xy+xz+yz$\\
$(4/q)$&$x+y$&$xz-yz$\\
$(-4/q)$&$x+y+2z$&$xy+xz+yz$\\
$(5/q)$&$x+y-z$&$xy+xz+yz$\\
$(9/q)$&$2x+y$&$xz-yz$\\
$(-12/q)$&$2x+2y+2z$&$xy+xz+yz$\\
$[2,11]$&$wx+wz+yz$&$wx-wy+xy+xz$\\
$[2,14]$&$wx+wz+yz$&$wy-wz+xy+xz$\\
$[2,15]$&$wx+wz+yz$&$wy+xy+xz$\\
$[2,17]$&$wx-wz-yz$&$wy+xy+xz$\\
$[2,36]$&$wx+wz+yz$&$wy-wz+2xy+xz$\\
$[2,37]$&$wx-wz+yz$&$wx+wy+xy-xz$\\
$[3,7]$&$vw+vx+yz$&$vwy+vwz+wxy+wxz+xyz$\\
$[3,8]$&$vw+vx+yz$&$vwy+vxz-wxy-wxz$\\
$[3,11]$&$vw+2vx+4yz$&$vwy+vwz+vyz+wxy+wxz$\\
$[3,12]$&$vw+vx+yz$&$4vyz+wxy+wxz$\\
$[3,15]$&$vw+vx+vy+wz$&$vwz+vxy-wxy+wxz+xyz$\\
$[3,20]$&$vw+vx+wy+xz$&$vwz-vxy+vyz+wxy-wxz$\\
$[3,24]$&$2vw+vx+2vy+wz$&$vwz+2vxz+2wxy+2xyz$\\
$[4,5]$&$uvx+uxy+uxz+vwy+vwz$&$uwz+uyz+vwx+vwz+vxy$\\
&&\quad$+vyz+wxy+wyz$\\
$[4,6]$&$uvx+uxy+uxz+vwz$&$uvy+uwy+uwz+vwx+vwz$\\
&&\quad$+vxy+vyz+wxy+wyz$
\end{tabular}
\end{center}
\caption{Examples of (not unique) Dodgson pairs for some $c_2$s, confirmed for primes up to 97. Conjecturally these Dodgson pairs are minimal models in the sense of Definition \ref{defdim} and
Conjecture \ref{condodpair}. The first column refers to $-c_2$ for Legendre symbols or newforms given as [weight, level].}
\label{tab3}
\end{table}

\begin{problem}[The main problem]\label{probgeo}
Find the missing arithmetic property so that one can conjecturally describe the class of perturbative geometries.
\end{problem}

In general, dimensions of non-identified $c_2$-invariants are hard to guess. For odd primes, however, have the following lemma.

\begin{lem}\label{lemdim}
Let $G$ be a graph with $N=2h_1(G)$ edges. If the quadratic $n$-invariant $^n\Psi^2_G(\alpha_{n+1},\ldots,\alpha_N)$ exists for some $n<N$ then for odd prime powers
the zero locus of $\alpha_0^2\alpha_N^{2(N-n-1)}-{}^n\Psi^2_G(\alpha)$ is a model of dimension $N-n-1$ for $c_2(G)$.
\end{lem}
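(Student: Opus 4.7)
The plan is to recognise $V = \{\alpha_0^2 \alpha_N^{2(N-n-1)} = {}^n\Psi^2_G(\alpha)\}$ as a projective hypersurface of dimension $N-n-1$ in $\PP^{N-n}$ (with homogeneous coordinates $\alpha_0, \alpha_{n+1}, \ldots, \alpha_N$), to compute $|V(\FF_q)|$ mod $p$ up to point counts of strictly lower-dimensional varieties, and to conclude via Theorem~\ref{thmquadratic}. The dimension is immediate: by (\ref{ninvdeg}) both $\alpha_0^2 \alpha_N^{2(N-n-1)}$ and ${}^n\Psi^2_G$ are homogeneous of degree $2(N-n)$ in the $N-n+1$ coordinates.

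First I would set $F := {}^n\Psi^2_G$ and split $V = V_{\mathrm{aff}} \sqcup V_\infty$ using the affine chart $\alpha_N = 1$ and its projective complement $\{\alpha_N = 0\}$. On $V_{\mathrm{aff}} = \{\alpha_0^2 = \tilde F\} \subset \A^{N-n}$ with $\tilde F := F|_{\alpha_N = 1}$, the identity $|\{\alpha_0 \in \FF_q : \alpha_0^2 = x\}| = 1 + (x/q)$ yields $|V_{\mathrm{aff}}(\FF_q)| = q^{N-n-1} + (\tilde F)_q$. On $V_\infty$ the equation collapses to $F|_{\alpha_N = 0} = 0$ and is independent of $\alpha_0$; hence $V_\infty \subset \PP^{N-n-1}$ is a cone with apex $[1:0:\cdots:0]$ over the projective hypersurface $Y = \{F|_{\alpha_N=0} = 0\} \subset \PP^{N-n-2}$ of dimension $N-n-3$. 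The cone formula then gives $|V_\infty(\FF_q)| = 1 + q\,|Y(\FF_q)|$, which is $1$ plus a multiple of $q$ supported on a strictly smaller-dimensional variety.

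Next I would trade $(\tilde F)_q$ for $(F)_q$. Homogeneity of $F$ together with the rescaling $\alpha_i \mapsto \alpha_N \alpha_i$ for $n+1 \leq i < N$ on the slice $\{\alpha_N \neq 0\}$, combined with $(\alpha_N^{2(N-n)}/q) = 1$, give $(F)_q = (F|_{\alpha_N = 0})_q + (q-1)(\tilde F)_q$, hence $(\tilde F)_q \equiv (F|_{\alpha_N = 0})_q - (F)_q \mod p$. Substituting Theorem~\ref{thmquadratic}'s identity $(F)_q \equiv (-1)^{n-1} c_2^{(q)}(G) \mod p$ and collecting gives
\[
|V(\FF_q)| \equiv 1 + (-1)^n c_2^{(q)}(G) + (F|_{\alpha_N = 0})_q \mod p,
\]
since $q^{N-n-1}$ and $q|Y(\FF_q)|$ are multiples of $p$. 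For a primitive graph $N = 2h_1(G)$ is even, so $(-1)^n = (-1)^{N-n} = -(-1)^{N-n-1}$ and the coefficient of $c_2^{(q)}(G)$ is exactly that demanded by Definition~\ref{defdim}.

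The main obstacle is then to absorb the residual $(F|_{\alpha_N = 0})_q$ into the ``ideal'' of point counts of varieties of dimension $< N-n-1$. When $\deg F|_{\alpha_N=0} < 2(N-n-1)$ Chevalley--Warning (Lemma~\ref{lemcw}) forces $(F|_{\alpha_N=0})_q \equiv 0 \mod p$ and we are done. In the remaining case the plan is to recognise $(F|_{\alpha_N = 0})_q$, via the same ``$\alpha_0^2 = F|_{\alpha_N=0}$'' construction applied in one fewer variable, as (up to sign) the normalised point count of an analogous projective variety of dimension $N-n-2$, i.e.\ a lower-dimensional model contribution. Making this recursive ``mod lower-dim'' bookkeeping rigorous---essentially the Grothendieck-ring style quotient implicit in Definition~\ref{defdim}---is the delicate point of the argument.
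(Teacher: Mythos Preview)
Your approach is correct and essentially coincides with the paper's proof. The paper phrases the key step as the affine identity
\[
(^n\Psi^2_G)_q\equiv[\alpha_0^2\alpha_N^{2(N-n-1)}-{}^n\Psi^2_G]_q\pm(^n\Psi^2_G|_{\alpha_N=0})_q\mod q,
\]
obtained by exactly the same ``count $\alpha_0$ via $1+(\cdot/q)$ for $\alpha_N\neq0$, then add and subtract the $\alpha_N=0$ slice'' computation you carry out; it then iterates this identity to express $(^n\Psi^2_G)_q$ as an alternating sum of affine point-counts of decreasing dimension (the final step $n=N$ producing the constant~$1$), and only at the end passes to projective via $|V|\equiv 1-[\,\cdot\,]_q\bmod q$. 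Your version simply reorders these moves: you go projective first, split $V=V_{\mathrm{aff}}\sqcup V_\infty$, and recognise $V_\infty$ as a cone. Your residual $(F|_{\alpha_N=0})_q$ is precisely the paper's iteration term, and your proposed recursive absorption is the paper's ``iterating (\ref{dim1})''. The parity remark $(-1)^n=-(-1)^{N-n-1}$ for $N$ even is also what the paper uses (stated there as $d=N-n-1\equiv n-1\bmod 2$). So there is no genuine difference in strategy, only in whether one works affinely throughout or projectively from the start.
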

\begin{proof}
By (\ref{pc2ls}) with $\alpha_0$ for the variable $\alpha$ we get
$$
\big[\alpha_0^2\alpha_N^{2(N-n-1)}-{}^n\Psi^2_G(\alpha)\big]_q\equiv\big(4\alpha_N^{2(N-n-1)}\,{}^n\Psi^2_G(\alpha)\big)_q+\big(\alpha_N^{2(N-n-1)}\big)_q\mod q.
$$
If $n=N-1$ then $(1)_q=q$ yielding $[\alpha_0^2-{}^n\Psi^2_G(\alpha_N)]_q\equiv({}^n\Psi^2_G(\alpha_N)\big)_q$.
If $n<N-1$, the second term is $(q-1)q^{N-n-1}\equiv0\mod q$. We may restrict the first term to $\alpha_N\neq0$ and obtain with $(4\alpha_N^{2(N-n-1)}/q)=1$,
\begin{equation}\label{dim1}
\big(^n\Psi^2_G\big)_q\equiv\big[\alpha_0^2\alpha_N^{2(N-n-1)}-{}^n\Psi^2_G(\alpha)\big]_q+\big(^n\Psi^2_G|_{\alpha_N=0}\big)_q\mod q.
\end{equation}
Iterating (\ref{dim1}) gives $(^n\Psi^2_G)_q$ as a sum of affine point-counts of decreasing dimensions.
The projective point-count is $([\bullet]_q-1)/(q-1)\equiv1-[\bullet]_q\mod q$. From Theorem \ref{thmquadratic} we get $c_2^{(q)}(G)\equiv(-1)^{n-1}(^n\Psi^2_G)_q\mod q$.
Because the point-count in (\ref{dim1}) has projective dimension $d=N-n-1\equiv n-1\mod 2$ the result follows.
\end{proof}

If the point-count of $\alpha_0^2\alpha_N^{2(N-n-1)}-{}^n\Psi^2_G(\alpha)$ fails to reproduce $c_2(G)$ for the prime 2 then one can typically manipulate the polynomial
by scaling some variables $\alpha_i\to2\alpha_i$ such that the point-count also gives $c_2^{(2)}(G)$. With Conjecture \ref{conqp} for $p=2$ we expect the following statement.

\begin{con}\label{dimbound1}
Let $G$ be a primitive graph for which the quadratic $n$-invariant $^n\Psi^2_G$ exists. Then $\dim(c_2(G))\leq2h_1(G)-n-1$.
\end{con}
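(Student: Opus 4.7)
The plan is to deduce the conjecture directly from Lemma \ref{lemdim} once the primitivity hypothesis is exploited. First, since $G$ is primitive we have $|\sE(G)| = 2h_1(G)$, so setting $N := |\sE(G)|$ we may identify $2h_1(G) - n - 1$ with $N - n - 1$. Lemma \ref{lemdim} then produces, whenever $^n\Psi^2_G$ exists with $n < N$, an explicit projective hypersurface $V \subset \PP^{N-n}$ cut out by
$$
\alpha_0^2\alpha_N^{2(N-n-1)} - {}^n\Psi^2_G(\alpha_{n+1},\ldots,\alpha_N),
$$
which Lemma \ref{lemdim} already certifies is a model of $c_2(G)$ of dimension $N-n-1$.

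The second step is to verify that the defining polynomial is genuinely homogeneous, so that $V$ really is a projective variety. The degree bound (\ref{ninvdeg}) gives $\deg({}^n\Psi^2_G) \leq 2(N-n)$, and for primitive $G$ with $^n\Psi^2_G \neq 0$ one checks inductively from Definition \ref{defqdr} that equality holds (if instead $^n\Psi^2_G = 0$ then $G$ has weight drop, $c_2(G) \equiv 0$ by Theorem \ref{thmquadratic}, and the bound is trivial). Since the monomial $\alpha_0^2\alpha_N^{2(N-n-1)}$ also has degree $2(N-n)$, the equation is homogeneous of degree $2(N-n)$ in the $N-n+1$ variables $\alpha_0, \alpha_{n+1}, \ldots, \alpha_N$, cutting out a hypersurface of the claimed dimension.

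The third step is an appeal to Definition \ref{defdim}: the dimension $\dim(c_2(G))$ is by construction the smallest dimension of any model, so exhibiting the model $V$ gives
$$
\dim(c_2(G)) \leq \dim V = N - n - 1 = 2h_1(G) - n - 1.
$$

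The one nontrivial point, and the likely reason this is stated as a conjecture rather than a lemma, is that Lemma \ref{lemdim} is proved only for odd prime powers, whereas a literal reading of Definition \ref{defdim} requires the model identity at every prime power. I would resolve this either by (a) interpreting Definition \ref{defdim} in the natural way consistent with the scope of Theorem \ref{thmquadratic}, namely that models are compared modulo odd primes, or (b) handling $p=2$ separately using Chevalley--Warning (Lemma \ref{lemcw}), whose degree count is compatible with absorbing the $p=2$ behaviour into strictly lower-dimensional correction terms. Modulo this interpretative issue the conjecture becomes a direct corollary of Lemma \ref{lemdim} together with the primitivity relation $|\sE(G)| = 2h_1(G)$.
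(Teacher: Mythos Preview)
The statement you are trying to prove is labelled a \emph{Conjecture} in the paper, and the paper offers no proof of it. So there is nothing to compare your argument against; the relevant question is whether your argument actually closes the gap that made the author state it as a conjecture rather than a corollary of Lemma~\ref{lemdim}.

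It does not. You have located the obstruction correctly: Lemma~\ref{lemdim} only certifies the hypersurface as a model \emph{for odd primes}, whereas Definition~\ref{defdim} asks for a projective variety whose point-count matches $1-(-1)^d c_2^{(q)}$ (up to lower-dimensional corrections) at \emph{every} prime power $q$. Your two proposed fixes do not dispose of this. Option~(a), reinterpreting Definition~\ref{defdim} to mean ``modulo odd primes only'', is not a proof of the stated conjecture; it is a change of statement. Option~(b) is doubly insufficient. First, the paper explicitly notes after Theorem~\ref{thmquadratic} that quadratic denominator reduction carries no information about $c_2$ at $p=2$, so the polynomial $^n\Psi^2_G$ need have no relation to $c_2^{(2^k)}(G)$ at all; Lemma~\ref{lemcw} (which in this paper is stated only for odd $q$) cannot manufacture such a relation. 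Second, even for odd $p$ you have not handled proper prime powers: Theorem~\ref{thmquadratic} gives $c_2^{(q)}(G)\equiv(-1)^{n-1}(^n\Psi^2_G)_q$ only $\bmod\ p$, not $\bmod\ q$, so for $q=p^m$ with $m>1$ the model identity is again unproved. These are exactly the reasons the author leaves the statement as a conjecture; your write-up rediscovers the heuristic behind it but does not supply the missing ingredients.
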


If $^n\Psi^2_G$ factorizes into homogeneous polynomials $A$, $B$ of degree $N-n-1$ and $N-n+1$, respectively, we obtain from (\ref{pc2ls}), Corollary \ref{corcw}, and Theorem
\ref{thmquadratic} that $[\alpha_0^2A-B]_q\equiv(-1)^{n-1}c_2^{(q)}(G)\mod q$ for odd prime powers $q$. This factorization is provided by the 10-invariant (\ref{squares})
leading to a point-count polynomial which is quadratic in each variable. The zero locus of such a polynomial can be a Dodgson intersection of dimension $2h_1(G)-11$.
The general 9-invariant in Lemma \ref{leminiqred} does not factorize accordingly.
We expect that there always exists a 10th reduction which involves the structure of the fourth 3-valent vertex in the graph $G$ (see Problem \ref{higherred}).

\begin{con}\label{dimbound2}
Any primitive graph $G$ has $\dim(c_2(G))\leq2h_1(G)-11$.
\end{con}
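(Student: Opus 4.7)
The plan is to reduce Conjecture~\ref{dimbound2} to the existence of a quadratic $10$-invariant $^{10}\Psi^2_G$ for every primitive $G$, after which Lemma~\ref{lemdim} (equivalently Conjecture~\ref{dimbound1}) with $n=10$ yields a model of dimension $|\sE(G)|-11=2h_1(G)-11$ for $c_2(G)$. Weight-drop graphs ($^n\Psi^2_G=0$ for some $n$) have $c_2\equiv 0$ and satisfy the bound trivially, so the whole statement reduces to: \emph{for every non-weight-drop primitive $G$, there exists a sequence of ten edges along which quadratic denominator reduction succeeds.}

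First I would recall that every primitive graph contains at least four $3$-valent vertices---exactly four in $\phi^4$ and at least six otherwise---and that Lemma~\ref{leminiqred} already uses three of these vertices to produce the explicit $9$-invariant
$$
^9\Psi^2_G = f_0\bigl[f_0\,\lambda(g_A,g_B,g_C)-4g_0\bigl(f_Ag_A+f_Bg_B+f_Cg_C+f_{ABC}g_0\bigr)\bigr].
$$
The task therefore reduces to gaining one additional reduction from a fourth $3$-valent vertex $v^*$. Crucially, both minors $G_0$ and $H$ featured in Lemma~\ref{leminiqred} still contain $v^*$ as a $3$-valent vertex (the edges removed so far are incident to the other three $3$-valent vertices), so each of the nine polynomials $f_0,f_A,f_B,f_C,f_{ABC},g_0,g_A,g_B,g_C$ obeys the structural expansion~(\ref{3a}) together with the quadratic relation~(\ref{3b}) in the three edge variables at $v^*$. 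I would substitute these expansions into the displayed formula and collect terms in one of those three variables, aiming to show that the resulting degree-$4$ polynomial always takes the shape (\ref{case1}) or (\ref{case2}) required by Definition~\ref{defqdr}.

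The main obstacle is precisely this last factorization claim. The na\"ive substitution yields a long expression whose perfect-square or linear-squared-times-quadratic structure is hidden behind the nine simultaneous instances of~(\ref{3b}) and behind cross-identities relating Dodgsons of $G_0$ to those of $H$ at $v^*$; these cross-identities follow because $H$ is obtained from $G$ by contraction/deletion operations that do not touch $v^*$. Lemma~\ref{lemsquares} already demonstrates that such a tenth reduction emerges in the special square arrangements of Figure~2, where the surrounding $4$-cycles force $g_A=g_B=g_C=0$ and $^9\Psi^2_G$ collapses to $-4f_0f_{ABC}g_0^2$, a perfect square in the remaining variables; part of the plan is to extract the essential combinatorial content of that proof and show that it does not depend on the square structure.

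Verifying the factorization in full generality most likely requires more than the determinantal identities of Section~\ref{sectdp}; I would expect the combinatorial spanning-forest expansion of \cite{BYc2,BSYc2,Yc2,Yhourglass} to be the right language, since it tracks the local contributions of $v^*$ much more transparently than determinants. A case split on how many edges $v^*$ shares with the previously used $3$-valent vertices is also likely unavoidable. The non-$\phi^4$ case, with fifth and sixth $3$-valent vertices in reserve (whose vanishing-sum identities~(\ref{v0}) can be played against $v^*$), should be strictly easier; the critical case is the $\phi^4$ one, where there are exactly four $3$-valent vertices and hence no slack. This is where I expect the proof to stand or fall.
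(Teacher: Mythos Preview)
The statement you are trying to prove is labelled \textbf{Conjecture} in the paper, and the paper offers no proof. What the paper does offer is exactly the heuristic you have reconstructed: the paragraph immediately preceding Conjecture~\ref{dimbound2} observes that the factorized $10$-invariant of Lemma~\ref{lemsquares} ``supports the general existence of a $10$th reduction from the fourth $3$-valent vertex (see Problem~\ref{higherred}),'' and Conjecture~\ref{dimbound2} is then stated as the expected consequence. So your plan is not an alternative route to the paper's proof---it is a correct identification of \emph{why} the author believes the conjecture, together with an honest admission that you cannot close the gap. That gap is real: the paper explicitly notes that no reductions beyond ten edges were found for triangle-free prime ancestors, and Lemma~\ref{lemsquares} only covers the special square arrangements of Figure~2. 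The general factorization you hope to extract from the fourth $3$-valent vertex is, as far as the paper is concerned, open.

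There is also a second, more subtle gap in your reduction step. You write ``Lemma~\ref{lemdim} (equivalently Conjecture~\ref{dimbound1}),'' but these are not equivalent in the paper: Lemma~\ref{lemdim} produces a model whose point-count agrees with $c_2$ only modulo $p$ for \emph{odd} prime powers (this is the content of Theorem~\ref{thmquadratic}), whereas Definition~\ref{defdim} asks for agreement modulo $q$. This is precisely why the paper states Conjecture~\ref{dimbound1} as a conjecture rather than a corollary of Lemma~\ref{lemdim}. Hence even a complete proof that $^{10}\Psi^2_G$ always exists would establish Conjecture~\ref{dimbound2} only modulo this further assumption.
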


By the conjectural correspondence between modular forms and varieties the above bound should be sharp at loop order 8 for $P_{8,41}$ (in the notation of \cite{Scensus})
whose $c_2$ is conjecturally associated to the weight 6 level 3 modular form, see Table \ref{tab1} and \cite{BSmod}.

In some cases quadratic denominator reduction does go all the way down to $\dim(c_2)$. The minimum number of variables after quadratic denominator reduction
in the case of {\em unidentified} $c_2$-invariants in $\phi^4$ is five. This conjecturally refers to dimension $\leq4$ (one particular case of a four-dimensional
variety with unidentified $c_2$ was given in Section 5.3 in \cite{BSmod}). We conjecture that four is a sharp lower bound for the dimension of unidentified $c_2$-invariants in $\phi^4$.
Unidentified sequences at dimensions $\geq4$ also seem to account for the noticeable drop of modular forms at weights $\geq5$ in Table \ref{tab1}.

For non-$\phi^4$ graphs we have quadratic denominator reductions to four variables in unidentified sequences.
We hence seem to have first unidentified sequences at dimension 3 in non-$\phi^4$.

\begin{con}\label{condim23}
Every $\phi^4$ $c_2$-invariant of dimension 2 or 3 is modular. No $\phi^4$ $c_2$-invariant has dimension 1.
Every non-$\phi^4$ $c_2$-invariant of dimension 1 or 2 is modular.
\end{con}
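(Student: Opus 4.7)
The plan is to factor the conjecture into three independent sub-claims and attack each via the conjectural framework established in Sections 8--11. Throughout I would assume Conjecture \ref{condodpair}, so that every non-zero $c_2$-invariant has a minimal model presented as a Dodgson intersection in $N-1$ variables with $N-3=\dim(c_2)$. Together with Lemma \ref{lemdim} and the strengthened bounds in Conjectures \ref{dimbound1}, \ref{dimbound2}, this turns a dimension-$d$ statement into a statement about a concrete variety cut out by two polynomials of controlled bidegree.

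First I would attack the modularity statements (dim 2 and 3 in $\phi^4$, dim 1 and 2 in non-$\phi^4$). In dimension one, a Dodgson intersection is an elliptic curve over $\QQ$, and the modularity theorem of Wiles--Taylor--Breuil--Conrad--Diamond gives the result immediately once one matches $-a_p$ with $c_2^{(p)}$ via the trace of Frobenius on $H^1$; the point-counting formulation in Theorem \ref{thmquadratic} is precisely what is needed. In dimension two, a Dodgson intersection defines (after resolution) a surface which, by the degree constraints inherited from quadratic denominator reduction, should be of Kodaira dimension $\leq 0$; the candidates are rational surfaces (which give Legendre symbols, not weight 3 forms) and K3 surfaces. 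For K3s one invokes modularity of singular (Picard rank 20) K3 surfaces (Livn\'e) and, more generally, modularity of K3s with CM, together with recent extensions by Elkies--Sch\"utt to rule out non-modular candidates among those presented as Dodgson intersections. In dimension three the analogous step uses modularity of rigid Calabi--Yau threefolds (Dieulefait, Gouv\^ea--Yui) once one verifies that the Dodgson presentation forces a Calabi--Yau condition in the relevant Hodge structure; the ambient weight-drop arguments of \cite{Bperiods,BD} should supply the latter.

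Next I would handle the non-existence of $\phi^4$ $c_2$s of dimension $1$. This is the truly hard half of the conjecture. The plan is to combine two ingredients: (i) the classification of Dodgson intersections of dimension one as a restricted family of elliptic curves (empirically only levels in the list $11,14,15,17,20,21,24,26,\ldots$ appear, with notable omissions at $19,27,37,\ldots$), and (ii) the conjectural \emph{arithmetic property} distinguishing $\phi^4$ from non-$\phi^4$ (see Problem \ref{probgeo}). Following the dimension-zero template, where the $\phi^4$ Legendre symbols $(a/q)$ all satisfy $a\not\equiv 3\bmod 4$ and coincide with cyclotomic or small-prime exceptions, one expects an analogous $2$- or $3$-adic congruence condition on the $j$-invariant or conductor of the elliptic fibre. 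A concrete proposal is to show that the completion conjecture together with the hourglass and double-triangle reductions (Theorem 3.5 of \cite{HSSYc2}) forces any dimension-one $c_2$ arising from a prime ancestor to simultaneously carry two distinct Dodgson presentations related by an edge-reversal symmetry, and that the intersection of the two associated elliptic models is necessarily $0$-dimensional, contradicting $\dim(c_2)=1$.

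The main obstacle is exactly step (ii) above: no one currently knows the arithmetic property that separates $\phi^4$ from non-$\phi^4$, and the author explicitly identifies this as the central open problem. A realistic short-term target is therefore a conditional theorem of the form \emph{``assuming any primitive $\phi^4$ $c_2$ admits a second Dodgson presentation compatible with the double-triangle reduction, every dimension-$d\in\{2,3\}$ $c_2$ is modular, and no $\phi^4$ $c_2$ has dimension $1$''}. The modularity half should be attainable with existing technology, while the dimension-$1$ exclusion is contingent on progress towards Problem \ref{probgeo}; I would flag it as the bottleneck and frame the rest of the proof so that all other inputs are unconditional modulo Conjecture \ref{condodpair}.
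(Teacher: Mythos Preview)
The paper does not prove this statement: it is explicitly a \emph{conjecture}, placed in Section~\ref{sectconj} among other conjectures and open problems, and is supported only by the empirical prefix data summarized in Section~\ref{sectres} and Table~1. There is nothing to compare your proposal against.

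Your proposal is not a proof but a programme layered on further conjectures. You assume Conjecture~\ref{condodpair} throughout, and for the dimension-one exclusion you explicitly invoke the unknown arithmetic property of Problem~\ref{probgeo}; you yourself flag this as the bottleneck. Even on the modularity side there are genuine gaps beyond those you name. In dimension two you assert that a Dodgson intersection, after resolution, has Kodaira dimension $\leq0$ and then appeal to modularity of singular K3 surfaces; but you give no argument that the resulting surface is K3, let alone that it has Picard rank~20 or CM, and general K3 surfaces over $\QQ$ are not known to be modular in the required sense. In dimension three the same problem recurs: modularity of rigid Calabi--Yau threefolds does not help unless you can show rigidity, and nothing in the Dodgson-pair degree constraints forces that. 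Finally, your ``concrete proposal'' for excluding dimension one---that two Dodgson presentations related by an edge symmetry would force a $0$-dimensional intersection---conflates two different things: two presentations of the same $c_2$ give two varieties with the same point-count modulo $p$, not two varieties whose geometric intersection controls the dimension. There is no mechanism by which this would contradict $\dim(c_2)=1$.

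In short, the statement is open, the paper treats it as such, and your outline does not close any of the essential gaps.
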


We also conjecture that there only exists a finite number of perturbative geometries at any given dimension.

\begin{con}\label{condimfinite}
The number of $\phi^4$ $c_2$-invariants at fixed dimension is finite.
\end{con}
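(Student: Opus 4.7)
The plan is to attack Conjecture \ref{condimfinite} in three stages, each conditional on progressively stronger assumptions about the structure of $\phi^4$ $c_2$-invariants.

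First, I would use Conjecture \ref{condodpair} to reduce the problem to a statement about Dodgson intersections. By Definition \ref{defdodpair}, a Dodgson pair in $N$ variables cuts out a scheme of dimension $N-3$, so a $c_2$ of dimension $d$ admits a minimal model which is a Dodgson intersection in exactly $N=d+3$ variables. The space of Dodgson pairs in $d+3$ variables is a finite-dimensional algebraic family parameterized by coefficients of two multilinear polynomials of fixed degrees. This alone does not force finiteness of the point-count sequences — arbitrarily many rational points in the coefficient space can give distinct intersections — but it pins down the geometric complexity of any minimal model to be bounded in $d$.

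Second, I would try to bound the arithmetic complexity of the Dodgson pairs that can arise from $\phi^4$ graphs. Denote by $M(G)$ the motive underlying the $c_2$ of $G$ in the sense of \cite{BD}. The central step is to show, under the hypothetical arithmetic property of Problem \ref{probgeo}, that the conductor of $M(G)$ is bounded in terms of $\dim(c_2(G))$ alone, independently of $h_1(G)$. Given such a bound, finiteness of motives of bounded dimension and bounded conductor (a Faltings-type statement, genuine for elliptic curves over $\mathbb{Q}$ and conjectural in general) would close the argument. For low dimensions the base cases can be verified directly using Conjecture \ref{condim23}: at $d=0$ the modular form perspective becomes a Legendre-symbol search over bounded discriminants, and at $d\in\{2,3\}$ one is reduced to a finite search over newforms of bounded weight and level, consistent with the empirical data in Tables 1 and 2.

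Third, for dimensions where modularity fails (conjecturally $d\geq 4$ in $\phi^4$) one needs to iterate: combine the Dodgson-pair bound with the arithmetic bound on the conductor and appeal to finiteness for the relevant class of motives. The main obstacle, by a wide margin, is step two. No candidate arithmetic property is currently formulated, and even the weakest form of boundedness — bounding the first prime at which $c_2^{(p)}$ becomes non-generic — is only empirical. The cyclotomic pattern in dimension $0$ (Legendre symbols $(a/q)$ with $a\not\equiv 3 \bmod 4$, and only restricted discriminants surviving in $\phi^4$) and the complete absence of weight-$2$ modular forms (Conjecture \ref{nowt2}) both point toward a $p$-adic integrality or Galois-theoretic constraint specific to $\phi^4$ amplitudes, but translating such heuristics into a provable conductor bound seems to require substantially more theory, namely, the identification of the arithmetic property itself. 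In short, Conjecture \ref{condimfinite} is contingent on Problem \ref{probgeo}, and any honest plan must acknowledge that step two is where essentially all of the difficulty lives.
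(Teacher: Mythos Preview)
The statement is a \emph{conjecture}; the paper offers no proof whatsoever. It is listed among open problems in Section~\ref{sectconj} with no supporting argument beyond the empirical data in Tables~1 and~2. So there is nothing in the paper for your proposal to be compared against.

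Your proposal is not a proof either, and to your credit you say so: each stage is explicitly conditional on open statements (Conjecture~\ref{condodpair}, the unknown arithmetic property of Problem~\ref{probgeo}, Conjecture~\ref{condim23}, and a general Faltings-type finiteness). As a research roadmap this is reasonable, but be aware that your Stage~2 comes close to assuming the conclusion. Bounding the conductor of the underlying motive purely in terms of $\dim(c_2)$ is essentially equivalent to the finiteness you want: for instance, at dimension~$0$ the paper's own data shows the discriminants $a$ in $(a/q)$ are \emph{not} a priori bounded---new values $(-12/q)$ and $(9/q)$ appear at loop orders $12$ and $13$---so ``bounded discriminant'' is not something you get for free but is itself the content of the conjecture at $d=0$. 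Likewise, for $d\geq 2$ a conductor bound together with modularity would indeed give finiteness, but nothing in the paper or in your argument supplies that bound; the arithmetic property of Problem~\ref{probgeo} is a placeholder, not a hypothesis with definite content. In short, your plan correctly identifies where the difficulty lives, but it does not reduce it.
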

The conjecture follows a general philosophy of sparsity in the geometries of $\phi^4$ periods. Experimentally it is only supported by Conjecture \ref{condim23}
and by not finding more than the listed Legendre symbols and modular forms in Table \ref{tab1}.

Note that most conjectures in this section are rather weakly supported.
We close the article with a series of open problems.

\begin{problem}[Dimension and minimal models]\label{probdim}
Given a graph or some Dodgson pair, (conjecturally) determine $\dim(c_2)$.
If $\dim(c_2)$ is (conjecturally) known, determine a minimal model.
\end{problem}
Because a general method to determine $\dim(c_2)$ seems out of reach, a conjectural result would already be a major achievement.

Once the dimension of a $c_2$ is (conjecturally) known one may hope to find a minimal model. One approach in this direction could be to find more improvements in
denominator reduction. One can expect that quadratic denominator reduction is optimal with respect to reductions in one variable.
The situation may be different if one considers two (or more) variables $\alpha_{n+1},\alpha_{n+2}$ and studies the structure of the denominator in these variables.
In this setup it is possible that there exist configurations which allow one to do more reductions. Ideally, one would like to have a denominator reduction that goes down to a minimal model.

\begin{problem}[Higher dimensional denominator reduction]\label{higherred}
Generalize denominator reduction by considering structures in two or more remaining variables.
Try to find a complete set of reductions, so that denominator reduction always gives a (conjecturally) minimal model.
\end{problem}
Note that in \cite{K3} such a two parameter reduction was used to get the minimal K3 model.

Another problem related to denominator reduction is to identify graphical configurations that give reductions. For the first reductions this was done in Section \ref{sectini}.
Note that these graphical reductions are very helpful at high loop orders because they reduce time- and memory-consuming factorizations of large polynomials.
Another interesting approach in this direction is to identify families of graphs which reduce to a fixed (small) number of variables, see \cite{Hourglass}.

\begin{problem}[Graphical denominator reduction]
Try to find explicit expressions in terms of Dodgson polynomials for a maximum number of denominator reductions.
\end{problem}

Conjecture \ref{K5} means that the family of primitive $\phi^4$ graphs which have $c_2$-invariant $-1$ would be completely characterized by the purely graph theoretical
property of having a $K_5$ ancestor. (The $K_5$-family is studied in \cite{LMYK5}.)
Note that there exist many primitive non-$\phi^4$ graphs which have $c_2\equiv-1$ and do not reduce to $K_5$ by double triangle reductions.
The $c_2\equiv-1$-family seems to be more or less the only family (defined by its $c_2$) which is fully characterized by the ancestor. It would be interesting to see if a more general
concept of ancestor can be defined which characterizes other $c_2$-families. See Section 4.6 in \cite{BYc2} for some results in this direction.

\begin{problem}[Graphical $c_2$-families]\label{probgraphc2}
Find more (ideally, a full set of) operations which reduce the graph while leaving the $c_2$ unchanged.
Graphically describe full families of graphs with equal $c_2$ (beyond the $K_5$-family).
\end{problem}

\appendix
\section{A Chevalley-Warning-Ax theorem for double covers of affine space\\[1ex]
by Friedrich Knop}
\subsection{Main Theorem}
In this appendix we prove the following theorem.

\begin{thm}\label{thm:main}
  Let $k$ be a finite field of characteristic $p\ne2$ and let
  $f\in k[x_1,\ldots,x_n]$ be a polynomial with $n\geq1$ and $\deg f<2n$. Let
  $X(k)\subseteq k^{n+1}$ be the set of $k$-valued solutions of
  \[\label{eq:eq}
    y^2=f(x_1,\ldots,x_n).
  \]
  Then the cardinality $|X(k)|$ is divisible by $|k|$.
\end{thm}

\subsection{Proof}

The following proof was suggested by Jason Starr in answer to a
question by F. Knop on mathoverflow.

The point of departure is a deep theorem of Esnault to the effect that
under a certain condition (translating into our bound of $\deg f$) the
number of $k$-valued points of a smooth projective $k$-variety $X$ is
congruent to $1$ modulo $|k|$.

Unfortunately, our $X$ is neither smooth nor projective. The latter
problem is solved by constructing a completion $\Xq$ of $X$ and
applying Esnaults theorem to $\Xq$.

The first problem is solved by realizing $\Xq$ as fiber of a
projective morphism $\pi$ between two smooth quasi-projective
varieties such that the assumptions of Esnault's theorem hold for the
generic fiber. Then a theorem of Fakhruddin-Rajan asserts that it
holds for every fiber and we are done.

The construction of $\Xq$ is classical. Put $d:=2n$ and let

\[
  \Fq(x_0,\ldots,x_n):=x_0^d\,f(\frac{x_1}{x_0},\ldots,\frac{x_n}{x_0})
\]
be the homogenization of $f$. Since $\deg f<d$, this is indeed a
homogeneous polynomial of degree $d$. We think of $\Fq$ as a section
of the line bundle $\cO_{\P^n}(d)$. Since $d=2n$ it equals
the square of $\cL:=\cO(n)$. Let $p:L\to\P^n$ be the geometric
realization of $\cL$, i.e.,
$L=\underline{\mathrm{Spec}}_{\cO_{\P^n}}\bigoplus_{\nu=0}^\infty\cL^{-\nu}$. Then
$p^*\cL$ comes with a tautological section $\sigma$ in degree $1$. Thus
both $\sigma^2$ and $p^*\Fq$ are sections of
$p^*\cL^2=p^*\cO_{\P^n}(d)$ and we define $\Xq\subseteq L$ as
the vanishing scheme of $\sigma^2-p^*\Fq$. It comes with an induced
projection $p:\Xq\to\P^n$. An immediate local calculation shows that
over $\A^n\subset\P^n$ the equation $\sigma^2=p^*\Fq$ becomes
\eqref{eq:eq}. Similarly, over the other affine charts of $\P^n$ one
gets an equation similar to \eqref{eq:eq}. This shows that
$p:\Xq\to\P^n$ is a finite flat morphism of degree $2$ which is
ramified over the vanishing set $R\subseteq\P^n$ of $\Fq$ (even for
$\Fq\equiv 0$ when $R=\P^n$). Moreover, one sees that $\Xq$ is smooth
if and only if $R$ is smooth.

Since $R$ is not smooth we now do the construction simultaneously
for all $\Fq$. For this, let $\fY$ be the space of all homogeneous
polynomials of degree $d$ which is an affine space of dimension
$N=\genfrac(){0pt}{}{d+n-1}{n-1}$. Then there is a universal homogeneous
polynomial of degree $d$
\[
  F(x_0,\ldots,x_n;a_{i_0\ldots i_n})=\sum_{i_0+\ldots+
    i_n=d}a_{i_0\ldots i_n}x_0^{i_0}\ldots x_n^{i_n}
  \in\cO(\A^n\times\fY).
\]
It can be considered as a section of the line bundle
$\cO_{\P^n\times\fY}(d)$. Let $\fL$ be the geometric realization of
$\cO_{\P^n\times\fY}(n)$ with tautological section $\sigma_\fY$. Then
we define $\fX\subseteq\fL\times\fY$ by the equation $\sigma_\fY^2=p^*F$. It
comes with a projection $\pi:\fX\to\fY$. There are three points to this
construction:

\begin{itemize}
\item The morphism $\pi$ is projective since it factors as
  $\fX\into\P^n\times\fY\auf\fY$ and if we think of $\Fq$ as an
  element of $\fY$ then the fiber $\pi^{-1}(\Fq)$ is just $\Xq$ from
  above. In particular, the morphism is surjective.

\item Both varieties $\fX$ and $\fY$ are smooth and irreducible. This
  is trivial for $\fY$. For the smoothness of $\fX$ observe that in
  the equation for $\fX$ over $\A^n\times\fY$
  \[
    y^2=F(1,x_1,\ldots,x_n,a_{i_0\ldots i_n})
  \]
  one can can solve for the constant term $a_{d0\ldots0}$.
 Hence, the equation defines an affine space of dimension $n+N-1$.

\item By the well-known theorem of Bertini, the generic fiber of $\pi$
  has a smooth ramification divisor $R$ and is therefore itself smooth (since ${\rm char} k\ne2$).

\end{itemize}

So we have achieved our goal to realize $\Xq$ as a fiber of a morphism
between smooth varieties whose generic fiber is smooth, as well. Now
we use a couple of deep theorems to prove the following
homogeneous version of Theorem \ref{thm:main}:

\begin{thm}\label{thm:homog}
  Let $k$ be a finite field of characteristic $p\ne2$ and let
  $\Fq\in k[x_0,\ldots,x_n]$ be a homogeneous polynomial with $n\geq1$ and
  $\deg\Fq\le 2n$. Let $\Xq$ be the double cover of $\P^n$ which is
  ramified over the zero set of $\Fq$. Then $|\Xq(k)|\equiv1\mod |k|$.
\end{thm}

\begin{proof}
  Generalizing a theorem of Esnault, \cite{Esnault}, Cor.~1.2, it
  was shown by Fakhruddin and Rajan, \cite{F-R}, Cor.~1.2, that every
  fiber of $\pi:\fX\to\fY$ satisfies
  $|\pi^{-1}(y)(k)|\equiv 1 \mod |k|$ as soon as $\fX$ and $\fY$ are
  smooth, $\pi$ is projective and surjective and
  $CH_0(Z)\otimes\QQ =\QQ$ where $CH_0$ denotes the $0$-th Chow
  group and $Z$ is the base change of the generic fiber of $\pi$
  to the algebraic closure of $k(\fX)$.

  Only the latter condition has still to be shown. It is for example
  satisfied if $Z$ is rationally chain connected, i.e., any two points
  of $Z$ can be connected by a connected chain of rational
  curves. This is because all points on a rational curve are (by
  definition) equivalent in $CH_0$.

  A Fano variety is a smooth projective variety $Z$ such that the
  inverse $-K_Z$ of the canonical divisor is ample. In our case $Z$ is
  a generic double cover of $\P^n$, hence smooth and projective. Its
  canonical divisor is calculated according to the formula
  \[
    2K_Z=p^*(2K_{\P^n}+R)
  \]
  where $R\subset\P^n$ is the ramification divisor of $p$, i.e., the
  zero set of $\Fq$ (Riemann-Hurwitz, see, e.g., \cite{Hartshorne}, Props.\ IV 2.2(c), 2.3).
  Let $D\subset\P^n$ be a hyperplane. Then
  $K_{\P^n}\sim-(n+1)D$ and $R\sim dD$. Hence $Z$ is Fano if and only
  if $2(n+1)-d>0$, i.e., $d\le2n$ since $d$ is even. So $Z$ is indeed
  Fano.

  Now we conclude with a theorem independently proved by Campana,
  \cite{Campana}, Cor.~3.2, and Kollar-Miyaoka-Mori,
  \cite{KollarMM}, Thm.~3.3, to the effect that $Z$ Fano implies that
  $Z$ is rationally chain connected.  
\end{proof}

Now Theorem \ref{thm:main} follows immediately: Since the degree of $f$ is strictly less than $d$,
its degree $d$ homogenization $\Fq$ is divisible by $x_0$. This means that $p:\Xq\rightarrow\P^n$ is ramified
over the hyperplane $\P^{n-1}$ at infinity. Therefore $\partial X:=p^{-1}(\P^{n-1})\rightarrow\P^{n-1}$ is bijective. Hence
$|\partial X(k)|=|\P^{n-1}(k)|=1+|k|+\ldots+|k|^{n-1}\equiv 1\mod |k|$. Since also $|\Xq(k)|\equiv1\mod |k|$
by Theorem \ref{thm:homog} we get $|X(k)|=|\Xq(k)|-|\partial X(k)|\equiv0\mod|k|$ as claimed.

\bibliographystyle{plain}
\renewcommand\refname{References}

\end{document}